\pgfplotsset{compat=1.14}
\definecolor{red}{rgb}{0.7,0.15,0.15}
\definecolor{green}{rgb}{0,0.5,0}
\definecolor{blue}{rgb}{0,0,0.7}
\makeatletter \@addtoreset{equation}{section}
\newtheorem{theorem}{Theorem}[section]
\newtheorem{assumption}[theorem]{Assumption}
\newtheorem{corollary}[theorem]{Corollary}
\newtheorem{lemma}[theorem]{Lemma}
\newtheorem{definition}[theorem]{Definition}
\newtheorem{remark}[theorem]{Remark}
\def \F{\mathbb{F}}
\def \N{\mathbb{N}}
\def \P{\mathbb{P}}
\def \R{\mathbb{R}}
\def \Z{\mathbb{Z}}
\def\Ac{{\cal A}}
\def\Dc{{\cal D}}
\def\Fc{{\cal F}}
\def\Oc{{\cal O}}
\def\Qc{{\cal Q}}
\def\Rc{{\cal R}}
\def\Zc{{\cal Z}}
\title{Optimal make take fees in a multi market maker environment\footnote{This work benefits from the financial support of the Chaires Analytics and Models for Regulation, Financial Risk and Finance and Sustainable Development. Bastien Baldacci and Mathieu Rosenbaum gratefully acknowledge the financial support of the ERC Grant 679836 Staqamof. Dylan Possama\"i gratefully acknowledges the support of the ANR project PACMAN ANR-16-CE05-0027. The authors would like to thank Thibaut Mastrolia, Nizar Touzi, Steve Shreve and Ben Weber for pointing out some inaccuracies in an earlier version of the paper.}}
\author{Bastien {\sc Baldacci}\footnote{\'Ecole Polytechnique, CMAP, Route de Saclay, 91128, Palaiseau Cedex, France,  bastien.baldacci@polytechnique.edu.} \and 
Dylan {\sc Possama\"i} \footnote{ETH Z\"urich, Department of Mathematics, R\"amistrasse 101, 8092 Z\"urich, Switzerland, dylan.possamai@math.ethz.ch.} \and Mathieu {\sc Rosenbaum}\footnote{\'Ecole Polytechnique, CMAP, Route de Saclay, 91128, Palaiseau Cedex, France, mathieu.rosenbaum@polytechnique.edu}}
\begin{document}

\maketitle
\begin{abstract}
Following the recent literature on make take fees policies, we consider an exchange wishing to set a suitable contract with several market makers in order to improve trading quality on its platform. To do so, we use a principal-agent approach, where the agents (the market makers) optimise their quotes in a Nash equilibrium fashion, providing best response to the contract proposed by the principal (the exchange). This contract aims at attracting liquidity on the platform. This is because the wealth of the exchange depends on the arrival of market orders, which is driven by the spread of market makers. We compute the optimal contract in quasi explicit form and also derive the optimal spread policies for the market makers. Several new phenomena appears in this multi market maker setting. In particular we show that it is not necessarily optimal to have a large number of market makers in the presence of a contracting scheme.

\medskip
\noindent{\bf Key words:} Make take fees, market making, high-frequency trading, contract theory, financial regulation, principal-agent problem, multi agents problem, stochastic control.
\end{abstract}

\section{Introduction}\label{Section Introduction}

Optimal market making has been a topic of interest in mathematical finance since the seminal work \cite{avellaneda2008high}. Market makers are liquidity providers, who post limit orders on the bid and ask sides of the order book of an underlying asset, available on an exchange. They buy and sell simultaneously, earning the spread between their quotes and the mid-price, and have to dynamically manage their inventory, thus skew their quotes depending on their position. First simple market making problems are addressed in \cite{avellaneda2008high,ho1981optimal}, where the authors use a stochastic control approach. Later in \cite{gueant2013dealing}, an explicit solution is provided by imposing an inventory threshold for the market maker. A vast literature has emerged from these articles, and various extensions have been studied, see for example \cite{cartea2016closed,gueant2016financial}. All these models deal with the case of a market making activity with no maker taker fees policy from the exchange. 

\medskip
Due to the fragmentation of financial markets, exchanges  (Nasdaq, Euronext,\dots) are in competition and therefore need to find innovative ways to attract liquidity on their platforms. One of these ways is the use of a maker taker fees system: the exchange typically associates a fee rebate to executed limit orders, while charging a transaction cost for market orders. This enables it to subsidise liquidity provision and tax liquidity consumption. The problem of a relevant make take fees policy is therefore key for the quality of market liquidity and for the revenue of the corresponding exchange platform. In \cite{el2018optimal}, an optimal make take fees policy is derived, depending on the transaction flow generated by the market maker. 

\medskip
However, either with or without the intervention of an exchange, no optimal market making framework addresses the issue of several market makers competing with each other. Such consideration is of crucial importance for several reasons. The case of a single market maker means that he has no competitor, hence only needs to manage his inventory risk. However, for many financial assets, market making activity is provided by several market makers (typically three to ten), see for example \cite{mounjid2019asymptotic}. Therefore, having a multi agents model suits better to the vast majority of markets. A single agent model may overestimate the order flow that a market maker will process during the trading period.
Moreover, several important features of financial markets are linked to competition between market makers, such as spread formation and order book shape, see \cite{dayri2015large,glosten1985bid,huang2019glosten,madhavan1997security}. From the viewpoint of the exchange, considering several market makers can be relevant since it has in practice access to information related to the identity of the agents involved in each transaction. For those reasons, in the spirit of \cite{el2018optimal}, we extend the framework of make take fees problem to the case of an exchange (or of a regulator) wishing to attract liquidity on a market, with several market makers trading on a single underlying asset. 

\medskip
From a modelling point of view, our paper follows the same inspiration as the literature mentioned above. Here, we aim at studying the contracting problem of an exchange and several market makers who trade on a single underlying asset. We place ourselves in a principal-agent framework similar to \cite{holmstrom1987aggregation}. The principal wants to build a contract for the agents, enabling him to define an optimal make take fees policy, in order to maximise his revenue. Market takers send bid and ask orders, of constant volume equal to one, whereas market makers control their quotes on the asset (no notion of volume of limit orders is considered here). We base our contract on the transaction flow, as well as the asset price. Note that, as in \cite{el2018optimal}, the spread of the agents is not a contractible variable. It is indeed preferable to only consider variables involved in actual transactions. Furthermore, the very definition of individual spreads is ambiguous, as in practice market makers operates in a limit order book where they post several orders with different volumes. The market orders are executed by the market maker with the best quotes. However, other market participants receive a compensation depending on the distance between their quotes and the best bid or ask. By doing so, we aim at modelling queue position affecting the order book, see \cite{glosten1985bid,huang2019glosten,moallemi2016model}. As we assume constant volume, this represents the fact that an order may be splitted between agents depending on their position in the order book. Furthermore, this compensation reflects the importance for market makers to get tight enough quotes for commercial reasons, see Section \ref{Section market maker's problem} for details. Consequently, the PnL process of each market maker depend on the quotes of the others. Finally, in the single market maker framework of \cite{el2018optimal}, intensity of arrival orders are a decreasing function of the agent's spread. In our case, the intensity increases with total liquidity on the market. To do so we represent the whole liquidity of the order book using a weighted sum of the spreads depending on their distance with respect to an efficient price. 

\medskip
As in \cite{el2018optimal}, our problem is addressed by solving a Stackelberg game between the exchange and the market makers:

\begin{enumerate}[label=$(\roman*)$]
\item Following the approach in \cite{elie2016contracting}, each market maker computes his best--reaction function given spreads from other agents, which provides a Nash equilibrium.

\item Given this equilibrium, the exchange computes his optimal response (the optimal contract) by solving the appropriate Hamilton--Jacobi--Bellman equation.

\item This response is re--injected in the agents' optimal quotes, which give the optimal answer of both parties.
\end{enumerate}

\medskip
One can retrieve optimal quotes in a semi explicit form through a partial differential equation. Moreover, the optimal contract\footnote{As would be done in most cases in practice, we impose that every market maker receives the same contract. This means that it can not depend on individual risk aversion parameters.} is expressed as a sum of stochastic integrals with respect to market order and efficient price processes. We emphasize that such contracting scheme can be readily implemented and is easily interpretable, see Section \ref{subsection shape compensation}.

\medskip
An important finding is that an increase of the number of market makers does not necessarily decrease the average spread. It means that there is, for a given set of market parameters, an "optimal" number of market makers in the sense of PnL maximisation for the platform. It therefore provides, for an exchange, a framework to decide how many market participants they wish to attract in order to increase their profit. We also provide a simple formula to choose the "taker cost", in the same spirit as in \cite{el2018optimal}. This is discussed through Section \ref{Section Discussion}. We emphasize the fact that, to our knowledge, we provide the first paper using a framework \`a la Avellaneda and Sto\"ikov dealing with the multi market maker problem. One of our main contribution from this paper lies in the possibility to analyse the impact of adding market makers in a market on quantities of interest such as trading cost, total order flow, and PnL of the platform. We also see that adding a market participant with a higher risk-aversion parameter will decrease the average spread and conversely. Moreover, decreasing the taker cost when the number of market makers increase leads to a higher PnL for the platform up to a certain point (see Section \ref{Section Discussion}). 

\medskip
We organize the paper as follows: in Section \ref{Section The Model}, we introduce some preliminaries on stochastic calculus for the main objects of the model. Then, we present the way we model the multi market maker case, and the key differences with \cite{el2018optimal}. We also describe the market makers and the exchange's optimisation problem. In Section \ref{Section market maker's problem}, we give the best reaction functions of each market makers for a given contract and define the form of admissible contracts. In Section \ref{Section principal's problem}, we solve explicitly the exchange's problem and provide the form of the incentives given to each market maker. Finally, in Section \ref{Section comparaison}, we discuss the impact of the presence of several market makers on market liquidity and PnL of the platform. 

\medskip
{\bf Notations:} \noindent
Let $\N^\star$ be the set of all positive integers. For any $(\ell,c)\in \mathbb{N}^{\star} \times \mathbb{N}^{\star}$, $\mathcal{M}_{\ell,c}(\mathbb{R})$ will denote the space of $\ell\times c$ matrices with real entries. Elements of the matrix $M\in \mathcal{M}_{\ell,c}$ are denoted $(M^{i,j})_{1\leq i\leq \ell, 1\leq j\leq c}$ and the transpose of $M$ is denoted $M^{\top}$. We identify $\mathcal{M}_{\ell,1}$ with $\mathbb{R}^{\ell}$. When $\ell=c$, we let $\mathcal{M}_{\ell}(\mathbb{R}):=\mathcal{M}_{\ell,\ell}(\mathbb{R})$. For any $x\in \mathcal{M}_{\ell,c}(\mathbb{R})$, and for any $1\leq i\leq \ell$ and $1\leq j\leq c$, $x^{i,:}\in \mathcal{M}_{1,c}(\mathbb{R})$, and $x^{:,j}\in \mathbb{R}^{\ell}$ denote respectively the $i-$th row and the $j-$th column of $M$. Moreover, for any $x \in \mathcal{M}_{\ell,c}(\mathbb{R})$ and any $1\leq j\leq c$, $x^{:,-j}\in \mathcal{M}_{\ell,c-1}(\mathbb{R})$ denote the matrix $x$ without the $j-$th column. For any $x\in \mathcal{M}_{\ell,c}(\mathbb{R})$ and $y\in \mathbb{R}^{\ell}$, we also define for $i=1,\dots,\ell$, $y\otimes_{i}x \in \mathcal{M}_{\ell,c+1}(\mathbb{R})$ as the matrix whose first $i-1$ columns are equal to the first $i-1$ columns of $x$, such that for $j=i+1,\dots,c+1$, the $j-$th column is equal to the $(j-1)-$th column of $x$, and whose $i-$th column is equal to $y$. For any $x\in \R^\ell$, we also define $\underline x:=\min_{i\in\{1,\dots,\ell\}}x^i$. We also define $1_N$ the vector of $\mathbb{R}^N$ with every component equal to one. 

\medskip
Throughout the paper, we fix a constant $\delta_\infty>0$, which is assumed to be sufficiently large (how large exactly will be made specific later on).

\section{The model}\label{Section The Model}

\subsection{Framework}

\subsubsection{Canonical process}\label{Subsubsection Canonical Process}
As in \cite{el2018optimal}, the framework considered throughout this paper is inspired by the seminal works \cite{avellaneda2008high} and \cite{gueant2013dealing} where there is no exchange acting on the market. Let $N\geq 1$ be an integer representing the number of market makers in the market. We also define a positive constant $\delta_\infty$, which is assumed to be large enough, a statement that will be made precise later on. We consider a final horizon time $T>0$, and the space $\Omega=:\Omega_{c} \times \Omega_{d}^{2N}$, with $\Omega_{c}$ the set of continuous functions from $[0,T]$ into $\mathbb{R}$, and $\Omega_{d}$ the set of piecewise constant c\`adl\`ag functions from $[0,T]$ into $\mathbb{N}$. We consider $\Omega$ as a subspace of the Skorokhod space $\Dc([0,T],\R^{2N+1})$ of c\`adl\`ag functions from $[0,T]$ into $\R^{2N+1}$, and let $\Fc$ be the trace Borel $\sigma-$algebra on $\Omega$, where the topology is the one associated to the usual Skorokhod distance on $\Dc([0,T],\R^{2N+1})$. We let $(\mathcal{X}_{t})_{t\in [0,T]}:=\big(W_{t},N_{t}^{1,a},\dots,N_{t}^{N,a},N_{t}^{1,b},\dots,N_{t}^{N,b}\big)_{t\in [0,T]}$ be the canonical process on $\Omega$, that is to say
\begin{align*}
&W_{t}(\omega):=w(t), \; N_{t}^{i,a}(\omega):=n^{i,a}(t),\;  N_{t}^{i,b}(\omega):=n^{i,b}(t),\; \text{for all}\; i\in\{1,\dots,N\},
\end{align*}
with $\omega:=(w,n^{1,a},\dots,n^{N,a},n^{1,b},\dots,n^{N,b})\in \Omega.$ The following aggregated counting processes will also be useful
\begin{align*}
N^{j}:=\sum_{i=1}^{N}N^{i,j},\; j\in\{a,b\}.
\end{align*}
Finally, define for $i\in\{1,\dots,N\},j\in\{a,b\}$ the maps $\lambda^{i,j}:\R^{N}\times \mathbb{Z}^{N} \longrightarrow \R$ and $\lambda:\R^{N}\times \mathbb{Z}^{N} \longrightarrow \R$ by\footnote{See Section \ref{interpretations} for economical and mathematical interpretation of \eqref{Definition Intensity}.}
\begin{align}\label{Definition Intensity}
& \lambda^{i,j}(x,q):=A \exp\bigg(-\frac{k}{\sigma}\bigg(c+\varpi\sum_{i=1}^{N}x^{i}\mathbf{1}_{\{x^{i}=\underline x \}}+\sum_{i=1}^{N}\sum_{\ell=1}^{K}H_{\ell}x^{i}\mathbf{1}_{\{x^{i}\in K_{\ell}\}}\mathbf{1}_{\{x^{i}\neq \underline x\}}\bigg)\bigg)\frac{\mathbf{1}_{\{x^{i}=\underline x,q^{i}>-\phi(j)\overline{q}\}}}{\sum_{\ell=1}^{N}\mathbf{1}_{\{x^{\ell} = \underline x,q^{\ell}>-\phi(j)\overline{q}\}}},\\
\nonumber& \lambda^{j}(x,q):=\sum_{i=1}^{N}\lambda^{i,j}(x,q),
\end{align}
where $A$, $k$, $\sigma$, and $c$ are fixed positive constants, $K$ is a fixed positive integer, $\overline{q}\in \mathbb{N}$, $\varpi$ and $(H_{\ell})_{\ell=1,\dots,K}$ are real--valued and will be fixed later, and $\{K_{\ell}: \ell=1,\dots,K\}$ is an open covering of the interval $[0,\delta_{\infty}]$. Moreover
\begin{align*}
\phi(j):=
    \begin{cases}
        1, \text{ if } j=a,  \\
        -1, \text{ if } j=b.  
    \end{cases}
\end{align*}
\begin{remark}
The maps $\lambda^{i,j}$ are here to define the intensity of the point processes $N^{i,j}$, whereas $\lambda$ plays this role for the aggregated point processes. It is a generalisation of the exponential intensity used in {\rm\cite{el2018optimal,gueant2013dealing}}, in the sense that
\begin{itemize}
    \item all spreads are taken into account, weighted with respect to their value$;$.
    \item when $N=1$ and $\varpi=1$, we recover the intensity for the single market maker case. 
\end{itemize}
\end{remark}

\subsubsection{Admissible controls, inventory process and canonical probability measure}
We define the probability $\mathbb{P}^{0}$ on $(\Omega,\Fc)$ such that under $\P^0$, $W$, $N^{i,a}$, and $N^{i,b}$ are independent for all $i=1,\dots,N$, $W$ is a one--dimensional Brownian motion, $N^{i,a}$ and $N^{i,b}$ are Poisson processes with intensity $\lambda^{i,a}(0,0)$, and $\lambda^{i,b}(0,0)$ respectively.\footnote{As a direct consequence, $N^{a}$ and $N^{b}$ are Poisson processes with intensity $\lambda(0)$}\footnote{In other words, $\P^0$ is simply the product measure of the Wiener measure on $\Omega_c$ and the unique measure on $\Omega_d^{2N}$ that makes the canonical process there into an homogeneous Poisson process with the prescribed intensity.} We therefore endow the space $(\Omega,\mathcal{F})$ with the ($\P^0-$augmented) canonical filtration $\mathbb{F}:=(\mathcal{F}_{t})_{t\in[0,T]}:=(\mathcal{F}_{t}^{c}\otimes (\mathcal{F}_{t}^{d})^{\otimes d})_{t\in [0,T]}$ generated by $(\mathcal{X}_{t})_{t\in[0,T]}$. It is well--known that the filtration $\F$ satisfies the usual conditions and the Blumenthal $0-1$ law. All notions of measurability for processes, unless otherwise stated, should be understood as being associated to $\F$.

\medskip 
The market maker has a view on the efficient price (which should be understood as the mid--price) of the asset given by $(S_{t})_{t\in[0,T]}$, defined as
\begin{align}
S_{t}:=S_{0}+\sigma W_{t},\; t\in [0,T],
\label{Efficient Price}
\end{align}
where $S_0>0$ is the known initial value of the price, and $\sigma>0$ is its volatility.
\begin{remark}
The use of an arithmetic Brownian motion for the efficient price process is motivated by its simplicity. The price \eqref{Efficient Price} can reach negative values with non--negligible probability only on a sufficiently large time horizon $T$. For practical purposes, we choose $T< 1\text{ day}$ so that {\rm Equation \eqref{Efficient Price}} approximates accurately an efficient price driven by a geometric Brownian, which stays positive almost surely. 
\end{remark}

Next, we define the properties of the controlled processes of the agents. Based on their view on the efficient price \eqref{Efficient Price}, market makers offer bid and ask quotes on the underlying asset. Such prices are defined by
\[
P_{t}^{i,b}:=S_{t}-\delta_{t}^{i,b}, \; P_{t}^{i,a}:=S_{t}+\delta_{t}^{i,a}, \; t\in[0,T],\; i\in\{1,\dots,N\},
\]
where the superscript $b$ (resp. $a$) accounts for bid (respectively ask). The set of admissible controls for the market makers is therefore defined as
\begin{align}
\mathcal{A}:=\Big\{(\delta_{t})_{t\in [0,T]}=(\delta_{t}^{i,a},\delta_{t}^{i,b})_{t\in [0,T]}^{i=1,\dots,N}: \text{$\R^{2N}-$valued and predictable processes bounded by $\delta_{\infty}$} \Big\}.
\label{Admissible contracts market makers}
\end{align}
The predictability of the spreads reflects the fact that each agent chooses in advance his quotes. The boundedness assumption here is technical and simply helps us to define the associated probability measures. As the optimal contract we will derive later on leads naturally to bounded spreads, this is actually without loss of generality, provided the bounds are chosen large enough\footnote{See Lemma \ref{Lemma exchange Hamiltonian} for the prescribed value of $\delta_{\infty}$.}. 

\medskip
A market maker manages both the spreads and his inventory process. For the $i-$th agent, a filled bid order, represented by $N^{i,b}$, increase its inventory by one unit, and conversely for an ask order. It leads to the following definition of an inventory process of a market maker
\begin{align*}
Q_{t}^{i}:=N_{t}^{i,b}-N_{t}^{i,a},\; t\in [0,T],\; i\in\{1,\dots,N\}.
\end{align*}

\begin{remark}
Given the form of the intensities \eqref{Definition Intensity}, the $i-$th agent will see his inventory changing only if he quotes a spread such that $\delta^{i,b}=\underline{\delta}^{b}$, or $\delta^{i,a}=\underline{\delta}^{a}$. Such quotations are called the best bid, and best ask spread respectively. 

\medskip
The term $\overline{q}$ defined in \eqref{Definition Intensity} acts as a critical absolute inventory, which is the same for each agent. Assume the $i-$th market maker cross this threshold on the bid side, then $\lambda^{i,b}(\delta,q)=0$ and he can only receive ask orders to decrease his inventory below $\overline{q}$.  
\end{remark}

\subsubsection{Change of probability measure}
Given our technical assumptions, we introduce for any $\delta \in \mathcal{A}$ a new probability measure $\mathbb{P}^{\delta}$ on $(\Omega,\Fc)$ under which $S$ follows \eqref{Efficient Price} and 
\begin{align}\label{Definition Compensated Poisson Process}
\tilde{N}_{t}^{\delta,i,a}:=N_{t}^{i,a}-\int_{0}^{t}\lambda^{i,a}(\delta_{r}^{a},Q_{t})\mathrm{d}r,\; \tilde{N}_{t}^{\delta,i,b}:=N_{t}^{i,b}-\int_{0}^{t}\lambda^{i,b}(\delta_{r}^{b},Q_{t})\mathrm{d}r,  \; t\in[0,T],\; i\in\{1,\dots,N\},
\end{align}
are martingales. This probability measure is defined by the corresponding Dol\'eans-Dade exponential
\begin{align}\label{Definition change of variable}
&L_{t}^{\delta}:=\exp\Bigg(\sum_{i=1}^{N}\sum_{j\in\{a,b\}}\int_{0}^{t}\mathrm{log}\bigg(\frac{\lambda^{i,j}(\delta_{r}^{j},Q_{r})}{A}\bigg)\mathrm{d}N_{r}^{i,j}-\int_0^t\big(\lambda^{i,j}(\delta_{r}^{j},Q_{r})-A\big)\mathrm{d}r\Bigg),
\end{align}
where $Q:=(Q^{1},\dots,Q^{N})^\top$. By direct application of Itô's formula, and the uniform boundedness of $\delta^{a},$ and $\delta^{b}$, this local martingale satisfies the Novikov--type criterion given in  \cite{sokol2013optimal}, and thus is a martingale. 

\begin{remark}
By definition, the compensated aggregated point processes
\begin{align*}
\tilde{N}_{t}^{\delta,a}:=N_{t}^{a}-\int_{0}^{t}\lambda(\delta_{r}^{a})\mathrm{d}r,\; \tilde{N}_{t}^{\delta,b}:=N_{t}^{b}-\int_{0}^{t}\lambda(\delta_{r}^{b})\mathrm{d}r,\; t\in[0,T],
\end{align*}
are also martingales under $\mathbb{P}^{\delta}$.
\end{remark}
We can therefore define the Girsanov change of measure with $\frac{\mathrm{d}\mathbb{P}^{\delta}}{\mathrm{d}\mathbb{P}^{0}}=L_{T}^{\delta}$ (see for instance \cite[Theorem III.3.1]{jacod2003limit}). In particular, all the probability measures $\mathbb{P}^{\delta}$ indexed by $\delta \in \mathcal{A}$ are equivalent. The notation a.s., for almost surely, can be used without ambiguity. Throughout the paper, we write $\mathbb{E}^{\delta}_{t}$ for the conditional expectation with respect to $\mathcal{F}_{t}$ under the probability measure $\mathbb{P}^{\delta}$. 

\medskip
Hence, the arrival of ask (resp. bid) market orders for the $i-$th market maker is represented by the point process $(N_{t}^{i,a})_{t\in [0,T]}$ (resp. $(N_{t}^{i,b})_{t\in [0,T]}$) of intensity $(\lambda^{i,a}(\delta_{t}^{a})_{t\in [0,T]}$ (resp. $(\lambda^{i,b}(\delta_{t}^{a})_{t\in[0,T]}$) and the total arrival of ask (resp. bid) market orders is represented by the point process $(N_{t}^{a})_{t\in [0,T]}$ (resp. $(N_{t}^{b})_{t\in [0,T]}$) of intensity $(\lambda(\delta_{t}^{a})_{t\in [0,T]}$ (resp. $(\lambda(\delta_{t}^{b})_{t\in[0,T]}$).

\subsubsection{Interpretations}\label{interpretations}

First, we comment on the shapes of the intensities in \eqref{Definition Intensity}. The intensity of buy (resp. sell) market order arrivals depends on the extra cost of each trade paid by the market taker compared to the efficient price. This extra cost is the sum of the spread $\underline{\delta}^{b}$ (resp. $\underline{\delta}^{a}$) imposed by the market maker who is currently trading at the best bid (resp. best ask), and the transaction cost $c>0$ collected by the exchange. Moreover, following classical financial economics results, the average number of trades per unit of time is a decreasing function of the ratio between the spread and the volatility (see  \cite{dayri2015large}, \cite{madhavan1997security}, and \cite{wyart2008relation}). 

\medskip
The intensity of order arrivals depends also on the market liquidity, namely the spread quoted by all market participants. Hence, the spread of the $i-$th market maker is weighted by a constant $H_\ell,$  $\ell\in \{1,\dots,K\}$. Such constant will be chosen later on. For the moment, note that it is a decreasing function of $\ell$. Hence, a small spread corresponds to a high weight and conversely. Recall that in our model, we make the approximation that we can only have orders of size $1$. Hence, an increase of the intensity represents the fact that we can send bigger orders (many orders of size $1$ corresponds to one large order). 

\medskip
Such weights depends on the open covering $\big\{K_{\ell}:\ell\in\{1,\dots, K\}\big\}$, introduced in Subsection \ref{Subsubsection Canonical Process}. Several forms can be chosen (thinner intervals around the first Tick for instance), and throughout this paper we use the following definition
\begin{align}\label{Definition open covering}
K_{\ell}:=\big((\ell-1)\text{Tick}, \min(\ell\text{Tick},\delta_{\infty})\big),\; \ell\in\{1,\dots,K\}. 
\end{align}
Note that we can choose $K=1$, which leads to a unique zone $K_1=[0,\delta_\infty]$. In that case, the penalisation on the intensity of arrival orders is the same for all $\delta \neq \min_{i=1,\dots,N}\delta^i$. A larger number of intervals $K$ is used in order to have a penalisation increasing with respect to the value of the spread of the market makers. We will see that at the optimum, the covering has no impact on the PnL of both principal and agents, due to the form of both PnL process and intensity. 
\begin{remark}
{\rm Equation \eqref{Definition open covering}} indicates that the open covering is not a dynamic function of the vector of spreads. Hence, we add the indicator function $\mathbf{1}_{\{x_i\neq \underline x\}}$ to ensure that the coordinate $\underline x$ is not associated to a weight $\varpi + H_{\ell}$, for $\ell\in\{1,\dots, K\}$, but only to $\varpi$. In addition to this, fixing the open covering makes the model more tractable from a numerical point of view. Otherwise, at each time--step, the corresponding areas should be computed again. 
\end{remark}

\subsection{Market makers’ problem}

\subsubsection{PnL process of the agents}

First, we define the PnL process of the $i-$th market maker when the market makers play $\delta\in\Ac$ as
\begin{align}
 PL_{t}^{\delta,i}&:= \int_{0}^{t}\sum_{j\in\{a,b\}}\bigg(\delta_{s}^{i,j}\mathbf{1}_{\{\delta_{s}^{i,j}=\underline \delta_{s}^{j}\}}+\sum_{\ell=1}^{K}\omega_{\ell}\delta_{s}^{i,j}\mathbf{1}_{\{\delta_{s}^{i,j}\in K_{\ell}\}}\bigg)\mathrm{d}N_{s}^{j} +\int_{0}^{t}Q_{s}^{i}\mathrm{d}S_{s},\; t\in[0,T].
\label{PnL agent}
\end{align}

The first integral corresponds to the cash flow process, whereas the other represents the inventory risk process of the $i-$th agent, and $\omega_{\ell}\in(0,1),$ $\ell\in\{1,\dots,K\}$ are weights that decrease toward zero as $\ell$ increases. The market maker is remunerated with an increasing fraction of his quote when he is near from the best spread. Such remuneration decreases as the spread quoted moves away from the best spread. This is an incentive for the agents to quote a lower spread in general, and represents the fact that an order may be splitted between agents depending on their position in the order book.

\medskip
Such form of incentive is particularly well suited to certain markets where there are hundreds of market makers, and a selection of market participants has to be done by the platform. Such selection is required by some clients, who ask for a specific number of market maker to ensure competition on the exchange. Moreover, good position on external rankings attracts new clients on the platform. To do so, a selection criteria is the price quality provided by market makers.

\medskip
Moreover, we assume no partial execution for the agents. Hence, if two market makers are playing the best spread, they are both fully executed, in the sense that they receive the whole trade.  

\begin{remark}
Whereas the market maker is remunerated with a portion of his spread when trading in a certain area compared to the best bid--ask, it is not reflected in hid inventory process. Indeed, the terms $\omega_{\ell}\delta^{i,j}$ accounts only for the cash process: the market maker's inventory does not move if he does not quote at the best bid--ask.
\end{remark}

\subsubsection{Best reaction functions}

Equation \eqref{PnL agent} represents the PnL of an agent in the absence of contract from the exchange. Following the principal--agent approach, the exchange proposes a remuneration $\xi^{i}$ defined by an $\mathcal{F}_{T}-$measurable random variable to each market maker, in addition to their PnL process. These aim at creating an incentive to attract liquidity on the platform by reducing the market makers' spreads. We will prove a certain representation theorem for the set of admissible contracts.

\medskip
Given their contracts, the agents are facing a stochastic differential game, since the relative rankings of their spreads directly impact whether their orders are executed or not. Since we will later be looking for Nash equilibria for this game, we consider that the optimisation problem of the $i-$th market maker is a function of the spread vector $\delta^{-i}$ quoted by the $N-1$ other agents. Hence, each agent will maximise his PnL given actions of the other market makers to obtain his so--called best--reaction function. By denoting $\gamma_{i}>0$ the risk--aversion of the $i-$th market maker, and using a CARA utility function $U_i(x):=-\mathrm{e}^{-\gamma_i x}$, $x\in\R$, we are left with the following maximisation problem
\begin{align*}
&V_{\text{MM}}^{i}(\xi^{i},\delta^{-i}):=\!\!\!\!\!\sup_{\delta^{i}\in \mathcal{A}^{i}(\delta^{-i})}\!\!\!\mathbb{E}^{\mathbb{P}^{\delta\otimes_{i}\delta^{-i}}}\!\Bigg[U_i\Bigg(\xi^{i}\!+\!\!\!\!\sum_{j\in\{a,b\}}\!\!\int_{0}^{T}\!\delta_{t}^{i,j}\bigg(\mathbf{1}_{\{\delta_{t}^{i,j}=\underline{\delta_t^j\otimes_i\delta_t^{j,-i}}\}}\!+\!\sum_{\ell=1}^{K}\int_{0}^{T}\!\omega_{\ell}\mathbf{1}_{\{\delta_{t}^{i,j}\in K_{\ell}\}}\bigg)\mathrm{d}N_{t}^{j}\!+\!\!\int_{0}^{T}\!\!Q_{t}^{i}\mathrm{d}S_{t}\Bigg)\!\Bigg]\!,
\end{align*}
where $\mathcal{A}^{i}(\delta^{-i}):=\{ \delta:\delta\otimes_{i}\delta^{-i}\in \mathcal{A} \}$. 

\medskip
To ensure that this quantity is not degenerate, for $i\in\{1,\dots,N\}$, we assume that for all $\delta\in \mathcal{A}$
\begin{align}\label{Integrability market maker}
\mathbb{E}^{\delta}\big[\exp\big(-\gamma' \xi^i \big)\big]<+\infty, \text{ for some } \gamma'>\max(\gamma_{1},\dots,\gamma_{N}).
\end{align}
We call $\R^N-$valued $\Fc_T-$measurable random variables $\xi$ satisfying \eqref{Integrability market maker} contracts. The integrability condition ensures that the market maker's problem is well--defined (that is the sup remains finite). As the $N$ market makers play simultaneously, we are looking for a Nash equilibrium resulting of the interactions between agents. We now provide the appropriate definition of such an equilibrium

\subsubsection{Nash equilibrium}

A Nash equilibrium is a set of admissible controls such that each market maker has no interest in deviating from its current position, given a contract offered by the principal. It is formalised with the following definition.

\begin{definition} \label{Definition Nash equilibrium}
For a given contract $\xi$, a Nash equilibrium for the $N$ agents is a set of actions $\hat{\delta}(\xi)\in \mathcal{A}$ such that for all $i\in\{1,\dots,N\}$
\begin{align}\label{Nash Equilibrium First definition}
V_{\textup{MM}}^{i}(\xi^{i},\hat{\delta}^{-i}(\xi))\!=\!\mathbb{E}^{\mathbb{P}^{\hat\delta(\xi)}}\bigg[U_i\bigg(\xi^{i}\!\!+\!\!\!\sum_{j\in\{a,b\}}\!\int_{0}^{T}\hat\delta_{t}^{i,j}(\xi)\bigg(\mathbf{1}_{\{\hat\delta_{t}^{i,j}(\xi)=\underline {\hat\delta}_{t}^{j}(\xi)\}}\!\!+\!\sum_{\ell=1}^{K}\int_{0}^{T}\omega_{\ell}\mathbf{1}_{\{\hat\delta_{t}^{i,j}(\xi)\in K_{\ell}\}}\bigg)\mathrm{d}N_{t}^{j}\!+\!\!\int_{0}^{T}\!Q_{t}^{i}\mathrm{d}S_{t}\bigg) \bigg].
\end{align}
\end{definition}

We introduce for any contract $\xi$ the set $\text{NA}(\xi)$ of all associated Nash equilibria. This set is of particular importance as it will be imposed to be non--empty, to ensure the existence of, at least, one Nash equilibrium. We now turn to the exchange's contracting problem. 

\subsection{The exchange optimal contracting problem}

In our framework, the exchange is compensated by a fixed amount $c>0$ for each market order that occurs in the market. As in \cite{el2018optimal}, since we are anyway working on a short time interval, we take $c$ independent of the price of the asset. The goal of the exchange is to maximise the total number of aggregated market orders $N_{T}^{a}+N_{T}^{b}$
arriving during the time interval $[0,T]$. As the arrival intensities are only controlled by the market makers, the contract vector $\xi$ aim at increasing these intensities, which are decreasing functions of the spreads. Hence, the exchange will pay to each market maker this contract at time $T$ and the form of his PnL, using a CARA utility function, is given by
\begin{align*}
-\exp\Big(-\eta\big(c(N_{T}^{a}+N_{T}^{b})-\xi\cdot 1_N \big)\Big),
\end{align*}
where $\eta>0$ denote the risk aversion parameter of the principal.

\medskip
We now provide a suitable definition of the set of admissible contracts offered by the exchange. First, we need to ensure that the problem of the exchange does not degenerate. Hence, we assume that, for all $\delta\in \mathcal{A}$ and $i\in\{1,\dots,N\}$. 
\begin{align}
\mathbb{E}^{\delta}\big[\exp\big(\eta^\prime N\xi^i \big)\big]<+\infty, \text{ for some } \eta^\prime>\eta.
\label{Integrability principal}
\end{align}
Since $N^{a}$ and $N^{b}$ are point processes with bounded intensities, this condition, together with H\"older's inequality, ensure that the problem of the exchange is well--defined.\footnote{We will see in the verification Theorem \ref{Theorem Verification} that such a condition is required for a uniform integrability type argument} We also assume that the market makers only accept contracts $\xi^{i}$ such that their maximal utility $V^{i}_{\text{MM}}(\xi^{i},\hat{\delta}^{-i})$, taken at a Nash equilibrium $\hat{\delta}\in \text{NA}(\xi)$, is above a threshold value $R_{i}<0$. This value is known as the reservation utility of the $i-$th agent, and leads to the following definition.

\begin{definition}\label{Definition Admissible contract}
The set of admissible contracts $\mathcal{C}$ is defined as the set of $\R^N-$valued, $\Fc_T-$measurable random variables $\xi:=(\xi^{1},\dots,\xi^{N})^\top$, such that for all $i\in\{1,\dots,N\}$, \eqref{Integrability market maker} and \eqref{Integrability principal} hold, and the participation constraints of all agents are satisfied for at least one Nash equilibrium in ${\rm NA}(\xi)$ $($which is then automatically non--empty$)$.
\end{definition}

In the set of admissible contracts, the participation constraints of the agents are satisfied for at least one Nash equilibrium generated by $\xi$. As we anticipate that the participation constraints will be binding for any optimal contract $\xi$, the agents are indifferent between the possible different Nash equilibria generated by $\xi$. This means that we can use the same convention as the one used in the classical principal--agent literature, where the principal has enough bargaining power to impose to the agents which equilibrium he wants them to use. His optimisation problem is thus written as
\begin{align}
V_{0}^{E}:=\sup_{\xi \in \mathcal{C}}\sup_{\hat{\delta} \in \text{NA}(\xi)} \mathbb{E}^{\hat{\delta}(\xi)}\Big[-\exp\Big(-\eta\big(c(N_{T}^{a}+N_{T}^{b})-\xi\cdot 1_N \big)\Big)\Big].
\label{PnL principal}
\end{align} 
Now that we have properly defined the two problems of the Stackelberg game, we can move towards the resolution of the market maker's problem. Before solving this two--steps problem, we first sketch the approach we undertake.

\subsection{Stackelberg games in a nutshell}

Each market maker has an optimisation problem which depends on the control processes of the $N-1$ others agents, and on the contract given by the principal. Hence, solving the $i-$th agent's problem is done by searching the best reaction functions of each market makers, given a set of actions $\delta^{-i}$ of the other agents. Hence, the spreads quoted by every agent will both depend on the incentives given by the principal, and the spreads of opponents. 

\medskip
As stated before, the market makers fix their quotes simultaneously, so they must agree on an equilibrium between their reaction functions. To solve this problem, we will make use of an equivalent definition of a Nash equilibrium, given in \cite{elie2016contracting} and recalled in the next section. In particular, there is a direct link between the existence of a Nash equilibrium and a solution to a multidimensional system of BSDEs. 

\medskip
The key point is that, for a specific choice of weights $H_{\ell}$ in \eqref{Definition Intensity}, the existence and uniqueness of a Nash equilibrium is direct, because of two important facts. First, the use of indicator functions $\mathbf{1}_{\{\delta^{i,j}=\underline{\delta}^{j}\}}$, and $\mathbf{1}_{\{\delta^{i,j}\neq\underline{\delta}^{j}\}}$ acts as a decoupling effect on the agents' Hamiltonian. Second, such effect can be achieved only in the case of a restriction to a specific form of contracts. This restriction will be explained and commented in Section \ref{Section market maker's problem}. For the moment, note that it enables to compute explicitly a unique Nash equilibrium for the market maker's problem. 

\medskip
Given that admissible contracts generate at least one Nash equilibrium, the principal solve his optimisation problem by choosing the incentives given to each market maker, as a result of its associated HJB equation. This provides explicitly the optimal quotes of the agents, and solve the two steps Stackelberg game.

\section{Solving the market maker’s problem} \label{Section market maker's problem}

We start by solving the problem of the $i-$th market maker facing an arbitrary admissible contract proposed by the exchange. This section is mainly devoted to Theorem \ref{Theorem market maker}. First, we introduce a certain form of contracts proposed by the principal to the $i-$th agent. This $\mathcal{F}_{T}-$measurable random variable takes the form of the terminal condition to a specific BSDE, although we do not use this theory to solve the problem. We then prove that this is the only form of contracts that can be proposed to the market makers. Then, given this specific form, we derive the optimal response of each agents, other actions being fixed.

\subsection{Preliminaries}

For notational simplicity, let us define $\mathcal{R}:=\mathbb{R}^{N}\times \mathbb{R}^{N}\times \mathbb{R},\; \mathcal{B}_{\infty}:=[-\delta_{\infty},\delta_{\infty}].$
\begin{definition}\label{Definition Hamiltonian agent}
Fix some $i\in\{1,\dots,N\}$. For any $(d^{i},d^{-i},z^{i},q)\in \mathcal{B}_{\infty}^{2}\times\mathcal{B}_{\infty}^{2(N-1)}\times \mathcal{R}\times\mathbb{Z}^{N}$, where we have $z^{i}:=\big((z^{i,j,a})_{j=1,\dots,N},(z^{i,j,b})_{j=1,\dots,N},z^{S,i}\big)$, and $d^{i}:=(d^{i,a},d^{i,b})$, the Hamiltonian of the $i-$th agent is defined by
\begin{align}\label{Hamiltonian agent}
H^{i}(d^{-i}\!,z^{i},q)&:=\sup_{d^{i}\in \mathcal{B}_{\infty}^{2}}h^{i}(d^{i},d^{-i},z^{i},q),
\end{align}
where
\[
h^{i}(d^{i},d^{-i},z^{i},q):=\!\sum_{\ell=1}^{N}\sum_{j\in\{a,b\}}\!\!\gamma_{i}^{-1}\bigg(1-\exp\bigg(-\gamma_{i}\bigg(z^{i,\ell,j}+d^{i,j}\mathbf{1}_{\{d^{i,j}=\underline{d^j\otimes_i d^{j,-i}}\}}+\sum_{k=1}^{K}\omega_{k}d^{i,j}\mathbf{1}_{\{d^{i,j}\in K_{k}\}}\!\bigg)\!\bigg)\!\bigg)\lambda^{\ell,j}(d^{j},q).
\]
\end{definition}

As in every stochastic control problem, such quantity is of particular importance. It is naturally derived from an application of Itô's formula to $\mathrm{e}^{-\gamma_{i} Y_{t}}$, for $i=1,\dots,N$ where $Y$ is defined by \eqref{Process Yt}. Then, maximising this quantity will give the optimal spreads quoted by the $i-$th market maker, given the spreads quoted by the other agents.

\medskip
The form of the maps $(h^i)_{i=1,\dots,N}$ requires that a maximiser in \eqref{Hamiltonian agent} is a function of $d^{-i},$ $z^i$, and $q$. This suggests a proper definition of a fixed point of the Hamiltonian vector $\big(H^{i}(d^{-i},z^{i},q)\big)_{i=1,\dots,N}$.

\begin{definition}
For every $(z,q)\in \mathcal{R}^{N}\times \mathbb{Z}^{N}$ a fixed point of the Hamiltonian is defined by a matrix $\delta^{\star}(z,q)\in \mathcal{M}_{N,2}(\mathbb{R})$ such that for any $1\leq i \leq N$
\begin{align}\label{Definition Fixed Point}
\delta^{\star i}(z,q)\in \underset{\delta^{i}\in\mathcal{B}_{\infty}^{2}}{\textup{argmax }}h^{i}(\delta^{i},\delta^{\star-i},z^{i},q).
\end{align}
For every $(z,q)\in \mathcal{R}^{N}\times \mathbb{Z}^{N}$, we denote by $\mathcal{O}(z,q)$ the set of all fixed points.
\end{definition}
We need the following standing technical assumption.
\begin{assumption}\label{Assumption fixed point}
There exists at least one Borel--measurable map $\delta^{\star}:\Rc^N\times\Z^N\longrightarrow \mathcal{M}_{N,2}(\mathbb{R})$ such that for every $(z,q)\in \mathcal{R}^{N}\times \mathbb{Z}^{N}$, $\delta^{\star}(z,q)\in \mathcal{O}(z,q)$. The corresponding set of maps is denoted by $\Oc$.
\end{assumption}

\begin{remark}
We will see that in the specific case where $z^{i,j,a}=z^{i,a}$, and $z^{i,j,b}=z^{i,b}$ for all $(i,j)\in\{1,\dots,N\}^2$, there exists a unique fixed point in $\Oc(z,q)$ for any $(z,q)\in \mathcal{R}^{N}\times \mathbb{Z}^{N}$. This specification is used in {\rm Corollary \ref{Corollary best response agent}}, where the Nash equilibrium is provided explicitly.
\end{remark}

We now define a family of processes which represents the form of contract given to the agents. 

\begin{definition}\label{Definition Process Yt}
Given $y_{0}\in \mathbb{R^N}$, and $\Rc-$valued predictable process $Z^{i}:=(Z^{i,j,a},Z^{i,j,b},Z^{S,i})_{j=1,\dots,N}$, for $i\in\{1,\dots,N\}$, we introduce the family of $\R^N-$valued processes $(Y^{y_{0},Z,\hat{\delta}})_{\hat\delta\in\Oc}$ indexed by fixed point maps $\hat\delta\in\Oc$, whose $i-$th coordinate is given by, for $i\in\{1,\dots,N\}$ and $t\in[0,T]$
\begin{align}\label{Process Yt}
Y_{t}^{i,y_{0},Z,\hat{\delta}}\!:=\!y_{0}^{i}\!+\!\!\sum_{j=1}^{N}\int_{0}^{t}Z_{r}^{i,j,a}\mathrm{d}N_{r}^{j,a}\!+\!Z_{r}^{i,j,b}\mathrm{d}N_{r}^{j,b}\!+\!Z_{r}^{S,i}\mathrm{d}S_{r}\!+\!\bigg(\frac{\gamma_{i}\sigma^{2}}{2}(Z_{r}^{S,i}\!+\!Q_{r}^{i})^{2}\!-\!H^{i}\big(\hat{\delta}^{-i}(Z_r,Q_r),Z_{r}^{i},Q_{r}\big)\bigg)\mathrm{d}r.
\end{align}
We say that $Z:=(Z^i)_{i=1,\dots,N}$ belongs to the set $\Zc$, if $Y_{T}^{y_{0},Z,\hat{\delta}}$ satisfies \eqref{Integrability market maker}, \eqref{Integrability principal} and for all $\delta\in\mathcal{A}$,
\begin{align*}
   \mathbb{E}^\delta \Big[\sup_{t\in[0,T]}\exp\big(-\gamma_i^{'} Y_{t}^{i,y_{0},Z,\hat{\delta}}\big)\Big] <+\infty.
\end{align*}
\end{definition}
This condition ensures that the market maker's problem is not degenerated given this specific form of contract. Moreover, given the integrability conditions on the coefficients, the processes $\big(Y^{y_{0},Z,\hat{\delta}}\big)_{\hat{\delta}\in \mathcal{O}}$ are well defined and $\big(\mathrm{e}^{-\gamma_{i} Y^{i}}\big)$ is a uniformly integrable process under $\mathbb{P}^{\delta}$, for every $\delta\in \mathcal{A}$, and $i\in\{1,\dots,N\}$.\footnote{Such condition is used to provide explicitly the best response of the agents in Corollary \ref{Corollary best response agent}.} To link an admissible vector contract $\xi\in \mathcal{C}$ to the processes defined in \eqref{Process Yt}, we define the following set.

\begin{definition}
We define $\Xi$ as the set of random variables $Y_T^{y_{0},Z,\hat{\delta}}$ where $(y_0,Z,\hat\delta)$ ranges in $\R^N\times\Zc\times\mathcal{O}$,  and such that $\mathrm{e}^{-\gamma_i y_0^i}\geq R_i$ for any $i\in\{1,\dots,N\}$.\footnote{Theorem \ref{Theorem market maker} proves that such contract generates at least one equilibrium.}
\end{definition}

Since by definition all bounded predictable processes are contained in $\mathcal{Z}$, it is clearly nonempty. 

\medskip
To prove equality of these sets, we are  reduced  to  the  problem  of  representing  any  contract $\xi^{i}$ as $Y_{T}^{i,y_{0},Z,\hat{\delta}}$ for some $(y_{0},Z)\in \mathbb{R}^N\times\mathcal{Z}$ and some $\hat\delta \in \mathcal{O}$. Following the approach of \cite{sannikov2008continuous}, we derive a dynamic programming principle for the utility function of the market maker, and then prove the equality of the sets by identification of the coefficients. 

\subsection{Contract representation}
The following theorem provides solution to the market maker's problem, and a complete characterisation of the set of admissible contracts.

\begin{theorem}\label{Theorem market maker}
Any contract vector $\xi=Y_{T}^{y_{0},Z,\hat{\delta}}$ with $(y_{0},Z,\hat\delta) \in \mathbb{R}^{N}\times\mathcal{Z}\times\mathcal{O}$ leads to a unique Nash equilibrium for the agents, given by $\big(\hat\delta(Z_t,Q_t)\big)_{t\in[0,T]}$.

\medskip
Conversely, any admissible contract $\xi \in \mathcal{C}$ is of the form $\xi=Y_{T}^{y_{0},Z,\hat{\delta}}$ for some $(y_{0},Z) \in \mathbb{R}^{N}\times\mathcal{Z}$ and a certain $\hat\delta \in \mathcal{O}$. 
\end{theorem}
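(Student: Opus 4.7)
The plan is to follow the now-standard principal-agent programme: a verification argument built around It\^o's formula for the sufficiency direction, and a dynamic-programming plus martingale-representation construction for the necessity direction, adapted here to the jump and Nash structure of our market-maker game.

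\textbf{Verification (sufficiency).} Fix $\xi = Y_T^{y_0, Z, \hat\delta}$ with $(y_0, Z, \hat\delta) \in \R^N \times \Zc \times \Oc$. For each agent $i$, let the other agents play $\hat\delta^{-i}(Z, Q)$ and let $i$ deviate to an arbitrary $\delta^i \in \Ac^i(\hat\delta^{-i})$. I would apply It\^o's formula to $\exp\bigl(-\gamma_i \bigl(Y_t^{i, y_0, Z, \hat\delta} + PL_t^{\delta^i \otimes_i \hat\delta^{-i}, i}\bigr)\bigr)$ under $\mathbb{P}^{\delta^i \otimes_i \hat\delta^{-i}}$. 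The Brownian and compensated-jump terms form a local martingale, while the finite-variation drift reduces, by construction of the drift in \eqref{Process Yt}, to
$$\gamma_i \exp\bigl(-\gamma_i(Y_t^i + PL_t^i)\bigr)\bigl[H^i(\hat\delta^{-i}, Z^i_t, Q_t) - h^i(\delta^i_t, \hat\delta^{-i}, Z^i_t, Q_t)\bigr]\,\mathrm{d}t,$$
where the quadratic correction $\tfrac12 \gamma_i \sigma^2 (Z^{S,i} + Q^i)^2$ in \eqref{Process Yt} is precisely designed to absorb the second-order It\^o term generated by $(Z^{S,i} + Q^i)\,\mathrm{d}S$. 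Since $H^i$ is by definition the supremum of $h^i$ over $d^i$, this drift is nonnegative and vanishes at $\delta^i_t = \hat\delta^i(Z_t, Q_t)$. Localising, taking expectations and passing to the limit via the integrability built into $\Zc$, I recover $V_{\textup{MM}}^i(\xi^i, \hat\delta^{-i}) \leq -\exp(-\gamma_i y_0^i)$ with equality at $\hat\delta^i$. Uniqueness of the Nash equilibrium follows from the strict concavity of $d^i \mapsto h^i(d^i, \hat\delta^{-i}, Z^i, Q)$ in the restricted setting of Corollary \ref{Corollary best response agent}, where the $j$-dependence of $z^{i,j,\cdot}$ is suppressed.

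\textbf{Representation (necessity).} Conversely, let $\xi \in \Cc$ and fix $\hat\delta \in \mathrm{NA}(\xi)$. For each $i$, introduce the continuation utility
$$\hat V_t^i := \mathop{\mathrm{ess\,sup}}_{\delta^i \in \Ac^i(\hat\delta^{-i})} \mathbb{E}_t^{\delta^i \otimes_i \hat\delta^{-i}}\!\bigl[U_i\bigl(\xi^i + PL_T^i - PL_t^i\bigr)\bigr],$$
with $PL^i$ as in \eqref{PnL agent}, and write $\hat V_t^i = -\exp(-\gamma_i Y_t^i)$ for a unique c\`adl\`ag process $Y^i$. A standard dynamic-programming argument yields that at equilibrium $\exp\bigl(-\gamma_i(Y_t^i + PL_t^i)\bigr)$ is a positive $\mathbb{P}^{\hat\delta}$-martingale on $[0,T]$ with terminal value $\exp(-\gamma_i \xi^i)$. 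Under $\mathbb{P}^{\hat\delta}$, the pair $(W, (\tilde N^{\hat\delta, i, j})_{i \leq N,\, j \in \{a,b\}})$ enjoys the predictable representation property for $\F$ (a direct consequence of the Girsanov construction in Section \ref{Subsubsection Canonical Process} and the independence carried by $\P^0$). Representing the above martingale and inverting It\^o on $Y^i$ then produces predictable integrands which identify $Z^{S,i}$ and $Z^{i,j,a}, Z^{i,j,b}$. Matching drifts in the decomposition of $Y^i$ forces its drift to coincide with $H^i(\hat\delta^{-i}, Z^i, Q)$: a strict deficit would yield a measurable deviation strictly improving agent $i$'s utility, contradicting $\hat\delta \in \mathrm{NA}(\xi)$. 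Thus \eqref{Nash Equilibrium First definition} specialises to the pointwise fixed-point relation \eqref{Definition Fixed Point}, so $\hat\delta \in \Oc$ and $\xi = Y_T^{y_0, Z, \hat\delta}$ with $y_0^i := Y_0^i$, the latter being bounded below by $R_i$ thanks to the participation constraint.

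\textbf{Main obstacle.} The most delicate step is verifying that the $Z$ extracted from the representation actually lies in $\Zc$, i.e.\ that it satisfies \eqref{Integrability market maker}, \eqref{Integrability principal} and the finiteness of $\mathbb{E}^\delta\!\bigl[\sup_{t \in [0,T]} \exp(-\gamma_i' Y_t^i)\bigr]$ for every $\delta \in \Ac$. I would obtain the required a priori exponential-moment bounds on $Y^i$ by exploiting the uniform boundedness of the Hamiltonian $H^i$ (inherited from the boundedness of $\Ac$ and of the intensities $\lambda^{i,j}$) together with the integrability of $\xi$ via \eqref{Integrability market maker}--\eqref{Integrability principal}, in the spirit of BMO-type estimates for exponential-utility BSDEs with jumps. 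A secondary subtlety is producing a Borel-measurable version of the fixed-point map $\hat\delta$, which is exactly the content of Assumption \ref{Assumption fixed point}; this is what converts the pointwise Nash equilibrium selection into the admissible predictable process $\bigl(\hat\delta(Z_t, Q_t)\bigr)_{t \in [0,T]}$.
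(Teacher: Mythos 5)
Your proposal is correct and follows essentially the same route as the paper: an It\^o-based verification argument for sufficiency (drift $\geq 0$, vanishing exactly at the Hamiltonian maximiser, hence submartingale of $\exp(-\gamma_i(Y^i+PL^i))$ with equality at $\hat\delta^i$) and a dynamic-programming plus Doob--Meyer plus martingale-representation argument for necessity, with the drift matched to $H^i$ and the optimiser identified via the supermartingale inequality. The only cosmetic differences are in the instrumentation of the integrability step: the paper establishes $Z\in\Zc$ and the class-$(D)$ property directly via H\"older and Doob estimates (its Lemma \ref{Lemma Uniform Integrability}) rather than appealing to BMO-type bounds, and it spends some care in the necessity direction showing that the predictable purely-discontinuous component $A^d$ of the Doob--Meyer decomposition vanishes (using that the jump times of $N^{i,j}$ are totally inaccessible), a point you implicitly subsume under ``matching drifts'' but which is worth spelling out.
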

In the next corollary, we restrict ourselves to a subset of admissible contracts under which each agent earns at least his reservation utility, and where we can derive explicitly their best--response. For such purpose, we introduce the following set
\begin{align*}
\Xi^\prime \!:=\! \Big\{Y_{T}^{y_{0},Z,\hat{\delta}}\!\!:\!(\hat\delta,y_0,Z)\!\in\!\mathcal{O}\!\times\!\R^N\!\!\times\!\Zc,\;\! \text{\rm s.t. for all } (i,j,k)\!\in\!\{1,\dots,N\}^2\!\!\times\!\{a,b\},\;\! \mathrm{e}^{-\gamma_i y_0^i}\!\geq \!R_i,\! \; Z^{i,j,k}\!=:\!Z^{k}\Big\}.
\end{align*}
The main interest of the subset $\Xi^\prime$ is the following result.
\begin{lemma}\label{Lemma unicity fixed point Hamiltonian}
Assume that, for $(z,q)\in\mathcal{R}^N \times \mathbb{Z}^N$, we have $z^{i,\ell,j}=z^{j}$ for all $(i,\ell)\in \{1,\dots,N\}^2$ and $j\in\{a,b\}$.
We define 
\begin{align*}
&\Gamma^{i,j}(z):=-z^{j}+\frac{1}{\gamma_{i}}\mathrm{log}\bigg(1+\frac{\sigma\gamma_{i}}{k\varpi}\bigg),\; z\in\Rc^N.
\end{align*}
We also introduce the function $\Delta:\mathcal{R}^{N}\times\mathbb{Z}^N \longrightarrow \mathcal{M}_{N,2}(\mathbb{R})$ defined by, for $i\in\{1,\dots,N\}$, $j\in\{a,b\}$, $(z,q)\in\mathcal{R}^{N}\times\mathbb{Z}^N$
\begin{align}\label{Optimal response Corollary}
\Delta^{i,j}(z,q) := 
    \begin{cases}
   \displaystyle     (-\delta_{\infty})\vee \Gamma^{i,j}(z)\wedge \delta_{\infty},\; \mbox{\rm if}\; \Gamma^{i,j}(z) < \Gamma^{\ell,j}(z),\; -\overline{q}<q<\overline{q}, \;\mbox{\rm for all } \ell\neq i,  \\[0.8em]
    \displaystyle    (-\delta_{\infty})\vee \frac{1}{\omega_{\ell}}\Gamma^{i,j}(z)\wedge \delta_{\infty},\; \mbox{\rm if}\; \frac{1}{\omega_{\ell}}\Gamma^{i,j}(z)\in K_{\ell},\; -\overline{q}<q<\overline{q},\; \mbox{\rm for}\; \ell\in\{1,\dots,K\},\\
    \displaystyle 0,\; \text{\rm otherwise}.
    \end{cases}
\end{align}
Then, $\mathcal{O}(z,q)$ is reduced to the singleton $\big\{\Delta(z,q)\big\}$.
\end{lemma}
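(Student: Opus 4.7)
My approach is to exploit the special structure imposed by the assumption $z^{i,\ell,j}=z^{j}$ in order to reduce the fixed--point problem to $2N$ uncoupled one--dimensional optimisations, each of which can be analysed by explicit first--order conditions. The first step is to observe that, because the expression inside the exponential in $h^{i}$ does not involve $\ell$ anymore, the inner sum over $\ell$ can be factored out as $\sum_{\ell}\lambda^{\ell,j}(d^{j},q)=\lambda^{j}(d^{j},q)$, yielding
\begin{align*}
h^{i}(d^{i},d^{-i},z^{i},q)=\sum_{j\in\{a,b\}}\gamma_{i}^{-1}\Big(1-\exp\big(-\gamma_{i}\Psi^{i,j}(d^{i,j},d^{-i,j})\big)\Big)\lambda^{j}(d^{j},q),
\end{align*}
where $\Psi^{i,j}$ collects the $z^{j}$ and the two $d^{i,j}$ indicator terms. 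Since the ask and bid contributions involve disjoint variables, the problem decouples into two independent maximisations; I will fix $j$ and omit it from the notation in what follows.

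\textbf{Inventory boundary.} If $q^{i}=-\phi(j)\overline{q}$, then the indicator in \eqref{Definition Intensity} forces $\lambda^{i,j}\equiv 0$. A short check of $h^{i}$ shows the value of $d^{i,j}$ is then irrelevant to the partial Hamiltonian and one may pick the canonical representative $\Delta^{i,j}(z,q)=0$ from $\Oc(z,q)$. This yields the third branch of \eqref{Optimal response Corollary}; from now on I assume $-\overline{q}<q^{i}<\overline{q}$.

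\textbf{Best response in the interior.} For fixed $d^{-i,j}$ I partition the admissible interval $[-\delta_{\infty},\delta_{\infty}]$ into the ``strictly--best'' region $R_{1}:=\{d^{i,j}<\underline{d^{-i,j}}\}$ and the ``not--best'' region $R_{2}:=\{d^{i,j}>\underline{d^{-i,j}}\}$, dealing with the tie point separately by continuity. On $R_{1}$ the aggregated intensity takes the form $A\exp(-\tfrac{k}{\sigma}(c+\varpi d^{i,j}+C^{-i}_{1}))$ with $C^{-i}_{1}$ independent of $d^{i,j}$; setting the derivative in $d^{i,j}$ to zero produces the explicit interior maximiser
\begin{align*}
d^{i,j}_{\star}=-z^{j}+\frac{1}{\gamma_{i}}\log\!\Big(1+\frac{\sigma\gamma_{i}}{k\varpi}\Big)=\Gamma^{i,j}(z),
\end{align*}
to be truncated by $\pm\delta_{\infty}$. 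On $R_{2}$, restricting to one stratum $K_{\ell}$, the $d^{i,j}$ dependence inside the intensity enters linearly through $H_{\ell}d^{i,j}$ while the reward enters as $\omega_{\ell}d^{i,j}$; the same calculation (with the calibration of $H_{\ell}$ implicit in the model) gives the local maximiser $\Gamma^{i,j}(z)/\omega_{\ell}$, truncated, provided this value belongs to $K_{\ell}$.

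\textbf{Fixed point and uniqueness.} Comparing the maximal values of $h^{i}$ on $R_{1}$ and $R_{2}$ one checks that agent $i$ prefers being strictly best if and only if his ``ideal'' quote $\Gamma^{i,j}(z)$ undercuts every competitor's ideal quote $\Gamma^{\ell,j}(z)$. This immediately forces the structure of any fixed point $\delta^{\star}\in\Oc(z,q)$: the unique agent $i_{0}$ achieving $\min_{\ell}\Gamma^{\ell,j}(z)$ must quote $\Gamma^{i_{0},j}(z)$, and the other agents, now facing an opponent at $\Gamma^{i_{0},j}(z)<\Gamma^{\ell,j}(z)\leq\Gamma^{\ell,j}(z)/\omega_{\ell}$, find their global optimum on $R_{2}$ at the value specified by the second branch of \eqref{Optimal response Corollary}. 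This gives existence of a fixed point and, by the strictness of the inequality $\Gamma^{i,j}(z)<\Gamma^{\ell,j}(z)$ which characterises the winner, its uniqueness. I expect the main obstacle to be the careful comparison of the values of $h^{i}$ across the strata $K_{\ell}$ and across $R_{1}$ versus $R_{2}$, as well as checking that the stratified optimum $\Gamma^{i,j}(z)/\omega_{\ell}$ does indeed land in the correct $K_{\ell}$ at the claimed fixed point so that the piecewise formula is self--consistent.
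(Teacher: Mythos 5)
Your proof follows essentially the same route as the paper's: reduce the Hamiltonian under $z^{i,\ell,j}=z^{j}$ so the $\ell$-sum over the intensities factors out as $\lambda^{j}$, decouple the bid and ask sides, compute the first--order conditions region by region, and then check that the resulting piecewise map is a consistent fixed point. You are more explicit than the paper in spelling out the partition into the inventory-boundary case, the ``strictly best'' region $R_{1}$, and the ``not best'' region $R_{2}$, and in noting that the branch-$R_{2}$ maximiser $\Gamma^{i,j}(z)/\omega_{\ell}$ requires the implicit calibration $H_{\ell}=\varpi\,\omega_{\ell}$ for the formula to close up; the paper never states this calibration but it is indeed what is needed. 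You also correctly anticipate the residual obstacle, namely verifying that $\Gamma^{i,j}(z)/\omega_{\ell}\in K_{\ell}$ so the second branch is well posed, and comparing the maximal values of $h^{i}$ across $R_{1}$ and $R_{2}$.

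One place where you diverge slightly from the paper, and where the argument deserves more care: you frame the fixed-point step as a \emph{strict} preference comparison ("agent $i$ prefers being strictly best if and only if $\Gamma^{i,j}$ undercuts every competitor's $\Gamma^{\ell,j}$"). The paper instead shows that at $\Delta(z,q)$ the value of $h^{i}$ takes the \emph{same} form $\sigma\big(1+\sigma\gamma_{i}/(k\varpi)\big)^{-1}\lambda^{j}\big(\Delta^{:,j}(z,q),q\big)$ whether or not agent $i$ is currently the best quote, and deduces that no agent has an incentive to switch. The tension is resolved by noticing that your constants $C^{-i}_{1}$ and $C^{-i}_{2}$ are \emph{not} the same across regions: when agent $i$ crosses from $R_{2}$ into $R_{1}$ the previously best competitor's weight in $\lambda^{j}$ switches from $\varpi$ to $H_{m}$, so that a deviation by the minimiser towards $R_{2}$ strictly lowers the aggregate intensity, while a deviation by a non-minimiser towards $R_{1}$ is dominated by the constrained boundary argument you allude to. Making this explicit is what actually closes the uniqueness step, and it is glossed over in the paper's appeal to "strict concavity" just as much as in your value comparison — so this is a shared imprecision rather than a defect particular to your argument.
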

The proof is reported in the appendix, and follows from standard computations on the Hamiltonian \eqref{Definition Hamiltonian agent}. In particular, it leads to existence and uniqueness of the maximiser of \eqref{Definition Hamiltonian agent}. We can now conclude with the announced corollary. 
\begin{corollary}\label{Corollary best response agent}
For any admissible contract $Y_{T}^{y_{0},Z,\hat{\delta}}\in\Xi^\prime$ offered by the principal, there exist a unique Nash equilibrium, given by by $\big(\Delta(Z_t,Q_t)\big)_{t\in[0,T]}$, where the map $\Delta$ is defined in \eqref{Optimal response Corollary}. 
\end{corollary}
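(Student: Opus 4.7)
The plan is to obtain the corollary as a direct combination of Theorem \ref{Theorem market maker} and Lemma \ref{Lemma unicity fixed point Hamiltonian}, the point being that restricting to contracts in $\Xi'$ is precisely the structural condition under which the Hamiltonian admits a single fixed point, which then selects a unique $\hat\delta\in\Oc$.

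First, I would check admissibility: by definition of $\Xi'$, any $\xi=Y_{T}^{y_{0},Z,\hat{\delta}}\in\Xi'$ satisfies $(y_{0},Z,\hat{\delta})\in\R^{N}\times\Zc\times\Oc$ with $Z^{i,j,k}=Z^{k}$ depending only on $k\in\{a,b\}$, and the reservation inequality $\mathrm{e}^{-\gamma_i y_{0}^i}\geq R_i$ for all $i$. The integrability assumptions built into $\Zc$ yield \eqref{Integrability market maker} and \eqref{Integrability principal}, so $\xi\in\Cc$.

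Next, I would invoke Theorem \ref{Theorem market maker} to conclude that $\xi$ induces a unique Nash equilibrium $\bigl(\hat\delta(Z_t,Q_t)\bigr)_{t\in[0,T]}$, and then identify $\hat\delta$ explicitly. Since $Z\in\Zc$ satisfies $Z^{i,\ell,k}=Z^{k}$ for every $(i,\ell)\in\{1,\dots,N\}^{2}$ and $k\in\{a,b\}$, the vector $Z_t(\omega)$ is, for every $(t,\omega)$, of the special form treated in Lemma \ref{Lemma unicity fixed point Hamiltonian}. That lemma then yields $\Oc(Z_t,Q_t)=\{\Delta(Z_t,Q_t)\}$ pathwise, which forces any fixed point map $\hat\delta\in\Oc$ to coincide with $\Delta$ when evaluated along $(Z_t,Q_t)$. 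Substituting into the statement of Theorem \ref{Theorem market maker} gives that the unique Nash equilibrium is $\bigl(\Delta(Z_t,Q_t)\bigr)_{t\in[0,T]}$.

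Essentially no computation is required beyond the verification above; the only subtle point is to ensure that the uniqueness in Lemma \ref{Lemma unicity fixed point Hamiltonian} applies $(t,\omega)$-wise and thus trivialises the choice of $\hat\delta\in\Oc$ in Theorem \ref{Theorem market maker}. The hardest preliminary work has already been carried out in establishing Theorem \ref{Theorem market maker} (the representation of admissible contracts and the equivalent BSDE-type characterisation of Nash equilibria) and in the explicit Hamiltonian maximisation performed in Lemma \ref{Lemma unicity fixed point Hamiltonian}, so the corollary itself is really a bookkeeping step assembling those two results.
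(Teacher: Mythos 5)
Your proposal is correct and takes essentially the same route as the paper: the paper gives no separate proof and treats the corollary as an immediate combination of Theorem~\ref{Theorem market maker} and Lemma~\ref{Lemma unicity fixed point Hamiltonian}, with the singleton property of $\Oc(Z_t,Q_t)$ under the $\Xi'$ restriction pinning the equilibrium to $\big(\Delta(Z_t,Q_t)\big)_{t\in[0,T]}$ — exactly your bookkeeping step.
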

This result states that, at the optimum, the utility function of each market maker corresponds to its reservation utility, that is to say the quantity such that the $N$ agents accept their contract. Moreover, it enables us to characterise explicitly a unique Nash equilibrium for the market maker's problem. We end the section with some comments on the shape of admissible contracts \eqref{Definition Process Yt}. 

\subsection{On the shape of compensation proposed and contractible variables.}\label{subsection shape compensation}
In this section, we would like to highlight some interpretation on the classes $\Xi$ and $\Xi^\prime$ of "smooth" contracts $\xi=(\xi^1,\dots,\xi^N)^\top$ controlled by $(y_0,Z,\hat\delta)\in \R^N\times\Zc\times\mathcal{O}$ and having the form
\[\xi^{i}=y_{0}^{i}+\sum_{j=1}^{N}\int_{0}^{T}Z_{r}^{i,j,a}\mathrm{d}N_{r}^{j,a}+Z_{r}^{i,j,b}\mathrm{d}N_{r}^{j,b}+Z_{r}^{S,i}\mathrm{d}S_{r}+\bigg(\frac{1}{2}\gamma_{i}\sigma^{2}(Z_{r}^{S,i}+Q_{r}^{i})^{2}-H^{i}\big(\hat{\delta}^{-i}(Z_r,Q_r),Z_{r}^{i},Q_{r}\big)\bigg)\mathrm{d}r.
 \]
Note that the contracts are indexed on the number of transactions and on the efficient price of the asset. The spreads of the market makers are observed by the platform but are not contractible variables, since a contract depending on them would be unrealistic in practice, see \cite{el2018optimal}. Mathematically speaking, allowing the contract to depend on the spreads would correspond to a first best problem. For sake of completeness, we compute the solution of the first best problem in Appendix \ref{Section principal First Best} and show that it differs from the one considered here.
 \begin{enumerate}[leftmargin=*]
 \item[$\bullet$] The compensation $y_{0}^{i}$ is calibrated by the exchange to ensure the reservation utility constraint with level $R_i$ of the $i-$th market maker, we refer to Section \ref{Section principal's problem} for more details on it.
 \item[$\bullet$] The term $\int_0^T Z^{S,i}_r \mathrm{d}S_r$ is the compensation given to the market maker with respect to the efficient price.
 \item[$\bullet$] The terms $\int_0^T Z_r^{i,j,a} \mathrm{d}N_r^{j,a},$ and $\int_0^T Z_r^{i,j,b} \mathrm{d}N_r^{j,b}$ are the compensation of the $i-$th market maker with respect to the number of trades made on the ask side or bid side by the $j$-th market maker.
 \item[$\bullet$] The term $\int_0^T H^{i}\big(\hat{\delta}^{-i}(Z_r,Q_r),Z_{r}^{i},Q_{r}\big)\mathrm{d}r$ is the certain gain of the $i-$th agent induced by his maximisation problem. The principal anticipates that the agent will earn money coming from his maximisation strategy. Hence, he deducts such corresponding amount to the salary of the agent. This justifies why this term appears with a minus sign in the compensation $\xi^i$.
 \item[$\bullet$] The term $\int_0^T\frac{1}{2}\gamma_i \sigma^{2}(Z_{r}^{S,i}+Q^{i}_{r})^{2}\mathrm{d}r$ is the compensation\footnote{It corresponds to the quadratic variation of the sum of the incentive indexed on $S$ and the inventory process of the $i$-th market maker, integrated against $S$ and weighted by its risk aversion.} to balance the risk aversion of the agent with respect to the efficient price and his inventory.
 \end{enumerate}
From a representation viewpoint, the subset $\Xi^{\prime}$ means that we index the contract of the $i-$th market maker only on the aggregated order processes $N^{a}$ and $N^{b}$, and the efficient price $S$. Moreover, we restrict ourselves to a subset of the admissible contracts where the incentives with respect to the bid and ask arrival orders are equal for every agent. Hence, we do not discriminate a priori one market maker compared to another. However, the discrimination is done in the market risk part, namely $\int_{0}^{T}Z_{t}^{S,i}\mathrm{d}S_{t},$ for $i\in\{1,\dots,N\}$. This assumption is in force until the end of the paper (except in the appendix). Practically, the incentives for the $i-$th agent are only functions of his own inventory process, the aggregated order flow, and the efficient price. It appears reasonable from a practical point of view, as it means that the platform does not need to monitor cross incentives $Z^{i,j,a}$ or $Z^{i,j,b}$ for $j\neq i$, which is hard to do in practice. In addition to this, the exchange give the same incentives to the agents on the part driven by the market orders sent by market takers, but can discriminate with respect to the risk aversion parameters on the part driven by the efficient price around which market makers adjust their quotes.\\

From the technical point of view, this simplification enables to obtain an explicit formula for the fixed points of the Hamiltonian, which is not the case in the general framework. We will also see in the next section that this restriction drastically simplify computations to derive explicitly the optimal incentives that the principal provides to each market makers.
 
\section{Solving the principal's problem} \label{Section principal's problem}
Denote for all $i\in\{1,\dots,N\}$, $\hat{y}_{0}^{i}:=-\frac{1}{\gamma_{i}}\mathrm{log}(-R_{i})$. By Theorem \ref{Theorem market maker} and Corollary \ref{Corollary best response agent}, the exchange problem \eqref{PnL principal}, when restricted to contracts in $\Xi^\prime$, reduces to the control problem
\begin{align}
&\widetilde{V}_{0}^{E}:=\sup_{y_{0}\geq \hat{y}_{0}}\sup_{Z\in \mathcal{Z}}\mathbb{E}^{\Delta(Z,Q)}\Big[-\exp\Big(-\eta\big(c(N_{T}^{a}+N_{T}^{b})-Y_{T}^{y_{0},Z,\Delta(Z,Q)}\cdot 1_N \big)\Big)\Big].
\label{Reduced exchange problem}
\end{align}
Corollary \ref{Corollary best response agent} provides the best responses of the agents as a function of the control process $Z\in \mathcal{Z}$ of the principal.\footnote{In this section $\hat\delta$ is the unique Nash equilibrium of the agent's problem coming from Corollary \ref{Corollary best response agent}.} Given such response, the exchange solve \eqref{PnL principal}, with $\xi\in \Xi^{'}$, in two steps
\begin{itemize}
    \item Due to the form of utility function, the optimisation with respect to $y_0$ ensures the reservation utility constraint of the agents is satisfied.
    \item Optimisation with respect to $Z\in \mathcal{Z}$ is done by solving a classical Hamilton--Jacobi--Bellman equation associated to the reformulated control problem. 
\end{itemize}

The section ends with a verification argument to ensure that the value function coincides with \eqref{PnL principal}, and some comments on switching policy between market makers and physical interpretation of the results.

\subsection{Saturation of utility constraint}
Note that the market makers' optimal response in Corollary \ref{Corollary best response agent} does not depend on $y_{0}$. The exponential linear framework for the PnL of the principal enables to state directly that this objective function is clearly decreasing in all coordinates of $y_{0}$, implying that the maximisation under the participation constraint is achieved at $\hat{y}_{0}$
\begin{align*}
\widetilde{V}_{0}^{E}=\mathrm{e}^{\eta \hat{y}_{0}\cdot\mathbf{1}_N}\sup_{Z\in\mathcal{Z}}\mathbb{E}^{\Delta(Z,Q)}\bigg[-\exp\Big(-\eta\big(c(N_{T}^{a}+N_{T}^{b})-Y_{T}^{y_{0},Z,\Delta(Z,Q)}\cdot 1_N \big)\Big)\bigg].
\end{align*}
Hence, we are left with a maximisation problem with respect to $Z\in \mathcal{Z}$, which is nothing else than a standard stochastic control problem with the state variables $Q$, $N^a$, $N^b$ and $Y^{y_0,Z,\Delta(Z,Q)}$. 

\subsection{The HJB equation for the reduced exchange problem}

We study in this section the HJB equation corresponding to the stochastic control problem 
\begin{align}\label{Definition reduced exchange control problem}
v_{0}^{E}:=\sup_{Z\in \mathcal{Z}}\mathbb{E}^{\Delta(Z,Q)}\Big[-\exp\Big(-\eta\big(c(N_{T}^{a}+N_{T}^{b})-Y_{T}^{y_{0},Z,\Delta(Z,Q)}\cdot 1_N\big)\Big)\Big].
\end{align}
For the sake of simplicity, we define for any map $v:[0,T]\times\mathbb{Z}^N\longrightarrow (-\infty,0)$, any $x\in \mathbb{R}$, any $i\in\{1,\dots,N\}$, and any $(t,q)\in[0,T]\times\Z^N$
\begin{align*}
& v(t,q\oplus_i x):=v(t,q^{1},\dots,q^{i-1},q^{i}\!+\!x,q^{i+1},\dots,q^{N}),\; v(t,q\ominus_i x):=v(t,q^{1},\dots,q^{i-1},q^{i}\!-\!x,q^{i+1},\dots,q^{N}).
\end{align*}
We also define the maps $\mathcal{V}^{+}(t,q):=\big(v(t,q\oplus_i 1)\big)_{i=1,\dots,N}$, and $\mathcal{V}^{-}(t,q):=\big(v(t,q\ominus_i 1)\big)_{i=1,\dots,N}$, for $(t,q)\in[0,T]\times\Z^N$, as well as the set $\mathcal{Q}:=\{-\overline{q},\dots,\overline{q}\}$. The HJB equation associated to \eqref{Definition reduced exchange control problem} is
\begin{align}\label{HJB equation}
\begin{cases}
     \displaystyle   \partial_{t}v(t,q) + \mathcal{H}\big(q,\mathcal{V}^{+}(t,q),\mathcal{V}^{-}(t,q),v(t,q)\big)=0, \; (t,q)\in[0,T)\times\mathcal{Q}^{N},  \\[0.5em]
    \displaystyle    v(T,q)=-1,\; q\in\mathcal{Q}^{N},
\end{cases}
\end{align}
where $\mathcal{H}\big(q,p,m,v\big):=\mathcal{H}^{S}\big(q,v\big)+\mathcal{H}^{b}\big(q,p,v\big)+\mathcal{H}^{a}\big(q,m,v\big),$
with, for any $(q,p,\ell)\in\Qc^N\times\R^N\times\{a,b\}$
\begin{align*}
& \mathcal{H}^{S}\big(q,v\big)= \sup_{z^{S}\in\mathbb{R}^N}v\bigg(\sum_{i=1}^{N}\frac{\eta}{2}\sigma^{2}\gamma_{i}\big(z^{S,i}+q^{i}\big)^{2}+\frac{\eta^{2}\sigma^{2}}{2}\|z^S\|^2\bigg),  \\
& \mathcal{H}^{\ell}\big(q,p,v\big)=\sup_{z^{\ell}\in\mathbb{R}}\sum_{i=1}^{N}\lambda^{i,\ell}\big(\Delta^{:,\ell}(z,q),q\big)\Big(\mathrm{e}^{\eta(Nz^{\ell}-c)}p^i-v\mathcal{L}^{\ell}\big(\Delta(z,q)\big)\Big),
\end{align*}
where
\begin{align*}
\mathcal{L}^{\ell}\big(\Delta(z,q)\big)\!\!:=\!\!1\!+\!\eta\sum_{i=1}^{N}\gamma_{i}^{-1}\bigg(1\!-\!\exp\bigg(\!\!-\!\!\gamma_{i}\bigg(z\!+\!\Delta^{i,\ell}(z,q)\mathbf{1}_{\{\Delta^{i,\ell}(z,q)=\underline{\Delta}^i(z,q)\}}\!+\!\sum_{j=1}^{K}\omega_{j}\Delta^{i,\ell}(z,q)\mathbf{1}_{\{\Delta^{i,\ell}(z,q)\in K_{j}\}}\!\bigg)\!\bigg)\!\bigg).
\end{align*}
We now provide the optimal incentives corresponding to the solution of \eqref{HJB equation}.
\begin{lemma}\label{Lemma maximizers HJB}
Assume $\delta_{\infty}$ is large enough so that the condition of {\rm Lemma \ref{Lemma exchange Hamiltonian}} is verified. The optimisers in the supremum appearing in {\rm PDE} \eqref{HJB equation} are given, for any $(t,q)\in[0,T]\times\Qc^N$, by
\begin{align*}
&z^{\star,a}(t,q):=\frac{1}{N}\Bigg(c+\frac{1}{\eta}\mathrm{log}\bigg(\frac{v(t,q)}{\sum_{i\in \mathcal{G}}v(t,q\ominus_i 1)}\bigg)+\frac{1}{\eta}\mathrm{log}\bigg(\frac{k\varpi}{k\varpi+\sigma\eta}\mathrm{Card}(\mathcal{G})\bigg(1+\eta\sigma\sum_{i=1}^{N}\frac{1}{k\varpi+\sigma\gamma_{i}}\bigg)\bigg) \Bigg), \\
& z^{\star b}(t,q):=\frac{1}{N}\Bigg(c+\frac{1}{\eta}\mathrm{log}\bigg(\frac{v(t,q)}{\sum_{i\in \mathcal{G}}v(t,q\oplus_i 1)}\bigg)+\frac{1}{\eta}\mathrm{log}\bigg(\frac{k\varpi}{k\varpi+\sigma\eta}\mathrm{Card}(\mathcal{G})\bigg(1+\eta\sigma\sum_{i=1}^{N}\frac{1}{k\varpi+\sigma\gamma_{i}}\bigg)\bigg) \Bigg),\\
&z^{\star,S,i}(q):=-\sum_{j=1}^{N}\mu_{i,j}\gamma_{j}q^{j},\;  \forall i\in\{1,\dots,N\},
\end{align*}
where for all $(i,j)\in\{1,\dots,N\}^2$
\begin{align*}
\mu_{i,j}:=-\eta\kappa\prod_{k\in\{1,\dots,N\}\setminus\{i,j\}}\gamma_{k},\; \text{\rm if $i\neq j$},\; \mu_{i,i}:=\kappa \Bigg(\prod_{j\in\{1,\dots,N\}\setminus\{i\}}\gamma_{j}+\eta\sum_{j\in\{1,\dots,N\}\setminus\{i\}}\prod_{k\in\{1,\dots,N\}\setminus\{i,j\}}\gamma_{k}\Bigg),
\end{align*}
with
\begin{align*}
\kappa^{-1}:=\prod_{i=1}^{N}\gamma_{i}+\eta\sum_{j=1}^{N}\prod_{k\in\{1,\dots,N\}\setminus\{j\}}\gamma_{k},\; \text{\rm and}\; \mathcal{G}:=\Big\{i\in \{1,\dots,N\}: \gamma_{i}=\max_{j\in\{1,\dots,N\}}\gamma_j\Big\}.
\end{align*}
\end{lemma}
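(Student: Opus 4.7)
The Hamiltonian splits additively as $\mathcal{H} = \mathcal{H}^S + \mathcal{H}^a + \mathcal{H}^b$, and the three suprema are carried out over disjoint components $z^S$, $z^a$, $z^b$ (the latter two enter $\mathcal{H}^a,\mathcal{H}^b$ both directly and through the agents' best responses $\Delta$ from Lemma \ref{Lemma unicity fixed point Hamiltonian}). The proof accordingly reduces to three independent first-order-condition computations, one per block.

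\textbf{Price-incentive maximiser.} Since $v(t,q) < 0$, maximising $\mathcal{H}^S(q,v)$ amounts to minimising the strictly convex quadratic
\[
z^S \longmapsto \sum_{i=1}^N \tfrac{\eta\sigma^2\gamma_i}{2}\bigl(z^{S,i}+q^i\bigr)^2 + \tfrac{\eta^2\sigma^2}{2}\bigl(\mathbf{1}_N^\top z^S\bigr)^2.
\]
The first-order conditions form the linear system $\bigl(\operatorname{diag}(\gamma_1,\dots,\gamma_N) + \eta\,\mathbf{1}_N\mathbf{1}_N^\top\bigr)z^S = -(\gamma_i q^i)_{i=1,\dots,N}$. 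The coefficient matrix is a rank-one perturbation of a diagonal, whose inverse is computed directly from the Sherman--Morrison formula. Reading off the entries of the solution gives exactly $z^{\star,S,i}(q) = -\sum_j \mu_{i,j}\gamma_j q^j$, with $\mu_{i,i}$ and $\mu_{i,j}$ ($i\neq j$) as stated; the scalar $\kappa^{-1} = \prod_i \gamma_i + \eta\sum_j\prod_{k\neq j}\gamma_k$ appears naturally as the determinant of the coefficient matrix.

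\textbf{Order-flow maximisers.} I treat $\mathcal{H}^a$; the bid case is symmetric. Substitute the best response $\Delta^{i,a}(z,q) = -z^a + \gamma_i^{-1}\log\bigl(1+\sigma\gamma_i/(k\varpi)\bigr)$ from Lemma \ref{Lemma unicity fixed point Hamiltonian}. The function $\gamma \mapsto \gamma^{-1}\log(1+\sigma\gamma/(k\varpi))$ is decreasing on $(0,\infty)$ (since $x\mapsto x^{-1}\log(1+x)$ is), so the minimum $\underline{\Delta}^{:,a}(z,q)$ is attained precisely on $\mathcal{G}$ whenever $|q^i|<\bar q$. For such configurations, the normalising factor $\bigl(\sum_\ell \mathbf{1}_{\{x^\ell=\underline x\}}\bigr)^{-1}$ collapses on the active set, agents outside $\mathcal{G}$ drop out of $\sum_i \lambda^{i,a}$, and both $\sum_i \lambda^{i,a}$ and $\mathcal{L}^a$ become explicit closed-form functions of $z^a$. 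The supremum in $\mathcal{H}^a$ then reduces to a one-dimensional scalar problem; after factoring out the common exponential $e^{-(k\varpi/\sigma)(-z^a + \gamma^{-1}\log(1+\sigma\gamma/(k\varpi)))}$, the first-order condition becomes linear in $\exp(\eta N z^a)$, and solving for $z^a$ and taking the log produces the stated formula. The factor $\tfrac{k\varpi}{k\varpi+\sigma\eta}$ arises from the balance between the principal's exponent $\eta$ and the intensity exponent $k\varpi/\sigma$, while $1 + \eta\sigma\sum_i(k\varpi + \sigma\gamma_i)^{-1}$ is produced by the aggregation $\sum_i \gamma_i^{-1}\bigl(1 - \exp(-\gamma_i(z + \Delta^{i,a}))\bigr)$ hidden inside $\mathcal{L}^a$.

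\textbf{Main obstacle.} The delicate point is the ask/bid computation: one must simultaneously track the affine dependence of each $\Delta^{i,a}$ on $z^a$ and the explicit $e^{\eta(Nz^a - c)}$ factor, check that the $H_\ell$ penalisations carried by agents outside $\mathcal{G}$ do not survive in the optimised Hamiltonian (consistent with the paper's remark that the open covering has no impact at the optimum), and verify that at $z^{\star,a}$ all resulting $\Delta^{i,a}$ lie strictly inside $(-\delta_\infty,\delta_\infty)$, so the interior unconstrained FOC is legitimate. This last verification is exactly where the hypothesis that $\delta_\infty$ is large enough enters, and is the content of the condition of Lemma \ref{Lemma exchange Hamiltonian}.
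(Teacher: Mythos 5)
Your proposal is correct in substance, and it is essentially the argument the paper suppresses behind the phrase ``direct but tedious computations'' in Lemma~\ref{Lemma exchange Hamiltonian} (the paper itself gives no written proof of Lemma~\ref{Lemma maximizers HJB}). Two remarks are worth recording.

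First, your identification of the $z^S$-block as the quadratic
\[
z^S \longmapsto \sum_{i=1}^N \tfrac{\eta\sigma^2\gamma_i}{2}\bigl(z^{S,i}+q^i\bigr)^2 + \tfrac{\eta^2\sigma^2}{2}\bigl(\mathbf{1}_N^\top z^S\bigr)^2
\]
silently corrects what is almost certainly a typo in the displayed $\mathcal{H}^S$, which writes $\|z^S\|^2$ rather than $(\mathbf{1}_N^\top z^S)^2$. This correction is not cosmetic: with $\|z^S\|^2$ the first-order conditions decouple and give $z^{S,i}=-\gamma_i q^i/(\gamma_i+\eta)$, i.e.\ a diagonal $\mu$, which contradicts both the stated off-diagonal $\mu_{i,j}$ and the coupled structure of $C^S(q)$ in \eqref{PDE non-linear before change of variable}. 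The cross term $(\mathbf{1}_N^\top z^S)^2$ is the one that actually arises when applying Itô to $\exp(-\eta(\cdots-\xi\cdot 1_N))$, since the $dS$-contributions of the $N$ contracts aggregate before the quadratic variation is taken, and only with it do the Sherman--Morrison formulas produce $\kappa$ and $\mu_{i,j}$ as stated. So your route through the linear system and its rank-one inverse is the right one, and it is the only one that can reproduce the lemma.

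Second, your reduction of $\mathcal{H}^a$ to a scalar first-order condition is the right mechanism, but your substitution $\Delta^{i,a}(z,q)=-z^a+\gamma_i^{-1}\log(1+\sigma\gamma_i/(k\varpi))$ holds only for $i\in\mathcal{G}$; for $i\notin\mathcal{G}$ one has $\Delta^{i,a}=\omega_\ell^{-1}\Gamma^{i,a}(z)$ by Lemma~\ref{Lemma unicity fixed point Hamiltonian}. This matters because those agents, although they drop out of $\sum_i\lambda^{i,a}$ through the indicator, do \emph{not} drop out of the intensity exponent: they enter via the $H_\ell\Delta^{i,a}\mathbf{1}_{\{\Delta^{i,a}\in K_\ell\}}$ terms. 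The disappearance of $\ell$ from the result relies on the cancellation $\omega_\ell^{-1}\cdot\omega_\ell$ inside $\mathcal{L}^a$ and on the (implicit) calibration $H_\ell=\varpi\omega_\ell$ inside the intensity, both of which make each agent's contribution equal to $\varpi\Gamma^{i,a}(z)$ regardless of which $K_\ell$ they fall into; this is exactly what generates the $\varpi\sum_{i=1}^N\gamma_i^{-1}\log(1+\sigma\gamma_i/(k\varpi))$ term in the constant $C$. You gesture at this in your ``main obstacle'' paragraph, but the claim that the $H_\ell$ penalisations ``do not survive'' should be replaced by the sharper statement that they survive in an $\ell$-independent form. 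With that precision, and with the boundedness check via Lemma~\ref{Lemma exchange Hamiltonian} that you already flag, the argument is complete.
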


\medskip
The optimisers $z^{\star,a}$ and $,z^{\star,b}$ are only functions of time and the inventory of the agents. Moreover, they are very similar and share common properties with the optimal incentives $z^{\star,a}$, and $z^{\star,b}$ for the single market maker case in \cite{el2018optimal}: for example, for small inventories, they are decreasing function of the risk aversion parameters $\gamma^i$. However, the dependence on the number of market makers and their risk aversion is represented by the term
\begin{align*}
\frac{1}{\eta}\mathrm{log}\bigg(\frac{k\varpi}{k\varpi+\sigma\eta}\text{Card}(\mathcal{G})\bigg(1+\eta\sigma\sum_{i=1}^{N}\frac{1}{k\varpi+\sigma\gamma_{i}}\bigg)\bigg).    
\end{align*}
It is an increasing function of $N$ and $\varpi$, which implies that when we increase the number of market makers and $\varpi$, this term decreases the average spread. 

\medskip
Notice also that the optimal $z^{\star,S}$ depends on a weighted combination of all risk aversions and inventory processes of the agents. This is discussed more in Section \ref{Section comparaison}, where we also present our numerical results. 

\subsection{Change of variable and verification theorem}

Substituting the optima given by Lemma \ref{Lemma maximizers HJB}, PDE \eqref{HJB equation} boils down to
\begin{align}\label{PDE non-linear before change of variable}
\begin{cases}
    \displaystyle    \partial_{t}v(t,q)\!+\!v(t,q)C^{S}(q)\!-v(t,q)C\!\!\sum_{j\in\{a,b\}}\!\!\bigg(\frac{v(t,q)}{\sum_{i\in \mathcal{G}}v(t,q\ominus_i \phi(j))\mathbf{1}_{\{\phi(j)q^{i}>-\overline{q}\}}}\bigg)^{\frac{k\varpi}{\sigma\eta}}\!\!=0,\; (t,q)\in[0,T)\times\mathcal{Q}^{N},  \\
    \displaystyle    v(T,q)=-1,\; q\in\mathcal{Q}^N,
    \end{cases}
\end{align}
where we defined
\begin{align*}
C:=&\ A\exp\bigg(-\frac{k}{\sigma}\bigg(c\big(1-\varpi\big)-\frac{\varpi}{\eta}\log\bigg(\frac{k\varpi}{k\varpi+\eta\sigma}\mathrm{Card}\big(\mathcal{G}\big)\bigg(1+\eta\sigma\sum_{i=1}^{N}\frac{1}{k\varpi+\sigma\gamma_{i}}\bigg)\bigg)\\
& +\varpi\sum_{i=1}^{N}\gamma_{i}^{-1}\log\bigg(1+\frac{\sigma\gamma_{i}}{k\varpi}\bigg)\bigg)\bigg) \frac{\sigma\eta}{k\varpi+\sigma\eta}\bigg(1+\eta\sigma\sum_{i=1}^{N}\frac{1}{k\varpi+\sigma\gamma_{i}}\bigg),\\
C^S(q):=&\ \sum_{i=1}^{N}\frac{\eta}{2}\sigma^{2}\gamma_{i}\bigg(q^{i}-\sum_{j=1}^{N}\mu_{i,j}\gamma_{j}q^{j}\bigg)^{2}+\frac{\eta^{2}\sigma^{2}}{2}\bigg(\sum_{i=1}^{N}\sum_{j=1}^{N}\mu_{i,j}\gamma_{j}q^{j}\bigg)^{2}.
\end{align*}
\begin{lemma}\label{Lemma Cauchy Lipschitz}
There exists a unique bounded solution to \eqref{PDE non-linear before change of variable}, which is also negative.
\end{lemma}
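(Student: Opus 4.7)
The plan is to observe that because $\mathcal Q^N$ is finite, \eqref{PDE non-linear before change of variable} is in fact a coupled ODE system in $t$ parameterised by $q\in\mathcal Q^N$, so the proof reduces to an application of Cauchy--Lipschitz on a well-chosen open set. After passing to forward time via $\tau:=T-t$ and $w(\tau,q):=-v(T-\tau,q)$, the problem becomes
\[
\dot w(\tau,q)=w(\tau,q)\left[C^S(q)-C\!\!\sum_{j\in\{a,b\}}\!\left(\tfrac{w(\tau,q)}{\sum_{i\in\mathcal{G}}w(\tau,q\ominus_i\phi(j))\mathbf 1_{\{\phi(j)q^i>-\overline q\}}}\right)^{\!\!k\varpi/(\sigma\eta)}\right],\quad w(0,q)=1,
\]
with the convention that a $j$-term is dropped when its denominator vanishes. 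Boundedness and negativity of $v$ are equivalent to boundedness and componentwise positivity of $w$. The right--hand side above is $C^\infty$ on the open positive cone $\mathcal O_+:=(0,\infty)^{\mathcal Q^N}$, so Cauchy--Lipschitz yields a unique $\mathcal O_+$-valued maximal solution $w$ on $[0,\tau^\star)$, with $\tau^\star>T$ unless $w(\tau)$ eventually leaves every compact subset of $\mathcal O_+$.

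The work then lies in two a priori estimates on $[0,\tau^\star)$. For the upper bound, I would use that $C^S(q)$ is a sum of squares hence nonnegative, and that the nonlinear correction is $\leq 0$: this yields $\dot w(\tau,q)\leq w(\tau,q)C^S(q)$, and Gronwall gives $w(\tau,q)\leq M:=\exp(T\max_q C^S(q))$. For the lower bound, set $m(\tau):=\min_q w(\tau,q)$. At any minimiser $q^\star$, and on any side $j$ with at least one active indicator, each term in the denominator satisfies $w(\tau,q^\star\ominus_i\phi(j))\geq m(\tau)$, hence the ratio is $\leq 1$ and $\sum_j(\cdots)^{k\varpi/(\sigma\eta)}\leq 2$. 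This gives $\dot w(\tau,q^\star)\geq w(\tau,q^\star)\bigl(C^S(q^\star)-2C\bigr)\geq-2C\,m(\tau)$. Since $m$ is the minimum of finitely many $C^1$ functions it is Lipschitz and differentiable a.e., with $\dot m(\tau)\geq -2C\,m(\tau)$ at each point of differentiability; Gronwall then yields $m(\tau)\geq e^{-2CT}>0$.

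Combining the two bounds confines $w(\tau)$ to the compact set $[e^{-2CT},M]^{\mathcal Q^N}\subset\mathcal O_+$ on $[0,\tau^\star)$, and the blow--up alternative forces $\tau^\star>T$. Hence $w$ extends uniquely to $[0,T]$, and $v(t,q):=-w(T-t,q)\in[-M,-e^{-2CT}]$ is a bounded negative solution. For uniqueness among arbitrary bounded (not a priori negative) solutions, any such $\widetilde v$ is negative near $T$ by continuity of $\widetilde v$ and $\widetilde v(T,\cdot)=-1$; Cauchy--Lipschitz backward from $T$ then identifies $\widetilde v$ with $v$ there, and a continuation argument---using that on the coincidence set $\widetilde v=v\leq -e^{-2CT}$ lies strictly inside the smoothness domain of the ODE---extends this to $[0,T]$. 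The main obstacle is the lower bound: a priori the ratio inside the nonlinear term could blow up if its denominator approached $0$, but evaluating at a minimiser bounds the denominator below by a positive multiple of $m(\tau)$ itself; the apparent singularity then cancels against the $w(\tau,q^\star)$ prefactor, leaving the benign linear estimate $\dot m\geq -2C\,m$ that is independent of the nonlinearity.
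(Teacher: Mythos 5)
Your proof is correct and follows the same core approach as the paper: reduce \eqref{PDE non-linear before change of variable} to a finite system of ODEs in $t$ (since $q$ ranges over the finite set $\mathcal{Q}^N$) and invoke Cauchy--Lipschitz. The paper's proof, however, stops after noting that the right--hand side is locally Lipschitz on a bounded open set $\mathcal{S}$ compactly contained in $(-\infty,0)$, and does not actually verify that the backward flow from the terminal datum $-1$ remains in such a set on all of $[0,T]$; your two Gronwall estimates supply precisely that missing piece. The upper bound $w\leq\exp\!\big(T\max_q C^S(q)\big)$ uses $C^S\geq 0$ and nonnegativity of the nonlinear term, and the lower bound $m(\tau)\geq e^{-2CT}$ hinges on your key observation that at a minimiser the ratio inside the nonlinearity is automatically $\leq 1$, so the would-be singularity cancels against the $w(\tau,q^\star)$ prefactor and leaves the benign estimate $\dot m\geq -2Cm$; together these confine the trajectory to the compact set $[e^{-2CT},M]^{\mathcal{Q}^N}$ and rule out escape before time $T$. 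Your uniqueness argument among arbitrary bounded solutions (negativity near $T$ by continuity, backward Cauchy--Lipschitz, continuation within the compact set) is also sound. In short, same route as the paper but with the a priori compactness step, which the paper leaves implicit, carried out in full.
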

The solution of \eqref{PDE non-linear before change of variable} will be linked to the value function \eqref{Definition reduced exchange control problem} using a verification argument in the next section. Note that, if the agents have different risk aversion parameters, one market maker is both best bid and best ask. Indeed, when a market maker is simultaneously the single best bid and best ask at some time $t\in[0,T)$, the HJB equation reduces to the following linear PDE
\begin{align*}
&0=\partial_{t}u-u(t,q)\tilde{C}^{S}(q)+\tilde{C}u(t,q\oplus_i 1)\mathbf{1}_{\{q^{i}<\overline{q}\}}+\tilde{C}u(t,q\ominus_i 1)\mathbf{1}_{\{q^{i}>-\overline{q}\}},
\end{align*}
where $u:=(-v)^{-\frac{k\varpi}{\sigma\eta}}$, $\tilde C^{S}(q):=(k\varpi)/(\sigma\eta)C^{S}(q)$, and  $\tilde C:=(k\varpi)/(\sigma\eta)C$. As the other inventories are fixed when the $i-$th market maker is quoting, we obtain a tridiagonal matrix similar to the one in {\rm\cite{el2018optimal}}, indexed by $q^{i}\in \mathcal{Q}$. We emphasize that such form is valid only at the fixed time $t$.

\medskip
We conclude with the following verification theorem, which leads to the description of a unique optimal contract to be proposed by the exchange to each market maker.
\begin{theorem}\label{Theorem Verification}
Assume that $\delta_{\infty}\geq \Delta_{\infty}$, as defined in {\rm Lemma \ref{Lemma exchange Hamiltonian}}, and let $v$ be the unique solution to \eqref{PDE non-linear before change of variable} given by {\rm Lemma \ref{Lemma Cauchy Lipschitz}}. Then, for any $i\in\{1,\dots,N\}$, the optimal contract for the $i-$th agent is given by
\begin{align}
\xi^{\star,i}:=\hat{y}_{0}^i+\int_{0}^{T}Z_{r}^{\star,a}\mathrm{d}N_{r}^{a}+Z_{r}^{\star,b}\mathrm{d}N_{r}^{b}+Z_{r}^{\star,S,i}\mathrm{d}S_{r}+\bigg(\frac{1}{2}\sigma^{2}\gamma_{i}(Z_{r}^{\star,S,i}+Q_{r}^{i})^{2}-H^{i}\big(\Delta(Z^{\star}_{r},Q_{r}),Z_{r}^{\star},Q_{r}\big)\bigg)\mathrm{d}r,
\end{align}
where for any $r\in[0,T]$, $Z_{r}^{\star,S}:=z^{\star,S}(r,Q_{r^{-}})$, $Z_{r}^{\star,a}:=z^{*a}(r,Q_{r^{-}})$, $Z_{r}^{\star,b}=z^{\star,b}(r,Q_{r^{-}})$, and we note $Z^\star_r:=\big(Z_{r}^{\star,a},Z_{r}^{\star,b},Z_{r}^{\star,S}\big)$. Moreover, the optimal equilibrium is given by $\big(\Delta(Z^\star_r,Q_r)\big)_{r\in[0,T]}$, see {\rm Corollary~\ref{Corollary best response agent}}.
\end{theorem}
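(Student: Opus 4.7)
The plan is a classical verification argument for the reduced principal problem \eqref{Definition reduced exchange control problem}. Let $v$ denote the unique bounded, strictly negative solution to \eqref{PDE non-linear before change of variable} supplied by Lemma \ref{Lemma Cauchy Lipschitz}. By construction, \eqref{PDE non-linear before change of variable} is obtained from the full HJB equation \eqref{HJB equation} by plugging in the pointwise maximisers of Lemma \ref{Lemma maximizers HJB}, so $v$ will simultaneously be a classical solution of \eqref{HJB equation}. For an arbitrary admissible $Z\in\Zc$, I introduce the candidate process
\begin{align*}
M_t^Z:=v(t,Q_t)\exp\Bigl(-\eta\bigl(c(N_t^a+N_t^b)-Y_t^{\hat y_0,Z,\Delta(Z,Q)}\cdot 1_N\bigr)\Bigr),\quad t\in[0,T],
\end{align*}
whose terminal value coincides with the random variable inside the expectation of \eqref{Definition reduced exchange control problem}, since $v(T,\cdot)=-1$. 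The entire scheme consists in showing that $M^Z$ is a $\mathbb{P}^{\Delta(Z,Q)}$-supermartingale for every $Z\in\Zc$, and a true martingale at $Z=Z^\star$.

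First I would apply It\^o's formula with jumps to $M^Z$ under $\mathbb{P}^{\Delta(Z,Q)}$. After collecting the continuous contributions --- namely $\partial_t v$, the finite-variation drifts $\frac{1}{2}\sigma^2\gamma_i(Z^{S,i}+Q^i)^2-H^i$ embedded in each $Y^i$, and the It\^o correction from the Brownian volatility of $Y\cdot 1_N$ --- together with the compensated jumps of $N^{i,a}$ and $N^{i,b}$ at the respective intensities $\lambda^{i,a}(\Delta^{:,a}(Z,Q),Q)$ and $\lambda^{i,b}(\Delta^{:,b}(Z,Q),Q)$, I expect the drift of $M^Z$ to factorise as $M_{t^-}^Z$ times the bracket in \eqref{HJB equation} evaluated at $(Z_t^a,Z_t^b,Z_t^S)$, with the $H^i$ terms cancelling exactly against the non-compensated exponential jump contributions, as dictated by Definition \ref{Definition Hamiltonian agent}. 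Since $v$ solves \eqref{HJB equation}, this drift will be non-positive, and identically zero when $Z=Z^\star$ by Lemma \ref{Lemma maximizers HJB}; here the hypothesis $\delta_\infty\geq\Delta_\infty$ is essential, as it prevents the pointwise argmaxima from being truncated by the spread bounds $[-\delta_\infty,\delta_\infty]$.

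The next step is to pass from local to global (super)martingality. I would localise by a sequence $\tau_n\uparrow T$ making the martingale part genuinely integrable, take expectations to get $v(0,Q_0)\geq \mathbb{E}^{\Delta(Z,Q)}[M_{\tau_n}^Z]$ with equality at $Z=Z^\star$, and send $n\to\infty$. This is the main technical obstacle: to interchange limit and expectation one needs uniform integrability of $\{M_{\tau_n}^Z\}_n$. Since $v$ is bounded, it will be enough to bound, uniformly in $n$, a quantity of the form
\begin{align*}
\mathbb{E}^{\Delta(Z,Q)}\Bigl[\exp\bigl(\eta'\bigl(c(N_T^a+N_T^b)+|Y_T^{\hat y_0,Z,\Delta(Z,Q)}\cdot 1_N|\bigr)\bigr)\Bigr]
\end{align*}
for some $\eta'>\eta$, which follows from H\"older's inequality, the uniform upper bound $A$ on the intensities $\lambda^{i,j}$ (which provides exponential moments of every order for $N_T^a+N_T^b$), and the admissibility conditions \eqref{Integrability market maker}--\eqref{Integrability principal} baked into the definition of $\Zc$.

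Finally I need to verify that the candidate $Z^\star$ from Lemma \ref{Lemma maximizers HJB} itself lies in $\Zc$. The components $Z^{\star,S,i}$, $Z^{\star,a}$ and $Z^{\star,b}$ are deterministic functions of $(t,Q_{t^-})$: $Z^{\star,S,i}$ is linear in $Q$ which is valued in the compact set $\Qc^N$, while $Z^{\star,a}$ and $Z^{\star,b}$ depend on $v$ only through logarithms of ratios of its values at neighbouring inventories, and these are bounded since $v$ is continuous, bounded and strictly negative on the compact state space $[0,T]\times\Qc^N$. The corresponding contract $\xi^{\star,i}$ is therefore a stochastic integral with bounded integrands against $W$, $N^a$ and $N^b$ plus a bounded drift, from which \eqref{Integrability market maker}--\eqref{Integrability principal} and the remaining integrability conditions in the definition of $\Zc$ follow routinely. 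Combining this admissibility with the equality case of the supermartingale argument and with the saturation of the participation constraint argued at the beginning of Section \ref{Section principal's problem}, one deduces $\widetilde V_0^E=\mathrm{e}^{\eta\hat y_0\cdot 1_N}v(0,Q_0)$, attained by the contract $\xi^{\star,i}$, the associated Nash equilibrium being $(\Delta(Z_r^\star,Q_r))_{r\in[0,T]}$ by Corollary \ref{Corollary best response agent}.
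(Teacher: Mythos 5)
Your verification scheme is the same as the paper's: form the product $v(t,Q_t)\exp\bigl(-\eta(c(N_t^a+N_t^b)-Y_t\cdot 1_N)\bigr)$, use It\^o's formula to show the drift is $M_{t^-}^Z\bigl(h_t^Z-\mathcal H_t\bigr)\le 0$ so the process is a $\mathbb{P}^{\Delta(Z,Q)}$-supermartingale with equality at $Z=Z^\star$, and check $Z^\star\in\Zc$ from boundedness of $Q$ and of $v$ on the compact state space. This is precisely the paper's route.

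The one genuine gap is the uniform-integrability step. You reduce to bounding the terminal-time exponential moment $\mathbb{E}^{\Delta(Z,Q)}\bigl[\exp\bigl(\eta'(c(N_T^a+N_T^b)+|Y_T\cdot 1_N|)\bigr)\bigr]$ and assert this yields uniform integrability of $(M_{\tau_n}^Z)_n$ via H\"older, bounded intensities and the admissibility conditions. But the localising times $\tau_n$ are interior in $[0,T]$, so passing $n\to\infty$ really requires control of $\mathbb{E}^{\Delta(Z,Q)}\bigl[\sup_{t\in[0,T]}|M_t^Z|^{1+\epsilon}\bigr]$; a bound on a fixed terminal-time quantity does not by itself dominate the running supremum. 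The paper isolates exactly this in Lemma \ref{Lemma Verification argument}: it rewrites $K_t^Z$ through the auxiliary processes $\overline Y_t^{i,Z,\Delta}$, uses the fact established in Step~2 of the proof of Theorem \ref{Theorem market maker} that $\mathrm{e}^{-\gamma_i\overline Y^i_t}$ is a $\mathbb{P}^{\hat\delta}$-martingale, converts this via Jensen's inequality to a positive submartingale property for $\mathrm{e}^{\eta\overline Y^i_t}$, and only then applies Doob's maximal inequality to trade $\sup_t$ for the terminal value, finishing with H\"older and the exponential moments of $N^a_T+N^b_T$. Your sketch never invokes Doob's inequality and never supplies the (sub)martingale structure that makes Doob applicable, so the uniform-integrability claim does not yet follow from the ingredients you list; without Lemma \ref{Lemma Verification argument} or an equivalent, the supermartingale is only local and the chain of inequalities does not close.
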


\subsection{Discussion}\label{Section Discussion}

\subsubsection{Switching policy}

We want to determine which market maker is the best one at the beginning of the trading period. For any $(i,j)\in\{1,\dots,N\}^2$ such that $i\neq j$, the $i-$th market maker has the best ask quotation at time $t\in[0,T]$ if and only if
\begin{align*}
& -Z_{t}^{\star,a}+\frac{1}{\gamma_{i}}\mathrm{log}\bigg(1+\frac{\sigma\gamma_{i}}{k\varpi}\bigg)<-Z_{t}^{\star,a}+\frac{1}{\gamma_{j}}\mathrm{log}\bigg(1+\frac{\sigma\gamma_{j}}{k\varpi}\bigg).
\end{align*}
Since the term $\frac{1}{\gamma_{i}}\mathrm{log}\big(1+\frac{\sigma\gamma_{i}}{k\varpi}\big)$ is a decreasing function of $\gamma_{i}$, we conclude that the $i-$th market maker trades first if and only if $\gamma_{i}=\max_{j\in\{1,\dots,N\}}\gamma_{j}$, and we have 
\begin{align*}
& \bigg(\frac{v(t,Q_t)}{\sum_{j\in \mathcal{G}}v(t,Q_t\ominus_j 1)}\bigg)^{\frac{k\varpi}{\sigma\eta}} =\bigg(\frac{v(t,Q_t)}{v(t,Q_t\ominus_i 1)}\bigg)^{\frac{k\varpi}{\sigma\eta}}.
\end{align*} 

\medskip
We now define when there is a switching between two agents on the ask side. The $N-1$ other marker makers (recall that $j\neq i$) will place their quotes among the open covering of $[0,\delta_\infty]$. Assume that $\Delta^{i,a}(Z^{\star}_t,Q_t)\in K_u$, for some $u\in\{1,\dots,K\}$. Then
\begin{align}\label{Equivalent condition switching policy}
\Delta^{i,a}(Z^{\star}_t,Q_t)>\Delta^{j,a}(Z^{\star}_t,Q_t) \Longleftrightarrow -Z_{t}^{\star,a}+\frac{1}{\gamma_{i}}\mathrm{log}\bigg(1+\frac{\sigma\gamma_{i}}{k\varpi}\bigg) >\frac{1}{\omega_{u}}\bigg(-Z_{t}^{\star,a}+\frac{1}{\gamma_{j}}\mathrm{log}\bigg(1+\frac{\sigma\gamma_{j}}{k\varpi}\bigg)\bigg),
\end{align}
which can be rewritten as
\begin{align*}
\mathrm{log}\bigg(\frac{u(t,Q_{t^{-}})}{u(t,Q_{t^{-}}\ominus_i 1)}\bigg)>&\ \frac{k\varpi}{\sigma}\bigg(\mathrm{log}\bigg(\frac{k\varpi}{k\varpi+\sigma\eta}\bigg(1+\eta\sigma\sum_{i=1}^{N}\frac{1}{k\varpi+\sigma\gamma_{i}}\bigg)\bigg)+ c\bigg) + \frac{kN\varpi\omega_{u}}{(\omega_{u}-1)\sigma\gamma_{i}}\mathrm{log}\bigg(1+\frac{\sigma\gamma_{i}}{k\varpi}\bigg) \\
&+\frac{kN\varpi}{\sigma\gamma_{j}(\omega_{u}-1)}\mathrm{log}\bigg(1+\frac{\sigma\gamma_{j}}{k\varpi}\bigg).
\end{align*}
The right--hand side of the inequality is an increasing function of $\varpi$ and a decreasing function of $\omega_u$ and the volatility $\sigma$. These results are completely symmetric for the bid side. Following, \cite{el2018optimal} the previous equations shows that there is a switching between market makers on the ask side when the $i$--th market maker holds a sufficiently negative inventory. This is because he is willing to attract bid order to mean revert his inventory towards zero. Hence, he proposes a lower spread on the bid side, and a higher spread on the ask side to discourage ask orders. Symmetric conclusion holds for the bid side.

\subsubsection{On the number of market makers}

As the value function $v$ depends implicitly on the number of market makers through the terms $C^S(q)$ and $C$, we cannot directly maximize it with respect to $N$. However, using additional assumptions and working in an asymptotic setting, we show in this section that the optimal number of market makers is finite. Numerical computations of $N$ will then be given in Section \ref{Section comparaison}. We use $N$ as a subscript to highlight the dependence of the functions with respect to the number of market makers.  

\medskip
Let us define 
\[
v_N (0,Q_0) = c(N_T^a + N_T^b) - Y_{T}^{y_{0},Z^\star,\Delta(Z^\star,Q)}\cdot 1_N.
\]
As the market makers' inventory mean revert toward zero, it is reasonable to study the behaviour of $v_N(t,Q_t)$ when $Q_t=0$. In that case, $\log\Big(\frac{v(t,q)}{\sum_{i=1}^N v(t,q\ominus_i 1)}\Big)=\log\Big(\frac{v(t,q)}{\sum_{i=1}^N v(t,q\oplus_i 1)}\Big)\approx 0$ and 
\begin{align*}
    Z_t^{\star,j} \underset{Q_t\rightarrow 0}{=} \frac{1}{N}\bigg( c + \frac{1}{\eta}\mathrm{log}\bigg(\frac{k\varpi}{k\varpi+\sigma\eta}N\bigg(1+\eta\sigma\sum_{i=1}^{N}\frac{1}{k\varpi+\sigma\gamma_{i}}\bigg)\bigg)\bigg), \; j\in \{a,b\}.
\end{align*}
By taking expectations, we obtain 
\begin{align*}
    \mathbb{E}^{\Delta(Z^\star,0)}\big[v_N(0,Q_0)\big] = \mathbb{E}^{\Delta(Z^\star,0)}\bigg[\int_0^T & \Big( -B_N\Big(\lambda_N^a\big(\Delta^a(Z_t^\star,0),0\big)+\lambda_N^b\big(\Delta^b(Z_t^\star,0),0\big)\Big)  +\mathcal{W}_N(Z_t^\star)\Big)\mathrm{d}t \bigg]. 
\end{align*}
where 
\begin{align*}
    & B_N=\frac{1}{\eta}\mathrm{log}\bigg(\frac{k\varpi}{k\varpi+\sigma\eta}N\bigg(1+\eta\sigma\sum_{i=1}^{N}\frac{1}{k\varpi+\sigma\gamma_{i}}\bigg)\bigg),\\
    &  \mathcal{W}_N(Z_t^\star) = \sum_{i=1}^N \big( H_N^i (\Delta(Z_t^\star,0) - \frac{\hat{y}_0^i}{T}\big)\Big).
\end{align*}
For the sake of simplicity, assume that all the risk aversion parameters $\gamma_i$ are of the same magnitude (i.e $\gamma_i = \gamma$ for all $i$) so that
\begin{align*}
    \sum_{i=1}^N H_N^i \big(\Delta(Z_t^\star,0),Z_t^\star,0\big) =& \sum_{j=a,b} N\frac{\sigma}{k\varpi + \sigma \gamma} \lambda_N^j\big(\Delta^j(Z_t^\star,0),0\big), \\
    B_N =&\ \frac{1}{\eta}\log \bigg(\frac{k\varpi}{k\varpi+\sigma\eta}N\bigg(1+\frac{N\eta\sigma}{k\varpi + \sigma \gamma}\bigg)\bigg), \\
    \lambda_N^j\big(\Delta^j(Z_t^\star,0),0\big)  =&\  A\exp\bigg(-\frac{k}{\sigma}\bigg(c(1-\varpi) -\varpi \bigg( - \frac{N}{\gamma}\log\bigg(1+\frac{\sigma\gamma}{k\varpi}\bigg) \\
   &\hspace{3.7em}+\frac{1}{\eta}\log \bigg(\frac{k\varpi}{k\varpi+\sigma\eta}N\bigg(1+\frac{N\eta\sigma}{k\varpi + \sigma \gamma}\bigg)\bigg) \bigg) \bigg)\bigg).
\end{align*}
We see that $\lambda_N^j\big(\Delta^j(Z_t^\star,0),0\big)\rightarrow_{N\rightarrow +\infty} 0$. This implies that a too high number of market makers with comparable risk aversion will, on average, decrease the liquidity available on the market and therefore decrease the profits of the platform.

\medskip
Finally, define the (same) reservation utility of the market makers as $\hat{y}_0^i=\frac{k\varpi}{\sigma}\log\big(w_N(0,Q_0)\big)$, where $w_N(0,Q_0)$ is the value function of the market maker when $\xi=0$ and every agent has the same risk aversion parameter. We obtain 
\begin{align*}
      \mathcal{W}_N(Z_t^\star) = N \bigg( \frac{\sigma}{k\varpi + \sigma\gamma} \sum_{j=a,b}\lambda_N^j\big(\Delta^j(Z_t^\star,0),0\big) - \frac{k\varpi}{\sigma\eta}\log\big(w_N(0,Q_0)\big)\bigg). 
\end{align*}
As the optimal market making solutions have a stationary behavior when $T$ is sufficiently large, we approximate the value function $w_N$ using a Taylor expansion with respect to $T$
\begin{align*}
    w_N (t,Q_t) \approx 1+ 2\hat{C}_N(T-t),
\end{align*}
where $\hat{C}_N \approx 2 \frac{\sigma}{k\varpi + \sigma\gamma} A\exp\big(-\frac{k}{\sigma}(c+\frac{\varpi N}{\gamma}\log(1+\frac{\sigma\gamma}{k\varpi})\big)$. Therefore, we obtain
\begin{align*}
    \mathcal{W}_N(Z^\star) \approx &\ 2 \frac{A\sigma}{k\varpi + \sigma\gamma} N \exp\Big(-\frac{k}{\sigma}\big(c+\frac{\varpi N}{\gamma}\log(1+\frac{\sigma\gamma}{k\varpi})\big)\Big) \\
    & \times \Bigg( \exp\Bigg(\frac{k\varpi}{\sigma}\Bigg(c+\frac{1}{\eta}\log\bigg(\frac{k\varpi}{k\varpi+\sigma\eta}N\bigg(1+\frac{N\eta\sigma}{k\varpi+\sigma\gamma}\bigg)\bigg)\Bigg)\Bigg) - \frac{k\varpi}{\sigma\eta}\Bigg).
\end{align*}
We finally set 
\begin{align*}
    \mathcal{S}(N) = &\mathcal{W}_N(Z^\star) -2\lambda_N^a\big(\Delta^a(Z^\star,0),0\big)B_N. 
\end{align*}
This function is differentiable with respect to $N$, and we obtain $\mathcal{S}(0)>0$. By decreasing of the exponential, $\lim_{N\rightarrow +\infty}\mathcal{S}(N)=0$. Moreover, thanks to simple but tedious computations, we have $\lim_{N\rightarrow 0^+} \frac{\partial \mathcal{S}(N)}{\partial N} >0$. These conditions guarantee the existence of, at least, one global maximum of the function $\mathcal{S}(N)$. \\

We observe that, for the same set of parameters than in Section \ref{Section comparaison}, the function $\mathcal{S}(N)$ attains it supremum for $N\approx 3$, which corresponds to the optimal number of market makers found numerically with the resolution of the HJB equation.

\subsubsection{On the choice of the weights}

Numerical experiments show, when $N\longrightarrow +\infty$, a decrease of the intensity of the market orders, and a slight increase of the average bid--ask spread, other parameters being fixed. However, an increase of $\varpi$ decrease the average best bid--ask spread, as well as the PnL of the platform, increase the total order flow and decrease the trading cost. Recall that there is a trade--off between an increase of the order flow, and the amount of incentive given to the market participants. Moreover, increasing the competition between market makers leads to an increase of their reservation utility, which is costly for the principal. Recall that we designed the aggregated intensity to be a decreasing function of a weighted sum of the spreads quoted by the agents. In practice, the intensity of arrival orders mainly depends on the best quote $\underline{\delta}$, that is to say for $j\in\{a,b\}$, and $t\in[0,T)$
\begin{align*}
&\lambda^{j}(\delta_{t}^{:,j},Q_t)=A\exp\bigg(-\frac{k}{\sigma}\Big(c+\varpi \sum_{i=1}^{N}\delta_{t}^{i,j}\mathbf{1}_{\{\delta_{t}^{i,j}=\underline \delta_{t}^{j} \}}+\sum_{i=1}^{N}\sum_{\ell=1}^{K}H_{\ell}\delta_{t}^{i,j}\mathbf{1}_{\{\delta_{t}^{i,j}\in K_{\ell} \}}\Big)\bigg)\approx A\exp\Big(-\frac{k}{\sigma}\big(c+\underline \delta_{t}^{j}\big)\Big).
\end{align*}
Assume that $\mathcal{G}=\{i\},$ $\varpi=\frac{1}{N}$. The optimal quotes in Theorem \ref{Theorem Verification} become, for $j\in\{a,b\}$
\begin{align*}
\underline \Delta^{j}(Z_{t}^{\star}\!,\!Q_t)\!=\!\frac{\sigma}{k}\mathrm{log}\bigg(\!\frac{u(t,Q_{t^{-}})}{u(t,Q_{t^{-}}\ominus_i \phi(j))}\!\bigg)\!\!+\!\!\frac{1}{\gamma_{i}}\mathrm{log}\bigg(\!1\!+\!\frac{\sigma\gamma_{i}N}{k}\bigg) \!\!-\!\!\frac{1}{N}\bigg(\!c\!+\!\frac{1}{\eta}\mathrm{log}\bigg(\frac{k}{k+\sigma\eta N}\!\bigg(\! 1\!+\!\eta\sigma\!\sum_{i=1}^{N}\frac{N}{k+\sigma\gamma_{i}N}\!\bigg)\!\bigg)\!\bigg).
\end{align*}
Hence, when the number of market maker increases, the last term corresponding to the incentive given by the principal vanish to zero and we are left with, for $j\in\{a,b\}$
\begin{align*}
& \underline\Delta^{j}(Z_{t}^{\star},Q_t)\approx \frac{\sigma}{k}\mathrm{log}\bigg(\frac{u(t,Q_{t^{-}})}{u(t,Q_{t^{-}}\ominus_i \phi(j))}\bigg)+\frac{1}{\gamma_{i}}\mathrm{log}\bigg(1+\frac{\sigma\gamma_{i}N}{k}\bigg).
\end{align*}
It therefore converges toward the form of spread given when there is no contract, but with a different value function.

\subsubsection{On the form of the incentives}

The quantities $z^{\star, j}$, $j\in\{a,b\}$, defined in Lemma \ref{Lemma maximizers HJB}, are decreasing function of the number of market makers. Hence, the principal is limited in the amount of incentives he can provide to the agents. This can be viewed as a cake whose size increase slower than the number of people who eats it. Hence, each market maker receive less incentive to decrease their spread in our case of a uniform incentive and an increasing number of market makers.

\medskip
About the risk aversion of the additional market makers, adding a player with a small risk aversion increase the quantity $z^{\star, j}$, $j\in\{a,b\}$. This means that adding a less risk adverse player increase the capacity of the principal to offer incentive to reduce the average spread and conversely. 

\medskip
We have found processes $Z^{\star,a},Z^{\star, b},Z^{\star,S}$ fixed by the principal in order to build optimal contracts for every market makers. The assumption that the exchange chooses a priori the same incentives on the arrival orders for each market maker is quite natural, since in practice the principal may not know the risk aversions of each market maker. When a market maker is simultaneously best bid and best ask, we recover the result from \cite[Proposition 4.1]{el2018optimal}, that the terms $-\mathrm{log}\big(\frac{u(t,Q_{t^{-}})}{u(t,Q_{t^{-}}\ominus_i 1)}\big)$ and $-\mathrm{log}\big(\frac{u(t,Q_{t^{-}})}{u(t,Q_{t^{-}}\oplus_i 1)}\big)$ are roughly proportional to, respectively, $Q_{t^{-}}^{i}$ and $-Q_{t^{-}}^{i}$. The interpretation is the same: the exchange provides incentives to the market makers to keep their inventory not too large. 

\medskip
An interesting difference comes from the integrals $\int_{0}^{T}Z_{r}^{\star,S,i}\mathrm{d}S_{r}$ . As in \cite{el2018optimal}, it is still understood as a risk sharing term. However, each of the $Z^{\star,S,i}$ is a weighted function of both $\gamma_i$ and the other risk aversions $\gamma_j$, $j\neq i$. Indeed, when the risk aversion of the $i-$th market maker increases, $Z^{\star,S,i}$ decreases. When the risk aversions of the $N-1$ other market makers increase, $Z^{\star,S,i}$ increases and conversely.

\subsubsection{On the taker cost policy}
When the $i-$th market maker is simultaneously best bid and best ask, the exchange can fix a relevant value of the taker cost $c$ as in \cite{el2018optimal}. From numerical computations
\begin{align*}
\frac{u(t,q)^{2}}{u(t,q\oplus_i 1)u(t,q\ominus_i 1)}\approx 1,\; \text{for all } (t,q)\in [0,T]\times\mathbb{Z}^N.
\nonumber
\end{align*}
Hence the exchange may fix in practice the transaction cost $c$ so that the average best spread is close to one tick by setting 
\begin{align*}
c \approx -\frac{1}{2N}\text{Tick}-\frac{1}{\eta N}\mathrm{log}\bigg(\frac{k\varpi}{k\varpi+\sigma\eta}\bigg(1+\eta\sigma\sum_{i=1}^{N}\frac{1}{k\varpi+\sigma\gamma_{i}}\bigg)\bigg)+\frac{1}{\gamma_{i}N}\mathrm{log}\bigg(1+\frac{\sigma \gamma_{i}}{k\varpi}\bigg).
\end{align*}
When $\sigma\eta/k\varpi$, and $\sigma\gamma_{i}/k\varpi$ are small enough for all $ i\in\{1,\dots,N\}$, this equation reduces to 
\begin{align*}
c\approx \frac{1}{N}\bigg(\frac{\sigma}{k\varpi}-\frac{1}{2}\text{Tick}\bigg).
\end{align*}
We therefore find a similar formula to the one in the case $N=1$, and notice that it is a decreasing function of the number of market makers, with $\varpi=\frac{1}{N}$. As $\sigma$ and $k$ can be estimated in practice using market data, this is a particularly useful rule of thumb to determine the taker cost $c$. However, when one market maker is the best bid and another one is the best ask, the approximation $\frac{u(t,q)^{2}}{u(t,q\oplus_i 1)u(t,q\ominus_i 1)}\approx 1$ is no longer valid. Hence, the exchange has the choice either to stay with the previous rule of thumb, or to monitor a time-dependent taker cost given by
\begin{align*}
& c(t,q) \!\approx \!-\!\frac{1}{2N}\text{Tick}\!-\!\frac{1}{\eta N}\!\bigg(\!\mathrm{log}\!\Big(\!\frac{u(t,q)^{2}}{u(t,\!q\!\ominus_i\! 1)u(t,\!q\!\oplus_j\!1)}\!\Big)\!+\!\mathrm{log}\Big(\!\frac{k\varpi}{k\varpi\!+\!\sigma\eta}\!\big(1\!+\!\eta\sigma\!\sum_{i=1}^{N}\!\frac{1}{k\varpi\!+\!\sigma\gamma_{i}}\!\big)\!\Big)\!\bigg)\!\!+\!\! \frac{1}{\gamma_{i}N}\mathrm{log}\Big(\!1\!\!+\!\!\frac{\sigma \gamma_{i}}{k\varpi}\!\Big).
\end{align*}
where the $i-$th agent is the best ask, and the $j-$th is the best bid.

\section{Impact of the presence of several market makers} \label{Section comparaison}

In this section, we compare our results with the ones given in \cite{el2018optimal}. 

\subsection{One market maker}
As a sanity check, we want to recover the results of \cite{el2018optimal}. We take the same numerical values for the parameters, namely $T = 600s$ for an asset with volatility $\sigma = 0.3$ Tick.$s^{-1/2}$ (unless specified differently). Market orders arrive according to the intensities described in Section \ref{Section The Model}, with $A = 1.5s^{-1}$ and $k = 0.3s^{-1/2}$. We have $\overline{q}=50$, $\gamma=0.01, \eta=1$, $c=0.5\text{Tick}$ and $\varpi=1$. We directly present the results of our model

\begin{figure}[!ht] 
\begin{minipage}[c]{.46\linewidth}
     \begin{center}
             \includegraphics[width=7cm]{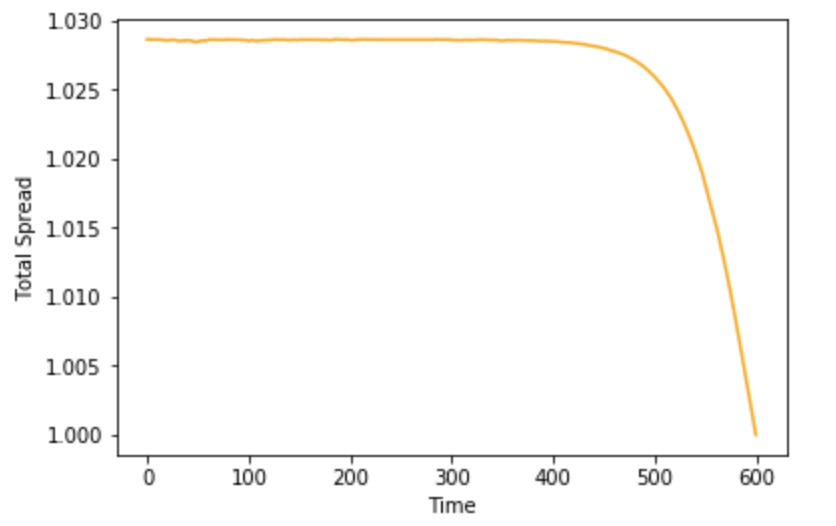}
             \caption{Total spread for 1 market maker}\label{total spread 1MM}
         \end{center}
   \end{minipage} \hfill
   \begin{minipage}[c]{.46\linewidth}
    \begin{center}
            \includegraphics[width=7cm]{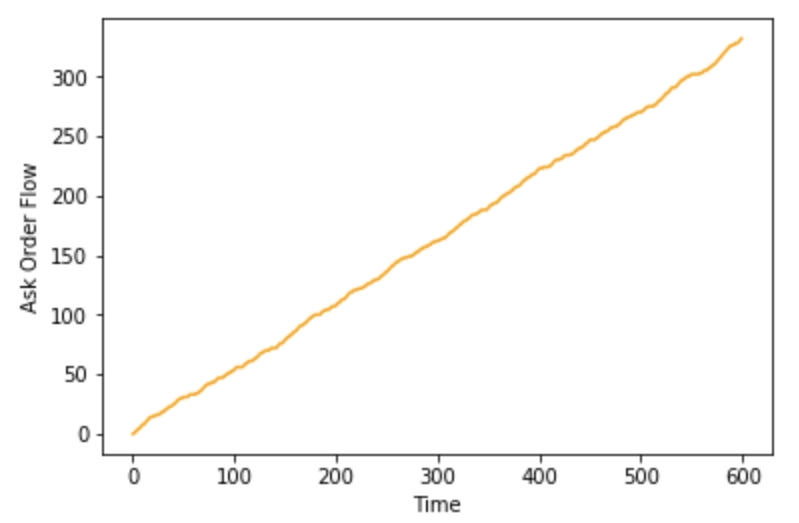}
            \caption{Ask order flow for 1 market maker}\label{ask order 1MM}
        \end{center}
\end{minipage}
\end{figure}

\begin{figure}[!ht]
\begin{minipage}[c]{.46\linewidth}
     \begin{center}
             \includegraphics[width=7cm]{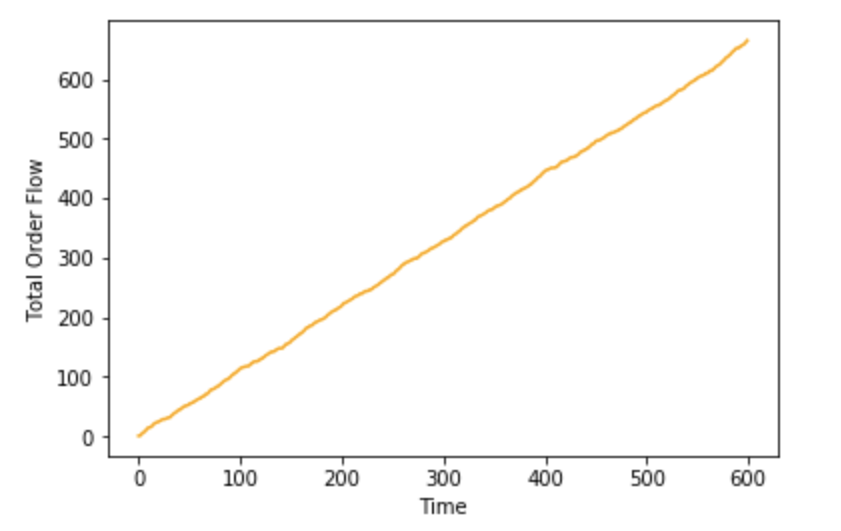}
             \caption{Total order flow for 1 market maker}\label{total order flow 1MM}
         \end{center}
   \end{minipage} \hfill
   \begin{minipage}[c]{.46\linewidth}
    \begin{center}
            \includegraphics[width=7cm]{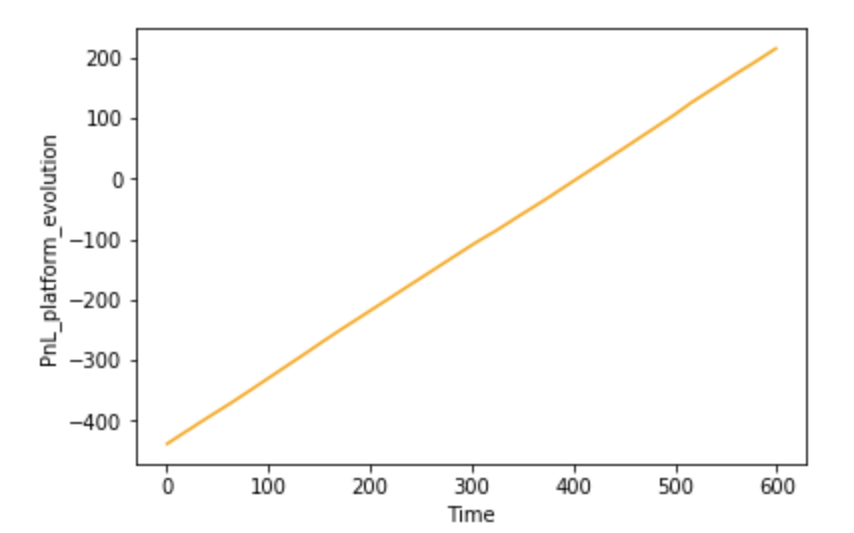}
            \caption{PnL of the exchange for 1 market maker}\label{PnL Exchange 1MM}
        \end{center}
\end{minipage}
\end{figure}

\begin{figure}[!ht]
\begin{center}
    \includegraphics[width=7cm]{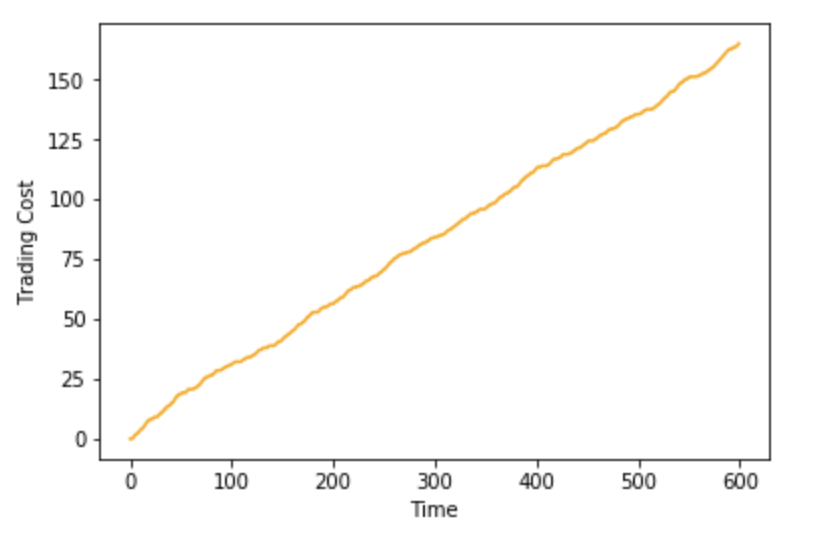}
    \caption{Trading cost for 1 market maker}\label{trading cost 1MM}
\end{center}
\end{figure}

We see in Figures \ref{total spread 1MM} to \ref{trading cost 1MM} that we recover the results obtained in \cite{el2018optimal}. We now turn to the case $N\geq 2$. 

\subsection{Two market makers}

We first begin with the average spread in the case $N=2$, with $\varpi=\frac{1}{2}$. The brackets in the title of the figures denote the set of risk aversion of the agents.

\begin{figure}[!ht]
\begin{minipage}[c]{.46\linewidth}
     \begin{center}
             \includegraphics[width=7cm]{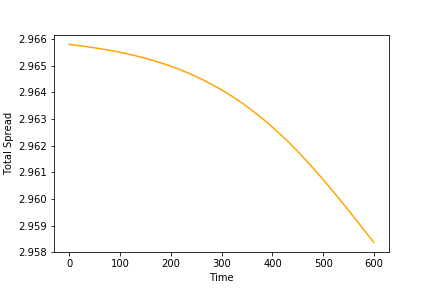}
             \caption{Total spread for $N=2$, $[0.01,0.001]$}\label{TotalSpread2MMwmin050001001}
         \end{center}
   \end{minipage} \hfill
   \begin{minipage}[c]{.46\linewidth}
    \begin{center}
            \includegraphics[width=7cm]{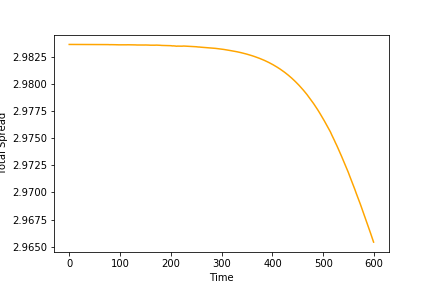}
            \caption{Total spread for $N=2$, $[0.01,0.01]$}\label{TotalSpread2MMwmin05001}
        \end{center}
\end{minipage}
\end{figure}

\begin{figure}[!ht]
\begin{center}
    \includegraphics[width=7cm]{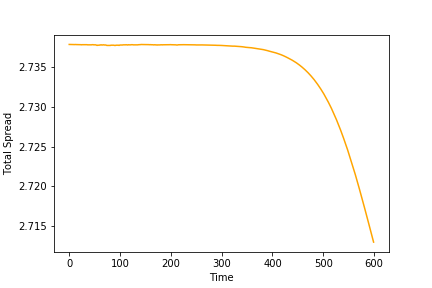}
    \caption{Total spread for $N=2$, $[0.01,0.1]$}\label{TotalSpread2MMwmin0500101}
\end{center}
\end{figure}
We can see in Figures \ref{TotalSpread2MMwmin050001001}, \ref{TotalSpread2MMwmin05001} and \ref{TotalSpread2MMwmin0500101} an increase of the total spread compared to the case $N=1$. As explained in Section \ref{Section Discussion}, this is due to the fact that the quantities $z^{\star j},j=a,b$ are decreasing function of $N$. Hence the incentive given to each market maker is less important than in the case $N=1$. In addition to this, adding a market maker with a higher risk aversion decrease the total spread and conversely. 

\medskip
Such spread induces a decrease of total order flow, see Figure \ref{TotalOrderFlow2MMwmin05001}, compared to the case $N=1$. For sake of simplicity we only present the results for two market makers with same risk aversion.

\begin{figure}[!ht]
\begin{minipage}[c]{.46\linewidth}
     \begin{center}
             \includegraphics[width=7cm]{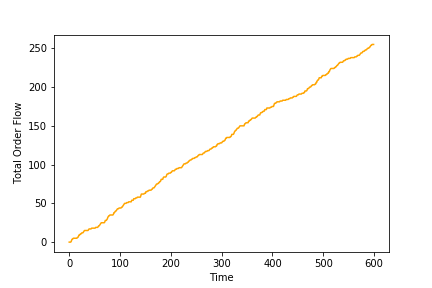}
             \caption{Total order flow for $N=2$, $[0.01,0.01]$}\label{TotalOrderFlow2MMwmin05001}
         \end{center}
   \end{minipage} \hfill
   \begin{minipage}[c]{.46\linewidth}
    \begin{center}
            \includegraphics[width=7cm]{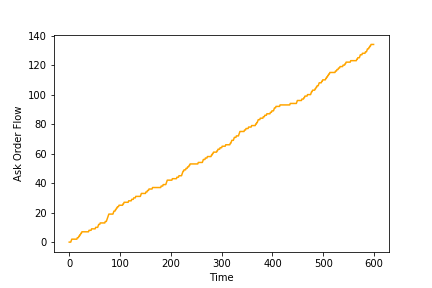}
            \caption{Ask order flow for $N=2$, $[0.01,0.01]$}\label{AskOrderFlow2MMwmin05001}
        \end{center}
\end{minipage}
\end{figure}
Similar results occurs for different risk aversion parameters, except that the decrease of order flow is less important with a second market maker with a higher risk aversion parameter and conversely. This also has an impact on the trading cost and the PnL of the platform, as it can be seen in Figures \ref{PnLPlatform2MMwmin05001},\ref{TradingCost2MMwmin05001}.

\begin{figure}[!ht]
\begin{minipage}[c]{.46\linewidth}
     \begin{center}
             \includegraphics[width=7cm]{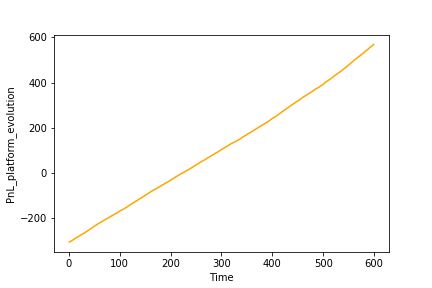}
             \caption{PnL of the exchange for $N=2$, $[0.01,0.01]$}\label{PnLPlatform2MMwmin05001}
         \end{center}
   \end{minipage} \hfill
   \begin{minipage}[c]{.46\linewidth}
    \begin{center}
            \includegraphics[width=7cm]{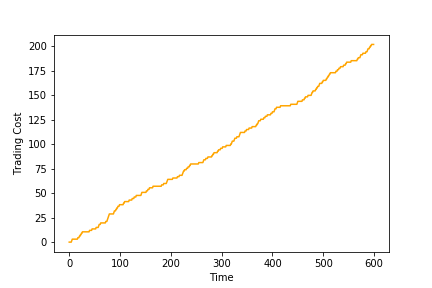}
            \caption{Trading cost for $N=2$, $[0.01,0.01]$}\label{TradingCost2MMwmin05001}
        \end{center}
\end{minipage}
\end{figure}

We can see in Figure \ref{TradingCost2MMwmin05001} an increase of the trading cost due to a mixed effect of the decrease of order flow, and an increase of the total spread, see Figures \ref{TotalSpread2MMwmin050001001},\ref{TotalSpread2MMwmin05001} and \ref{TotalOrderFlow2MMwmin05001}. However, we see an increase in the PnL of the exchange, mainly due to the fact that the reservation utility for every agent $\hat{y}_{0}^{i}:=\frac{k\varpi}{\sigma}\mathrm{log}\big(u(0,Q_{0})\big)\; i\in\{1,\dots,N\}$ is less important than in the case $N=1$. 

\subsection{Five market makers}
This case aims at illustrating what happens when we increase again the number of market makers. For sake of simplicity we only illustrate the case of market makers having the same risk aversion parameter equal to 0.01.

\begin{figure}[!ht]
\begin{minipage}[c]{.46\linewidth}
     \begin{center}
             \includegraphics[width=7cm]{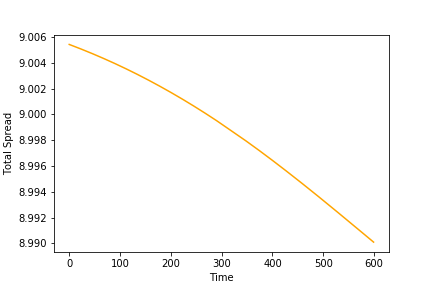}
             \caption{Total spread for $N=5$}\label{TotalSpread5MMunweighted}
         \end{center}
   \end{minipage} \hfill
   \begin{minipage}[c]{.46\linewidth}
    \begin{center}
            \includegraphics[width=7cm]{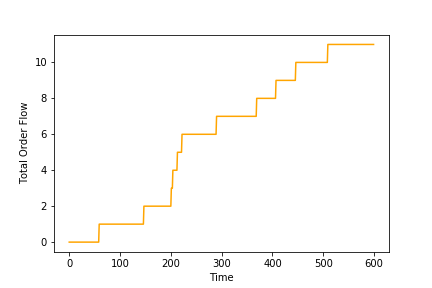}
            \caption{Total order flow for $N=5$ }\label{TotalOrderFlow5MMunweighted}
        \end{center}
\end{minipage}
\end{figure}

\begin{figure}[!ht]
\begin{minipage}[c]{.46\linewidth}
     \begin{center}
             \includegraphics[width=7cm]{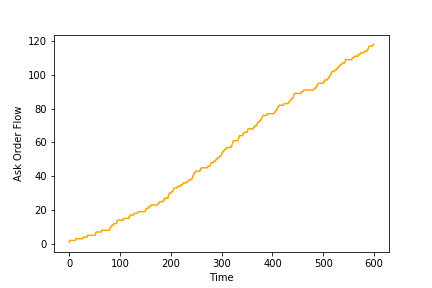}
             \caption{Ask order flow for $N=5$ }\label{WeightedAskOrderFlow2MMwmin05001}
         \end{center}
   \end{minipage} \hfill
   \begin{minipage}[c]{.46\linewidth}
    \begin{center}
            \includegraphics[width=7cm]{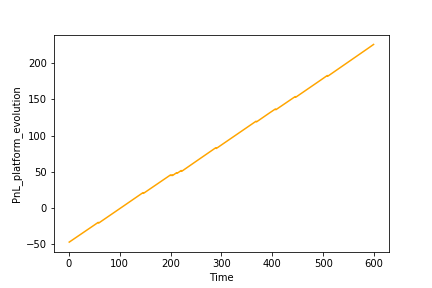}
            \caption{PnL of the exchange for $N=5$ }\label{PnLPlatform5MMunweighted}
        \end{center}
\end{minipage}
\end{figure}

\begin{figure}[!ht]
\begin{minipage}[c]{.46\linewidth}
\begin{center}
    \includegraphics[width=7cm]{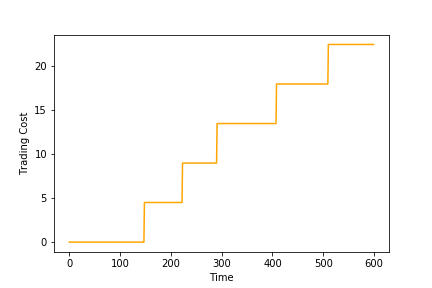}
    \caption{Trading cost for $N=5$}\label{TradingCost5MMunweighted}
\end{center}
   \end{minipage} \hfill
   \begin{minipage}[c]{.46\linewidth}
   \begin{center}
    \includegraphics[width=7cm]{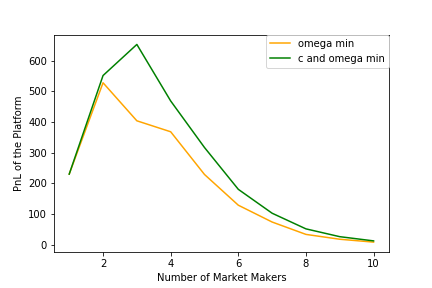}
    \caption{Evolution of the PnL of the platform with the number of market makers}\label{PlatformComparisonPnL}
\end{center}
   \end{minipage}
   
\end{figure}

As expected, we obtain in Figure \ref{TotalSpread5MMunweighted} a higher total spread, which implies a decrease of the order flow, see Figure \ref{TotalOrderFlow5MMunweighted}. However, in Figure \ref{PnLPlatform5MMunweighted}, the PnL of the platform has decreased compared to the case $N=2$. This means that it is not optimal for the platform to attract an infinite number of market makers. We conclude these numerical experiments with Figure \ref{PlatformComparisonPnL} showing how the PnL of the platform evolve with the number of market makers. We emphasise here that what is important is not the risk aversion of the market makers added to the market: this has an impact on the PnL of the platform but not on the trend of the graph. Hence, we add market makers with the same risk aversion equals to $0.01$. 

\medskip
In Figure \ref{PlatformComparisonPnL}, there are two different plots. The orange one is with $\varpi=\frac{1}{N}$ and $c$ the taker cost being fixed. The red one is with $\varpi=\frac{1}{N}$ and $c=\frac{1}{N}\big(\frac{\sigma}{k}-\frac{1}{2}\text{Tick}\big)$ as stated in the previous section. We can see that without an optimal taker cost policy, the optimal number of market makers for the platform is at $N=2$, other parameters being fixed. However, with an optimal policy, the platform is encouraged to add another market maker to increase its PnL. It is also worth noting that in both cases, the platform can add up to 4 market makers and still have a PnL higher than in the case $N=1$. 

\begin{appendix}
\section{Appendix}

\subsection{Dynamic programming principle}

For any $i\in\{1,\dots,N\}$, any $\mathbb{F}-$predictable stopping times $\tau$ taking values in $[0,T]$, any admissible contract vector $\xi\in\mathcal{C}$, any $2(N-1)-$dimensional $\F-$predictable process $\delta^{-i}$, bounded by $\delta_\infty$, and for all $\delta \in \mathcal{A}^i(\delta^{-i})$, we define
\begin{align*}
&J^{i}(\xi^i,\tau,\delta,\delta^{-i}):=\mathbb{E}_{\tau}^{\delta\otimes_{i}\delta^{-i}}\bigg[-\exp\bigg(-\gamma_{i}\int_{\tau}^{T}\delta_{u}^{a}\mathbf{1}_{\{\delta_{u}^{a}=\underline{\delta_u^a\otimes_i\delta_u^{a,-i}}\}}\mathrm{d}N_{u}^{a}+\delta_{u}^{b}\mathbf{1}_{\{\delta_{u}^{b}=\underline{\delta_u^b\otimes_i\delta_u^{b,-i}}\}}\mathrm{d}N_{u}^{b}+Q_{u}^{i}\mathrm{d}S_{u} \\
&\hspace{12.2em}+\sum_{\ell=1}^{K}\omega_{\ell}\Big(\delta_{u}^{a}\mathbf{1}_{\{\delta_{u}^{a}\in K_{\ell}\}}\mathrm{d}N_{u}^{a}+\delta_{u}^{b}\mathbf{1}_{\{\delta_{u}^{b}\in K_{\ell}\}}\mathrm{d}N_{u}^{b}\Big)\bigg) \exp\big(-\gamma_{i}\xi^{i}\big)\Bigg].
\end{align*}
We also define the family $\mathcal{J}_{\tau}^{i}:=(J^{i}(\xi^i,\tau,\delta,\delta^{-i}))_{\delta \in \mathcal{A}^{i}(\delta^{-i})}$. The continuation utility of the $i-$th market maker is defined by 
\begin{align}\label{Definition continuation utility value function agent}
V_{\tau}^{i}(\xi^i,\delta^{-i})=\underset{\delta\in \mathcal{A}^{i}(\delta^{-i})}{\text{ess sup}}J^{i}(\xi^i,\tau,\delta,\delta^{-i}).
\end{align}

\begin{lemma}\label{Lemma 6.1}
Let $\tau$ be a $\mathbb{F}$-predictable stopping time with values in $[t,T]$. Then, there exists a non--decreasing sequence $(\delta^{n})_{n\in \mathbb{N}}$ in $\mathcal{A}^{i}(\delta^{-i})$ such that $V_{\tau}^{i}(\xi^i,\delta^{-i})=\lim_{n\rightarrow +\infty}J^{i}(\xi^i,\tau,\delta^{n},\delta^{-i})$.
\end{lemma}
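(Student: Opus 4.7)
The plan is to invoke the classical result that the essential supremum of an upward directed family of random variables is the monotone limit of a sequence drawn from that family (see e.g.\ Neveu, \emph{Discrete Parameter Martingales}, Proposition VI-1-1, or Karatzas--Shreve, Appendix A). Thus the whole task reduces to verifying that $\mathcal{J}_\tau^i$ is upward directed, i.e.\ for any two elements $J^i(\xi^i,\tau,\delta^1,\delta^{-i})$ and $J^i(\xi^i,\tau,\delta^2,\delta^{-i})$ with $\delta^1,\delta^2\in\mathcal{A}^i(\delta^{-i})$, there exists $\delta^3\in\mathcal{A}^i(\delta^{-i})$ such that
\[
J^i(\xi^i,\tau,\delta^3,\delta^{-i}) \geq J^i(\xi^i,\tau,\delta^1,\delta^{-i}) \vee J^i(\xi^i,\tau,\delta^2,\delta^{-i}), \; \mathbb{P}^0\text{--a.s.}
\]

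The natural candidate is the pasting $\delta^3 := \delta^1 \mathbf{1}_A + \delta^2 \mathbf{1}_{A^c}$, where
\[
A := \big\{ J^i(\xi^i,\tau,\delta^1,\delta^{-i}) \geq J^i(\xi^i,\tau,\delta^2,\delta^{-i}) \big\} \in \mathcal{F}_\tau.
\]
First I would check admissibility: since $\tau$ is $\F$--predictable and $A\in\Fc_\tau$, the process $\delta^3$ is $\F$--predictable, and it remains bounded by $\delta_\infty$ componentwise, so $\delta^3\in\mathcal{A}^i(\delta^{-i})$. Next, I would verify the key pasting identity
\[
J^i(\xi^i,\tau,\delta^3,\delta^{-i}) = J^i(\xi^i,\tau,\delta^1,\delta^{-i})\mathbf{1}_A + J^i(\xi^i,\tau,\delta^2,\delta^{-i})\mathbf{1}_{A^c}, \;\mathbb{P}^0\text{--a.s.}
\]
To do so, I exploit that the Dol\'eans--Dade exponential \eqref{Definition change of variable} defining the change of measure factorises as $L_T^{\delta^k\otimes_i\delta^{-i}} = L_\tau^{\delta^k\otimes_i\delta^{-i}}\, L_{\tau,T}^{\delta^k\otimes_i\delta^{-i}}$ for $k=1,2$, with the post--$\tau$ factor depending only on $\delta^k$ on $(\tau,T]$, and that all integrands in the definition of $J^i$ involve stochastic integrals over $(\tau,T]$. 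Therefore, on $A$ the quantities $L_{\tau,T}^{\delta^3\otimes_i\delta^{-i}}$ and the post--$\tau$ integrals agree with those built from $\delta^1$, and similarly on $A^c$ with $\delta^2$. Using Bayes' rule to write the $\mathbb{P}^{\delta^k\otimes_i\delta^{-i}}$--conditional expectation at $\tau$ as the $\mathbb{P}^0$--conditional expectation of the integrand weighted by $L_{\tau,T}^{\delta^k\otimes_i\delta^{-i}}$, and pulling the $\Fc_\tau$--measurable indicators $\mathbf{1}_A, \mathbf{1}_{A^c}$ out, delivers the identity.

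With the identity at hand, it immediately follows that $J^i(\xi^i,\tau,\delta^3,\delta^{-i}) = J^i(\xi^i,\tau,\delta^1,\delta^{-i}) \vee J^i(\xi^i,\tau,\delta^2,\delta^{-i})$, so $\mathcal{J}_\tau^i$ is upward directed. Applying the classical approximation result then yields a sequence $(\tilde\delta^n)_{n\in\mathbb{N}}\subset\mathcal{A}^i(\delta^{-i})$ with $J^i(\xi^i,\tau,\tilde\delta^n,\delta^{-i})\nearrow V_\tau^i(\xi^i,\delta^{-i})$, and a relabelling via iterated pasting produces the requested non--decreasing sequence $(\delta^n)_{n\in\mathbb{N}}$.

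The main obstacle I anticipate is the bookkeeping in step two: one must be careful that the Radon--Nikod\'ym density $L_T$ defined in \eqref{Definition change of variable} depends on $\delta$ through $\delta^j$ restricted to $(\tau,T]$ via the stochastic integrals in the exponent, but also through the inventory--dependent intensity $\lambda^{i,j}(\cdot,Q)$ where $Q$ itself depends on the full past trajectory. However, since $Q_\tau$ is $\Fc_\tau$--measurable and only increments of $Q$ after $\tau$ enter the post--$\tau$ factor, and since these increments are governed by the same point processes $N^{i,a}, N^{i,b}$ regardless of the chosen $\delta$, the factorisation goes through once one conditions on $\Fc_\tau$.
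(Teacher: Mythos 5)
Your proposal is correct and takes essentially the same route as the paper: identify the pasting $\delta^3 = \delta^1\mathbf{1}_A + \delta^2\mathbf{1}_{A^c}$ with $A = \{J^i(\xi^i,\tau,\delta^1,\delta^{-i}) \geq J^i(\xi^i,\tau,\delta^2,\delta^{-i})\} \in \mathcal{F}_\tau$, verify that it dominates both values so that $\mathcal{J}_\tau^i$ is upward directed, and then invoke the classical result of Neveu \cite[Proposition VI-1-1]{neveu1975discrete}. You give more detail than the paper on the admissibility check and the factorisation of the Dol\'eans--Dade exponential, but the argument and the key citation are the same.
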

\begin{proof}
For $(\delta, \delta^\prime) \in \mathcal{A}^{i}(\delta^{-i})\times \mathcal{A}^{i}(\delta^{-i})$, we define
\begin{align*}
\overline{\delta}:=\delta\mathbf{1}_{\{J_{T}^{i}(\xi^i,\tau,\delta,\delta^{-i})\geq J_{T}^{i}(\xi^i,\tau,\delta',\delta^{-i}) \} }+\delta'\mathbf{1}_{\{J_{T}^{i}(\xi^i,\tau,\delta,\delta^{-i})\leq J_{T}^{i}(\xi^i,\tau,\delta',\delta^{-i}) \} }.
\end{align*}
We have $\overline{\delta}\in \mathcal{A}^{i}(\delta^{-i})$ and by definition of $\overline{\delta}$, $J^{i}(\xi^i,\tau,\overline{\delta},\delta^{-i})\geq \max\big(J^{i}(\xi^i,\tau,\delta,\delta^{-i}),J^{i}(\xi^i,\tau,\delta',\delta^{-i})\big)$. Hence, $\mathcal{J}_{\tau}^{i}$ is upward directed, and the required result follows from  \cite[Proposition VI.I.I p121]{neveu1975discrete}.
\end{proof}

\begin{lemma}\label{Lemma 6.2}
Let $t\in [0,T]$ and $\tau$ be an $\mathbb{F}-$predictable stopping time with values in $[t,T]$. Then
\begin{align*}
V_{t}^{i}(\xi^i,\delta^{-i})=\underset{\delta^{i}\in \mathcal{A}^{i}(\delta^{-i})}{\text{\rm ess sup}}\mathbb{E}_{t}^{\delta^{i}\otimes_{i}\delta^{-i}}\bigg[ & -\exp\bigg(\!-\!\gamma_{i}\!\int_{t}^{\tau}\delta_{u}^{a,i}\mathbf{1}_{\{\delta_{u}^{i,a}=\underline{\delta_u^a\otimes_i\delta_u^{a,-i}}\}}\mathrm{d}N_{u}^{a}\!+\!\delta_{u}^{i,b}\mathbf{1}_{\{\delta_{u}^{i,b}=\underline{\delta_u^b\otimes_i\delta_u^{b,-i}}\}}\mathrm{d}N_{u}^{b}\!+\! Q_{u}^{i}\mathrm{d}S_{u} \\
&+\sum_{\ell=1}^{K}\omega_{\ell}\Big(\delta_{u}^{i,a}\mathbf{1}_{\{\delta_{u}^{i,a}\in K_{\ell}\}}\mathrm{d}N_{u}^{a}+\delta_{u}^{i,b}\mathbf{1}_{\{\delta_{u}^{i,b}\in K_{\ell}\}}\mathrm{d}N_{u}^{b}\Big)\bigg)V_{\tau}^{i}(\xi^i,\delta^{-i})\bigg].  
\end{align*}
\end{lemma}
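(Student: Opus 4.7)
The proof of Lemma 6.2 follows the classical dynamic programming strategy, splitting the exponential integrand defining $J^{i}(\xi^i,t,\delta,\delta^{-i})$ at the stopping time $\tau$. For any $\delta\in\mathcal{A}^i(\delta^{-i})$, let $B_\tau(\delta):=\exp\bigl(-\gamma_i\int_t^\tau[\cdots]\bigr)$ denote the $\mathcal{F}_\tau$-measurable, strictly positive exponential factor already isolated on the right-hand side of the lemma's formula. The additivity of the exponent over $[t,\tau]\cup[\tau,T]$, combined with the tower property of conditional expectation under $\mathbb{P}^{\delta\otimes_i\delta^{-i}}$, yields the factorisation
\begin{equation*}
J^{i}(\xi^i,t,\delta,\delta^{-i})=\mathbb{E}_t^{\delta\otimes_i\delta^{-i}}\bigl[B_\tau(\delta)\,J^{i}(\xi^i,\tau,\delta,\delta^{-i})\bigr],
\end{equation*}
from which I establish the two inequalities of the lemma separately.

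For the $\leq$ direction, fix $\delta\in\mathcal{A}^i(\delta^{-i})$. The bound $J^{i}(\xi^i,\tau,\delta,\delta^{-i})\leq V_\tau^{i}(\xi^i,\delta^{-i})$ a.s.\ (both sides being negative) combined with the positivity of $B_\tau(\delta)$ and the factorisation above yields $J^{i}(\xi^i,t,\delta,\delta^{-i})\leq \mathbb{E}_t^{\delta\otimes_i\delta^{-i}}[B_\tau(\delta)\,V_\tau^{i}(\xi^i,\delta^{-i})]$. Taking essential suprema over $\delta^i\in\mathcal{A}^i(\delta^{-i})$ delivers the claimed upper bound on $V_t^{i}(\xi^i,\delta^{-i})$.

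For the $\geq$ direction, fix $\delta^i\in\mathcal{A}^i(\delta^{-i})$ and invoke Lemma 6.1 at $\tau$ to obtain a non-decreasing sequence $(\tilde\delta^{i,n})_{n\in\N}\subset\mathcal{A}^i(\delta^{-i})$ with $J^{i}(\xi^i,\tau,\tilde\delta^{i,n},\delta^{-i})\nearrow V_\tau^{i}(\xi^i,\delta^{-i})$ a.s. The concatenation $\delta^{i,n}_s:=\delta^i_s\mathbf{1}_{\{s\leq\tau\}}+\tilde\delta^{i,n}_s\mathbf{1}_{\{s>\tau\}}$ is $\mathbb{F}$-predictable (since $\tau$ is) and bounded by $\delta_\infty$, so it belongs to $\mathcal{A}^i(\delta^{-i})$. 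Because $B_\tau(\delta^{i,n})=B_\tau(\delta^i)$ and, by the structure of the Dol\'eans--Dade density \eqref{Definition change of variable}, the $\mathcal{F}_\tau$-conditional law of the canonical process on $[\tau,T]$ under $\mathbb{P}^{\delta^{i,n}\otimes_i\delta^{-i}}$ coincides with that under $\mathbb{P}^{\tilde\delta^{i,n}\otimes_i\delta^{-i}}$, applying the factorisation to $\delta^{i,n}$ gives
\begin{equation*}
V_t^{i}(\xi^i,\delta^{-i})\geq J^{i}(\xi^i,t,\delta^{i,n},\delta^{-i})=\mathbb{E}_t^{\delta^i\otimes_i\delta^{-i}}\bigl[B_\tau(\delta^i)\,J^{i}(\xi^i,\tau,\tilde\delta^{i,n},\delta^{-i})\bigr].
\end{equation*}
Monotone convergence, valid since $B_\tau(\delta^i)$ is uniformly bounded thanks to the $\delta_\infty$-bound on the controls and the finite horizon $T$, and the integrands are dominated via the integrability assumption \eqref{Integrability market maker}, then lets me pass the limit under the expectation and conclude by taking essential supremum over $\delta^i$.

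The \emph{main obstacle} is the admissibility-plus-decoupling check behind the concatenation step: one must verify that $\delta^{i,n}$ is genuinely $\mathbb{F}$-predictable (this uses the predictability of $\tau$) and that the post-$\tau$ dynamics under $\mathbb{P}^{\delta^{i,n}\otimes_i\delta^{-i}}$ faithfully reproduce those under $\mathbb{P}^{\tilde\delta^{i,n}\otimes_i\delta^{-i}}$, so that $J^{i}(\xi^i,\tau,\tilde\delta^{i,n},\delta^{-i})$ may legitimately appear inside the outer conditional expectation. Both points flow from the Markov-type structure of the intensities \eqref{Definition Intensity} and the form of the density \eqref{Definition change of variable}, but require careful measure-theoretic book-keeping; once they are in place, the monotone-convergence and essential-supremum manipulations are routine.
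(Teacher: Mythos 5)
Your proof follows the same route as the paper's: split the exponential at $\tau$, use the tower property to factorise $J^i$, obtain the upper bound from the definition of essential supremum, and obtain the lower bound by concatenating controls at $\tau$ (using predictability of $\tau$ for admissibility and the multiplicative structure of the Dol\'eans--Dade density for the decoupling), then pass to the limit via Lemma~\ref{Lemma 6.1} and monotone convergence. One auxiliary assertion is false, though: $B_\tau(\delta^i)$ is \emph{not} uniformly bounded, since its exponent contains both the unbounded stochastic integral $\int_t^\tau Q_u^i\,\mathrm{d}S_u$ (Gaussian tails) and the point-process integrals weighted by $N_T^a+N_T^b$ (Poisson tails); fortunately your argument does not actually need this claim, since the integrability condition~\eqref{Integrability market maker}, which you also invoke, supplies the domination required to justify passing the monotone limit through the conditional expectation, exactly as the paper tacitly does.
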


\begin{proof} Let $t\in [0,T]$ and fix an $\mathbb{F}-$predictable stopping time $\tau$ with values in $[t,T]$. To simplify the notations, we define for all $t\in [0,T]$ and $\delta\in \mathcal{A}$
\begin{align*}
 \mathcal{D}^{i}_{t,T}(\delta):=\mathrm{e}^{-\gamma_{i}\int_{t}^{T}\delta_{u}^{a,i}\mathbf{1}_{\{\delta_{u}^{i,a}=\underline \delta_{u}^{a}\}}\mathrm{d}N_{u}^{a}+\delta_{u}^{i,b}\mathbf{1}_{\{\delta_{u}^{i,b}=\underline \delta_{u}^{b}\}}\mathrm{d}N_{u}^{b}+Q_{u}^{i}\mathrm{d}S_{u}+\sum_{\ell=1}^K\omega_{\ell}(\delta_{u}^{i,a}\mathbf{1}_{\{\delta_{u}^{i,a}\in K_{\ell}\}}\mathrm{d}N_{u}^{a}+\delta_{u}^{i,b}\mathbf{1}_{\{\delta_{u}^{i,b}\in K_{\ell}\}}\mathrm{d}N_{u}^{b})}.
\end{align*}

First, by the tower property, we have that
\begin{align*}
& V_{t}^{i}(\xi^i,\delta^{-i})
=\underset{\delta^{i}\in \mathcal{A}^{i}(\delta^{-i})}{\text{ess sup}}\mathbb{E}_{t}^{\delta^{i}\otimes_{i}\delta^{-i}}\Big[-\mathcal{D}^{i}_{t,\tau}(\delta)\mathbb{E}_{\tau}^{\delta^{i}\otimes_{i}\delta^{-i}}\big[\mathcal{D}^{i}_{\tau,T}(\delta)\exp\big(-\gamma_{i}\xi^{i}\big)\big]\Big].
\end{align*}
For all $\delta\in \mathcal{A}$, the quotient $\frac{L_{T}^{\delta}}{L_{\tau}^{\delta}}$ does not depend on the value of $\delta$ before time $\tau$. This is by definition of the integrals. Then
\begin{align*}
\mathbb{E}_{\tau}^{\delta^{i}\otimes_{i}\delta^{-i}}\big[\mathcal{D}^{i}_{\tau,T}(\delta)\exp\big(-\gamma_{i}\xi^{i}\big)\big] & =\mathbb{E}^{0}_{\tau}\bigg[-\frac{L_{T}^{\delta^{i}\otimes_{i}\delta^{-i}}}{L_{\tau}^{\delta^{i}\otimes_{i}\delta^{-i}}}\mathcal{D}^{i}_{\tau,T}(\delta)\exp\big(-\gamma_{i}\xi^{i}\big)\bigg] \\
&\leq \underset{\delta^{i}\in \mathcal{A}^{i}(\delta^{-i})}{\text{ess sup}}\mathbb{E}_{\tau}^{\delta^{i}\otimes_{i}\delta^{-i}}\big[-\mathcal{D}^{i}_{\tau,T}(\delta)\exp\big(-\gamma_{i}\xi^{i}\big)\big] \\
&=V_{\tau}^{i}(\xi^i,\delta^{-i}).
\end{align*}
Hence, we obtain that 
\begin{align*}
& V_{t}^{i}(\xi^i,\delta^{-i})\leq \underset{\delta^{i}\in \mathcal{A}^{i}(\delta^{-i})}{\text{ess sup}}\mathbb{E}^{\delta^{i}\otimes_{i}\delta^{-i}}_{t}\big[-V_{\tau}^{i}(\xi^i,\delta^{-i})\mathcal{D}^{i}_{t,\tau}(\delta)\big].
\end{align*}

We next prove the reverse inequality. Let $\delta^{i}\in \mathcal{A}^{i}(\delta^{-i})$ and $\delta^{'i}\in \mathcal{A}^{i}(\delta^{-i})$. We define
\begin{align}\label{Control process proof dynamic programming principle}
(\delta^{i}\otimes_{\tau}\delta^{'i})_{u}:= \delta^{i}_{u}1_{\{0\leq u \leq \tau\}} + \delta^{'i}_{u}1_{\{\tau < u \leq T\}}. 
\end{align}
Then, $\delta^{i}\otimes_{\tau}\delta^{'i}$ being predictable as a sum of two predictable processes,   $\delta^{i}\otimes_{\tau}\delta^{'i}\in \mathcal{A}^{i}(\delta^{-i})$ and 
\begin{align*}
&V_{t}^{i}(\xi^i,\delta^{-i})\geq \mathbb{E}_{t}^{(\delta^{i}\otimes_{\tau}\delta^{'i})\otimes_{i}\delta^{-i}}\big[-\mathcal{D}^{i}_{\tau,T}(\delta')\mathcal{D}^{i}_{t,\tau}(\delta)\exp\big(-\gamma_{i}\xi^{i}\big)\big] \\
& \hspace{4.5em} =\mathbb{E}_{t}^{(\delta^{i}\otimes_{\tau}\delta^{'i})\otimes_{i}\delta^{-i}}\Big[\mathbb{E}_{\tau}^{(\delta^{i}\otimes_{\tau}\delta^{'i})\otimes_{i}\delta^{-i}}\big[-\mathcal{D}^{i}_{\tau,T}(\delta')\exp(-\gamma_{i}\xi^{i})\big]\mathcal{D}^{i}_{t,\tau}(\delta)\Big].
\end{align*}
Using Bayes formula, and noting that $\frac{L_{T}^{(\delta^{i}\otimes_{\tau}\delta^{'i})\otimes_{i}\delta^{-i}}}{L_{\tau}^{(\delta^{i}\otimes_{\tau}\delta^{'i})\otimes_{i}\delta^{-i}}}=\frac{L_{T}^{\delta^{'i}\otimes_{i}\delta^{-i}}}{L_{\tau}^{\delta^{'i}\otimes_{i}\delta^{-i}}}$, we have
\begin{align*}
 \mathbb{E}_{\tau}^{(\delta^{i}\otimes_{\tau}\delta^{'i})\otimes_{i}\delta^{-i}}\big[-\mathcal{D}^{i}_{\tau,T}(\delta')\exp\big(-\gamma_{i}\xi^{i}\big)\big] =\mathbb{E}^{0}_{\tau}\bigg[-\frac{L_{T}^{\delta^{'i}\otimes_{i}\delta^{-i}}}{L_{\tau}^{\delta^{'i}\otimes_{i}\delta^{-i}}}\mathcal{D}^{i}_{\tau,T}(\delta')\exp(-\gamma_{i}\xi^{i})\bigg]=J_{T}^{i}(\xi^i,\tau,\delta^{'i},\delta^{-i}).
\end{align*}

We therefore have 
\begin{align*}
& V_{t}^{i}(\xi^i,\delta^{-i})\geq \mathbb{E}_{t}^{(\delta^{i}\otimes_{\tau}\delta^{'i})\otimes_{i}\delta^{-i}}\big[\mathcal{D}^{i}_{t,\tau}(\delta)J_{T}^{i}(\xi^i,\tau,\delta^{'i},\delta^{-i})\big].
\end{align*}

We can therefore use Bayes's formula and the fact that $\frac{L_{\tau}^{(\delta^{i}\otimes_{\tau}\delta^{'i})\otimes_{i}\delta^{-i}}}{L_{t}^{(\delta^{i}\otimes_{\tau}\delta^{'i})\otimes_{i}\delta^{-i}}}=\frac{L_{\tau}^{\delta^{i}\otimes_{i}\delta^{-i}}}{L_{t}^{\delta^{i}\otimes_{i}\delta^{-i}}}$ to finally obtain 
\begin{align*}
V_{t}^{i}(\xi^i,\delta^{-i})\geq 
&= \mathbb{E}_{t}^{0}\Bigg[\mathbb{E}_{\tau}^{0}\bigg[\frac{L_{T}^{(\delta^{i}\otimes_{\tau}\delta^{'i})\otimes_{i}\delta^{-i}}}{L_{\tau}^{(\delta^{i}\otimes_{\tau}\delta^{'i})\otimes_{i}\delta^{-i}}}\frac{L_{\tau}^{(\delta^{i}\otimes_{\tau}\delta^{'i})\otimes_{i}\delta^{-i}}}{L_{t}^{(\delta^{i}\otimes_{\tau}\delta^{'i})\otimes_{i}\delta^{-i}}}\mathcal{D}^{i}_{t,\tau}(\delta)J_{T}^{i}(\xi^i,\tau,\delta^{'i},\delta^{-i})\bigg]\Bigg] \\ &=\mathbb{E}_{t}^{\delta^{i}\otimes_{i}\delta^{-i}}\big[\mathcal{D}^{i}_{t,\tau}(\delta)J_{T}^{i}(\xi^i,\tau,\delta^{'i},\delta^{-i})\big].
\end{align*}

Since the previous inequality holds for all $\delta^{'i}\in \mathcal{A}^{i}(\delta^{-i})$ we deduce from the monotone convergence theorem together with Lemma \ref{Lemma 6.1} that there exists a sequence $(\delta^{'n})_{n\in \mathbb{N}}$ of controls in $\mathcal{A}^{i}(\delta^{-i})$ such that
\begin{align*}
V_{t}^{i}(\xi^i,\delta^{-i})\geq \lim_{n\rightarrow +\infty}\mathbb{E}_{t}^{\delta^{i}\otimes_{i}\delta^{-i}}\big[\mathcal{D}^{i}_{t,\tau}(\delta)J_{T}^{i}(\xi^i,\tau,\delta^{'n},\delta^{-i})\big] =\mathbb{E}_{t}^{\delta^{i}\otimes_{i}\delta^{-i}}\big[\mathcal{D}^{i}_{t,\tau}(\delta)V_{\tau}^{i}(\xi^i,\delta^{-i})\big],
\end{align*}
thus concluding the proof. 
\end{proof}

\subsection{Proof of Theorem \ref{Theorem market maker}}

We begin with a lemma concerning the integrability of the continuation utility of the $i-$th agent defined in \eqref{Definition continuation utility value function agent}. 
\begin{lemma}\label{Lemma Uniform Integrability}
For all $\delta\in\mathcal{A}$ and all $i\in\{1,\dots,N\}$, the process $V^{i}(\xi^i,\delta^{-i})$ is negative and for a specific $\epsilon>0$, we have
\begin{align*}
    \mathbb{E}^{\delta}\bigg[\sup_{t\in[0,T]}\big|V_{t}^{i}(\xi^i,\delta^{-i})\big|^{1+\epsilon}\bigg]<+\infty, \quad \mathbb{E}^{\delta}\bigg[\sup_{(s,t)\in[0,T]^2}\big(D_{s,t}^{i}(\delta)\big)^{1+\epsilon}\bigg]<+\infty.
\end{align*}
%
\end{lemma}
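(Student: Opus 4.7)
\textbf{Proof plan for Lemma \ref{Lemma Uniform Integrability}.} The negativity of $V^i(\xi^i,\delta^{-i})$ is immediate from its definition as an essential supremum of the quantities $J^i(\xi^i,\tau,\delta,\delta^{-i})$, each of which has the form $-\mathbb{E}^{\cdot}_\tau[\exp(\cdot)]\le 0$.

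For the $L^{1+\epsilon}$ bound on $\sup_t|V_t^i|$, my plan is to dominate $|V_t^i|$ by a conditional expectation and then invoke Doob's inequality. Fix any constant admissible control $\delta^\star\in\mathcal{A}^i(\delta^{-i})$, say $\delta^\star\equiv 0$. By definition of the essential supremum,
\[
|V_t^i| = -V_t^i \le -J^i(\xi^i,t,\delta^\star,\delta^{-i}) = \mathbb{E}^{\delta^\star\otimes_i\delta^{-i}}_t\!\big[\mathcal{D}^i_{t,T}(\delta^\star\otimes_i\delta^{-i})\,\mathrm{e}^{-\gamma_i\xi^i}\big].
\]
Using the uniform bounds $|\delta|\le\delta_\infty$, $|Q^i|\le\bar q$, and the fact that the weights $\omega_\ell$ are bounded, one produces a constant $C=C(\gamma_i,\delta_\infty,\bar q,K,\omega)$ so that, for all $(s,t)\in[0,T]^2$,
\[
\mathcal{D}^i_{s,t}(\delta) \le \exp\!\Big(C(N_T^a+N_T^b)\Big)\cdot \exp\!\Big(2\gamma_i\sup_{u\in[0,T]}\Big|\int_0^u Q_r^i\,\mathrm{d}S_r\Big|\Big).
\]
I would then pass to the reference measure $\mathbb{P}^\delta$ via Bayes: since all intensities $\lambda^{i,j}$ are bounded above and below away from $0$ uniformly on $\mathcal{A}$, the Radon--Nikodym density $L^{\delta^\star\otimes_i\delta^{-i}}_T/L^{\delta^\star\otimes_i\delta^{-i}}_t$ (relative to the $\mathbb{P}^\delta$ version) is bounded by a random variable with all exponential moments. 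Consequently there is a $\mathbb{P}^\delta$-integrable $\Phi_T$ with
\[
|V_t^i|\le \mathbb{E}^\delta_t[\Phi_T],\quad\text{for all } t\in[0,T],
\]
and the right--hand side is a c\`adl\`ag $\mathbb{P}^\delta$-martingale. Doob's $L^{1+\epsilon}$ inequality then gives $\mathbb{E}^\delta[\sup_t|V_t^i|^{1+\epsilon}]\le C_\epsilon\mathbb{E}^\delta[\Phi_T^{1+\epsilon}]$, reducing the problem to showing the finiteness of this moment.

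The remaining step is a H\"older splitting. Choose $\epsilon>0$ small enough so that $(1+\epsilon)\gamma_i<\gamma'$ (this is where the strict inequality $\gamma'>\max_i\gamma_i$ in \eqref{Integrability market maker} is used) and so that $(1+\epsilon)\eta<\eta'$. With conjugate exponents matched accordingly, $\mathbb{E}^\delta[\Phi_T^{1+\epsilon}]$ factorises into: (i) a factor $\mathbb{E}^\delta[\exp(-\gamma'\xi^i)]^{\cdot}$, finite by \eqref{Integrability market maker}; (ii) a factor $\mathbb{E}^\delta[\exp(C'(N_T^a+N_T^b))]^{\cdot}$, finite because under $\mathbb{P}^\delta$ the aggregated counting processes have bounded intensities and hence explicit finite Laplace transforms on $[0,T]$; and (iii) a factor $\mathbb{E}^\delta[\exp(\gamma''\sup_u|\int_0^u Q^i\,\mathrm{d}S|)]$, finite by Burkholder--Davis--Gundy together with the bound $\langle\int Q^i\,\mathrm{d}S\rangle_T\le\bar q^2\sigma^2 T$, which makes the continuous martingale $\int_0^\cdot Q^i\,\mathrm{d}S$ a time-changed Brownian motion stopped at a bounded time. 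The bound on $\mathbb{E}^\delta\big[\sup_{(s,t)}(\mathcal{D}^i_{s,t}(\delta))^{1+\epsilon}\big]$ follows from exactly the same three-factor H\"older argument applied directly to the pointwise upper bound on $\mathcal{D}^i_{s,t}$ displayed above, which already does not depend on $(s,t)$.

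\textbf{Main obstacle.} The delicate point is the joint calibration of the exponents in the successive H\"older splittings: one must simultaneously accommodate the strict inequality $\gamma'>\max_i\gamma_i$ from \eqref{Integrability market maker}, the density bounds coming from the change of measure, and the exponential moments required for $N_T^a+N_T^b$ and for $\sup_u|\int_0^u Q^i\,\mathrm{d}S|$. The strict inequalities in \eqref{Integrability market maker} and \eqref{Integrability principal} are precisely what provide the slack needed to pick a single $\epsilon>0$ valid for both assertions of the lemma.
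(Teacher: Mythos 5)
Your proof is correct and follows essentially the same strategy as the paper's: dominate $|V^i_t|$ by a conditional expectation, invoke Doob's maximal inequality, split via H\"older, and control the point-process factor through the bounded intensities and the continuous factor through the bound $|Q^i|\le\overline q$, with the slack coming from $\gamma'>\max_i\gamma_i$ in \eqref{Integrability market maker}. Two points of departure are worth noting. First, your choice of $\delta^\star\equiv0$ as the test control forces a Bayes/change-of-measure step and an additional H\"older factor to absorb the Radon--Nikodym density; the paper sidesteps this entirely by picking, in the essential infimum, the $i$-th component of the \emph{given} $\delta$, so that $\mathbb{E}^{\delta^i\otimes_i\delta^{-i}}_t=\mathbb{E}^\delta_t$ and no measure change occurs. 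Your version works but costs an extra exponent to calibrate. Second, you invoke \eqref{Integrability principal} (the $\eta'>\eta$ slack) alongside \eqref{Integrability market maker}, but for this agent-side lemma only \eqref{Integrability market maker} is actually used -- the exchange's integrability condition plays no role here, and citing it muddies which assumption carries the estimate. On the other hand, your handling of the continuous-martingale contribution via $\sup_{u}\big|\int_0^u Q^i_r\,\mathrm{d}S_r\big|$ and BDG gives a genuinely pathwise bound uniform in $(s,t)$, which is a cleaner route to the second assertion $\mathbb{E}^\delta[\sup_{(s,t)}(\mathcal{D}^i_{s,t})^{1+\epsilon}]<\infty$ than the paper's brief treatment, where the Gaussian bound $\exp(\overline q^2\gamma_i\sigma^2 T/2)$ is slipped inside the supremum without an explicit justification of the interchange.
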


\begin{proof}
Let $\epsilon >0$, and $\delta\in \mathcal{A}$. Thanks to the uniform boundedness of $\delta^{i}\in \mathcal{A}^{i}(\delta^{-i})$, we have that
\begin{align}\label{Intermediary step for cadlag}
\frac{L_{T}^{\delta^{i}\otimes_{i}\delta^{-i}}}{L_{t}^{\delta^{i}\otimes_{i}\delta^{-i}}}\geq \alpha_{t,T}:= \mathrm{e}^{-\frac{k}{\sigma}(c+\delta_{\infty}(1+H))(N_{T}^{a}-N_{t}^{a}+N_{T}^{b}-N_{t}^{b})-2A\mathrm{e}^{-\frac{kc}{\sigma}}(\mathrm{e}^{\frac{k}{\sigma}(\delta_{\infty}(1+H))}+1)(T-t)}\geq \alpha_{0,T},  
\end{align}
with $H:=\max_{\ell=1,\dots,K}H_{\ell}$. We have 
\begin{align*}
    -V_{t}^{i}(\xi^i,\delta^{-i}) &= \underset{\delta\in \mathcal{A}^{i}(\delta^{-i})}{\text{ess inf}}  \mathbb{E}_t^{\delta^{i}\otimes_{i}\delta^{-i}}\bigg[\exp\bigg(-\gamma_{i}\int_{t}^{T}\delta_{u}^{a}\mathbf{1}_{\{\delta_{u}^{a}=\underline{\delta_u^a\otimes_i\delta_u^{a,-i}}\}}\mathrm{d}N_{u}^{a}+\delta_{u}^{b}\mathbf{1}_{\{\delta_{u}^{b}=\underline{\delta_u^b\otimes_i\delta_u^{b,-i}}\}}\mathrm{d}N_{u}^{b}+Q_{u}^{i}\mathrm{d}S_{u} \\
&\quad +\sum_{\ell=1}^{K}\omega_{\ell}\Big(\delta_{u}^{a}\mathbf{1}_{\{\delta_{u}^{a}\in K_{\ell}\}}\mathrm{d}N_{u}^{a}+\delta_{u}^{b}\mathbf{1}_{\{\delta_{u}^{b}\in K_{\ell}\}}\mathrm{d}N_{u}^{b}\Big)\bigg) \exp\big(-\gamma_{i}\xi^{i}\big) \bigg] \\
& \leq \mathbb{E}_t^{\delta^{i}\otimes_{i}\delta^{-i}}\bigg[\mathrm{e}^{\gamma_i\big(\delta_\infty (1+K\omega_1)(N_T^a+N_T^b)-\int_t^T Q_u^i dS_u\big)}\exp\big(-\gamma_i\xi^i\big)\bigg],
\end{align*}
with $\delta^i\otimes_i \delta^{-i}\in\mathcal{A}$. We used the fact that $N_T^j-N_t^j \leq N_T^j$ for $j\in\{a,b\}$ and for all $i\in\{1,\dots,N\},j\in\{a,b\},t\in[0,T]$,
\begin{align*}
    \exp\bigg(-\gamma_i\int_t^T \delta_u^j \mathbf{1}_{\{\delta_{u}^{a}=\underline{\delta_u^j\otimes_i\delta_u^{j,-i}}\}}\mathrm{d}N_{u}^{j}\bigg) \leq \exp\big(\gamma_i \delta_\infty N_T^j \big).
\end{align*}
Moreover, as $Q^i$ is uniformly bounded by $\overline{q}$, we have for all $L>0$
\begin{align*}
    \mathbb{E}_t^{\delta^{i}\otimes_{i}\delta^{-i}}\Big[\mathrm{e}^{-L\int_t^T Q_u^i dS_u}\Big] \leq \mathrm{e}^{\frac{L^2 \overline{q}^2\sigma^2 T}{2}}. 
\end{align*}
Thus, using Holder's inequality we have
\begin{align*}
    -V_{t}^{i}(\xi^i,\delta^{-i}) &\leq \mathbb{E}_t^{\delta^{i}\otimes_{i}\delta^{-i}}\bigg[\mathrm{e}^{\epsilon\gamma_i\big(\delta_\infty (1+K\omega_1)(N_T^a+N_T^b)-\xi^i\big)}\bigg]^{\frac{1}{\epsilon}}\mathbb{E}_t^{\delta^{i}\otimes_{i}\delta^{-i}}\bigg[\exp\bigg(-(1+\epsilon)\gamma_i\int_t^T Q_u^i dS_u\bigg)\bigg]^{\frac{1}{1+\epsilon}}\\
    & \leq \mathbb{E}_t^{\delta^{i}\otimes_{i}\delta^{-i}}\bigg[\mathrm{e}^{\epsilon\gamma_i\big(\delta_\infty (1+K\omega_1)(N_T^a+N_T^b)-\xi^i\big)}\bigg]^{\frac{1}{\epsilon}} \mathrm{e}^{\frac{(1+\epsilon) \gamma_i^2 \overline{q}^2\sigma^2 T}{2}}. 
\end{align*}
Then, we have
\begin{align*}
     \mathbb{E}^{\delta^{i}\otimes_{i}\delta^{-i}}\bigg[\sup_{t\in[0,T]}\big(-V_{t}^{i}(\xi^i,\delta^{-i})\big)^{1+\epsilon}\bigg] &\leq \mathrm{e}^{\frac{(1+\epsilon)^2 \gamma_i^2 \overline{q}^2\sigma^2 T}{2}}\mathbb{E}^{\delta^{i}\otimes_{i}\delta^{-i}}\bigg[\sup_{t\in[0,T]}\mathbb{E}_t^{\delta^{i}\otimes_{i}\delta^{-i}}\bigg[\mathrm{e}^{\epsilon\gamma_i\big(\delta_\infty (1+K\omega_1)(N_T^a+N_T^b)-\xi^i\big)}\bigg]^{\frac{1+\epsilon}{\epsilon}} \bigg],
\end{align*}
The term inside the conditional expectation is integrable\footnote{Take $\epsilon>1$ together with Condition \eqref{Integrability market maker} for example.} and independent from $t\in[0,T]$ thus by Doob's inequality, we have 
\begin{align*}
    \mathbb{E}^{\delta^{i}\otimes_{i}\delta^{-i}}\bigg[\sup_{t\in[0,T]}\big(-V_{t}^{i}(\xi^i,\delta^{-i})\big)^{1+\epsilon}\bigg] \leq   C \mathrm{e}^{\frac{(1+\epsilon)^2 \gamma_i^2 \overline{q}^2\sigma^2 T}{2}}\mathbb{E}^{\delta^{i}\otimes_{i}\delta^{-i}}\bigg[\mathrm{e}^{\gamma_i^{'}\big(\delta_\infty (1+K\omega_1)(N_T^a+N_T^b)-\xi^i\big)}\bigg],
\end{align*}
where $C>0$ and $\gamma_i^{\prime}=\gamma_i(1+\epsilon)$. Thanks to H\"older's inequality, together with the boundedness of the intensities of the point processes $N^{i,j}$, for $i\in\{1,\dots,N\}$, and $j\in\{a,b\}$, and Condition \eqref{Integrability market maker}, the right--hand side is bounded from above by a term independent of $t\in[0,T]$. The conclusion follows. 

\medskip
Using the same arguments, we have
\begin{align*}
 \mathbb{E}^{\delta}\bigg[\sup_{(s,t)\in[0,T]^2}(\mathcal{D}^{i}_{s,t}(\delta))^{1+\epsilon}\bigg] \leq C^{'}\mathbb{E}^{\delta}\bigg[\mathrm{e}^{\gamma_{i}^{'}\big(\delta_{\infty}(1+K\omega_1)(N_{T}^{a}+N_{T}^{b})+\overline{q}^2\gamma_i\frac{\sigma^2 T}{2}\big)}\bigg]<+\infty
\end{align*}
where $C^{'}>0$, using boundedness of the intensities of the point processes for $i\in\{1,\dots,N\}$. The conclusion follows using Hölder's inequality.
\end{proof}
\medskip
We introduce for all $i\in\{1,\dots,N\}$, and all $\delta\in \mathcal{A}^i(\delta^{-i})$ the process
\begin{align*}
& U_{t}^{\delta\otimes_{i}\delta^{-i}}:=V_{t}^{i}(\xi^i,\delta^{-i})\mathcal{D}_{0,t}^{i}(\delta^{i}\otimes_{i}\delta^{-i}),\; t\in[0,T],
\end{align*}
which thanks to Lemma \ref{Lemma Uniform Integrability} is of class $(D)$.

\medskip
\textbf{Step 1:} Let $\xi\in\mathcal{C}$ be an admissible contract. By definition, there is a Nash equilibrium $\hat\delta(\xi)\in\mathcal{A}$. By use of the dynamic programming principle of Lemma \ref{Lemma 6.2}, for all $\delta^{i}\in\mathcal{A}^{i}(\hat{\delta}^{-i}(\xi))$, the process $U^{\delta^{i}\otimes_i \hat{\delta}^{-i}(\xi)}$ defines a $\mathbb{P}^{\delta^{i}\otimes_i \hat{\delta}^{-i}(\xi)}-$supermartingale. We now check that the process $U^{\hat{\delta}(\xi)}$ is a uniformly integrable $\mathbb{P}^{\hat{\delta}(\xi)}-$martingale. 

\medskip
By Definition \ref{Definition Nash equilibrium}, the control $\hat{\delta}^{i}(\xi)$ is optimal for the $i-$th market maker in the sense that
\begin{align*}
V_{\text{MM}}^{i}\big(\xi^i,\hat{\delta}^{-i}(\xi)\big)\!=\!\mathbb{E}^{\hat\delta(\xi)}\bigg[U_i\bigg(\xi^{i}\!+\!\!\sum_{j\in\{a,b\}}\!\int_{0}^{T}\!\hat\delta_{t}^{i,j}(\xi)\bigg(\mathbf{1}_{\{\hat\delta_{t}^{i,j}(\xi)=\underline {\hat\delta}_{t}^{j}(\xi)\}}\!+\!\sum_{\ell=1}^{K}\int_{0}^{T}\!\omega_{\ell}\mathbf{1}_{\{\hat\delta_{t}^{i,j}(\xi)\in K_{\ell}\}}\bigg)\mathrm{d}N_{t}^{j}\!+\!\int_{0}^{T}Q_{t}^{i}\mathrm{d}S_{t}\bigg) \bigg].
\end{align*}
Hence, an application of the supermartingale property leads, for any $\mathbb{F}$-predictable stopping time $\tau$ taking values in $[0,T]$, to
\begin{align*}
 V_{\text{MM}}^{i}\big(\xi^i,\hat{\delta}^{-i}(\xi)\big)&\geq \mathbb{E}^{\hat\delta(\xi)}\Big[\mathcal{D}_{0,\tau}^{i}\big(\hat\delta(\xi)\big)V_{\tau}^{i}\big(\xi^i,\hat{\delta}^{-i}(\xi)\big)\Big] \\
& \geq \mathbb{E}^{\hat\delta(\xi)}\bigg[U_i\bigg(\xi^{i}\!\!+\!\!\sum_{j\in\{a,b\}}\!\int_{0}^{T}\!\hat\delta_{t}^{i,j}(\xi)\bigg(\mathbf{1}_{\{\hat\delta_{t}^{i,j}(\xi)=\underline {\hat\delta}_{t}^{j}(\xi)\}}\!+\!\sum_{\ell=1}^{K}\!\int_{0}^{T}\!\omega_{\ell}\mathbf{1}_{\{\hat\delta_{t}^{i,j}(\xi)\in K_{\ell}\}}\bigg)\mathrm{d}N_{t}^{j}\!+\!\int_{0}^{T}Q_{t}^{i}\mathrm{d}S_{t}\bigg)\bigg] \\ 
& = V_{\text{MM}}^{i}\big(\xi^i,\hat{\delta}^{-i}(\xi)\big).
\end{align*}
All these inequalities are thus equalities, which proves, since the filtration is right--continuous, that $(U_{t}^{\hat\delta(\xi)})_{t\in[0,T]}$ is a $\mathbb{P}^{\hat\delta(\xi)}-$martingale, and thus for any $t\in[0,T]$
\begin{align*}
U_{t}^{\hat\delta(\xi)}=\mathbb{E}_{t}^{\hat\delta(\xi)}\bigg[U_i\bigg(\xi^{i}+\sum_{j\in\{a,b\}}\int_{0}^{T}\hat\delta_{t}^{i,j}(\xi)\bigg(\mathbf{1}_{\{\hat\delta_{t}^{i,j}(\xi)=\underline {\hat\delta}_{t}^{j}(\xi)\}}+\sum_{\ell=1}^{K}\int_{0}^{T}\omega_{\ell}\mathbf{1}_{\{\hat\delta_{t}^{i,j}(\xi)\in K_{\ell}\}}\bigg)\mathrm{d}N_{t}^{j}+\int_{0}^{T}Q_{t}^{i}\mathrm{d}S_{t}\bigg)\bigg].    
\end{align*}
Using Lemma \ref{Lemma Uniform Integrability}, we conclude that $U^{\hat\delta(\xi)}$ is a uniformly integrable $\mathbb{P}^{\hat\delta(\xi)}-$martingale. Since the filtration $\F$ is right--continuous, we deduce that $U^{\hat\delta(\xi)}$ has a c\`adl\`ag $\mathbb{P}^{\hat\delta(\xi)}-$modification. Since all probability measures here are equivalent, we can assume that $U^{\hat\delta(\xi)}$ actually has c\`adl\`ag paths. As all the probability measures indexed by $\delta\in \mathcal{A}$ are equivalent, we deduce that  $(U_{t}^{\delta})_{t\in[0,T]}$ admits a càdlàg modification, for all $\delta\in\mathcal{A}$.

\medskip
Given the above, for any $\delta^{i}\in\mathcal{A}^{i}(\hat{\delta}^{-i}(\xi))$, we can apply Doob--Meyer's decomposition to the $\mathbb{P}^{\delta^{i}\otimes_i \hat{\delta}^{-i}(\xi)}$ supermartingale of class $(D)$  $U^{\delta^{i}\otimes_i \hat{\delta}^{-i}(\xi)}$ to obtain
\[
U_{t}^{\delta^{i}\otimes_{i}\hat\delta^{-i}(\xi)}=M_{t}^{\delta^{i}\otimes_{i}\hat\delta^{-i}(\xi)}-A_{t}^{\delta^{i}\otimes_{i}\hat\delta^{-i}(\xi),c}-A_{t}^{\delta^{i}\otimes_{i}\hat\delta^{-i}(\xi),d},\; t\in[0,T],
\]
where $M^{\delta^{i}\otimes_{i}\hat\delta^{-i}(\xi)}$ is a uniformly integrable $\mathbb{P}^{\delta^{i}\otimes_{i}\hat\delta^{-i}(\xi)}-$martingale and 
\[
A_t^{\delta^{i}\otimes_{i}\hat\delta^{-i}(\xi)}=A_{t}^{\delta^{i}\otimes_{i}\hat\delta^{-i}(\xi),c}+A_{t}^{\delta^{i}\otimes_{i}\hat\delta^{-i}(\xi),d},\; t\in[0,T],
\]
is an integrable non--decreasing predictable process such that $A_{0}^{\delta^{i}\otimes_{i}\hat\delta^{-i}(\xi),c}=A_{0}^{\delta^{i}\otimes_{i}\hat\delta^{-i}(\xi),d}=0$, with pathwise continuous component $A^{\delta^{i}\otimes_{i}\hat\delta^{-i}(\xi),c}$ and a piecewise constant predictable process $A^{\delta^{i}\otimes_{i}\hat\delta^{-i}(\xi),d}$. 

\medskip
Moreover, using the martingale representation theorem under $\mathbb{P}^{\delta^{i}\otimes_{i}\hat\delta^{-i}(\xi)}$,see \cite[Section A.1]{el2018optimal}, there exist predictable processes
\begin{align*}
\tilde{Z}^{\delta^{i}\otimes_{i}\hat\delta^{-i}(\xi)}=(\tilde{Z}^{\delta^{i}\otimes_{i}\hat\delta^{-i}(\xi),S},\tilde{Z}^{\delta^{i}\otimes_{i}\hat\delta^{-i}(\xi),i,j,a},\tilde{Z}^{\delta^{i}\otimes_{i}\hat\delta^{-i}(\xi),i,j,b}),\;  j\in\{1,\dots,N\},  
\end{align*} 
such that
\begin{align*}
M_{t}^{\delta^{i}\otimes_{i}\hat\delta^{-i}(\xi)}=V_{0}^{i}+\int_{0}^{t}\tilde{Z}_{r}^{\delta^{i}\otimes_{i}\hat\delta^{-i}(\xi),S}\mathrm{d}S_{r}+\sum_{j=1}^{N}\tilde{Z}_{r}^{\delta^{i}\otimes_{i}\hat\delta^{-i}(\xi),i,j,a}\mathrm{d}\tilde{N}_{r}^{\delta^{i}\otimes_{i}\hat\delta^{-i}(\xi),j,a}+\tilde{Z}_{r}^{\delta^{i}\otimes_{i}\hat\delta^{-i}(\xi),i,j,b}\mathrm{d}\tilde{N}_{r}^{\delta^{i}\otimes_{i}\hat\delta^{-i}(\xi),j,b}.
\end{align*}
where the processes $\tilde{N}^{\delta^{i}\otimes_{i}\hat\delta^{-i}(\xi),j,a},\tilde{N}^{\delta^{i}\otimes_{i}\hat\delta^{-i}(\xi),j,b}$, are defined by \eqref{Definition Compensated Poisson Process}.

\medskip
Let $Y^{i}\big(\xi^i,\hat\delta^{-i}(\xi)\big)$ be defined by $V^{i}\big(\xi^i,\hat\delta^{-i}(\xi)\big)=-\mathrm{e}^{-\gamma_{i}Y^{i}(\xi^i,\hat\delta^{-i}(\xi))}$. Since $A^{\delta^{i}\otimes_{i}\hat\delta^{-i}(\xi),d}$ is a predictable point process and the jump times of $(N^{i,a},N^{i,b})$ are totally inaccessible stopping times under $\mathbb{P}^{0}$, we have $\big\langle N^{i,a},A^{\delta^{i}\otimes_i \hat\delta^{-i}(\xi),d}\big\rangle =\big\langle N^{i,b},A^{\delta^{i}\otimes_i \hat\delta^{-i}(\xi),d}\big\rangle =0$, a.s. Using Itô’s formula, we obtain that
\begin{align}\label{Multidimensionnal BSDE}
Y_{T}^{i}\big(\xi^i,\hat\delta^{-i}(\xi)\big)=\xi^{i}, \text{ and } \mathrm{d}Y_{t}^{i}\big(\xi^i,\hat\delta^{-i}(\xi)\big)=\sum_{j=1}^{N}Z_{t}^{i,j,a}\mathrm{d}N_{t}^{j,a}+Z_{t}^{i,j,b}\mathrm{d}N_{t}^{j,b}+Z_{t}^{S,i}\mathrm{d}S_{t}-\mathrm{d}I_{t}^{i}-\mathrm{d}\tilde{A}_{t}^{i,d},
\end{align}
where by direct identification of the coefficients
\begin{align*}
& Z_{t}^{i,j,a}:=-\frac{1}{\gamma_{i}}\log\bigg(1+\frac{\tilde{Z}_{t}^{\delta^{i}\otimes_{i}\hat\delta^{-i}(\xi),i,j,a}}{U_{t^{-}}^{\delta^{i}\otimes_{i}\hat\delta^{-i}(\xi)}}\bigg)-\delta_{t}^{i,a}1_{\{\delta_{t}^{i,a}=\underline{\delta_t^a\otimes_i\hat\delta_t^{a,-i}(\xi)}\}}-\sum_{\ell=1}^{K}\omega_{\ell}\delta_{t}^{i,a}\mathbf{1}_{\{\delta_{t}^{i,a}\in K_{\ell}\}}, \\
& Z_{t}^{i,j,b}:=-\frac{1}{\gamma_{i}}\mathrm{log}\bigg(1+\frac{\tilde{Z}_{t}^{\delta^{i}\otimes_{i}\hat\delta^{-i}(\xi),i,j,b}}{U_{t^{-}}^{\delta^{i}\otimes_{i}\hat\delta^{-i}(\xi)}}\bigg)-\delta_{t}^{i,b}1_{\{\delta_{t}^{i,b}=\underline{\delta_t^b\otimes_i\hat\delta_t^{b,-i}(\xi)}\}}-\sum_{\ell=1}^{K}\omega_{\ell}\delta_{t}^{i,b}\mathbf{1}_{\{\delta_{t}^{i,b}\in K_{\ell}\}},
\end{align*}
\[
Z_{t}^{S,i}:=-\frac{\tilde{Z}_{t}^{\delta^{i}\otimes_{i}\hat\delta^{-i}(\xi),S}}{\gamma_{i}U_{t^{-}}^{\delta^{i}\otimes_{i}\hat\delta^{-i}(\xi)}}-Q_{t^{-}}^{i},\; I_{t}^{i}:=\int_{0}^{t}\bigg(\overline{h}^{i}(\delta_{r}^{i},\hat\delta_{r}^{-i}(\xi),Z_{r}^{i},Q_{r})\mathrm{d}r -\frac{1}{\gamma_{i}U_{r}^{\delta^{i}\otimes_{i}\hat\delta^{-i}(\xi)}}\mathrm{d}A_{r}^{\delta^{i}\otimes_{i}\hat\delta^{-i}(\xi),c}\bigg),
\]
\[
 \overline{h}^{i}(\delta_{t}^{i},\hat\delta_{t}^{-i}(\xi),Z_{t}^{i},Q_{t}):=h(\delta_{t}^{i},\hat\delta_{t}^{-i}(\xi),Z_{t}^{i},Q_{t})-\frac{1}{2}\gamma_{i}\sigma^{2}\big(Z_{t}^{S,i}\big)^{2}, \;  \tilde{A}_{t}^{i,d}:=\frac{1}{\gamma_{i}}\sum_{s\leq t}\mathrm{log}\bigg(1-\frac{\Delta A_{t}^{\delta^{i}\otimes_{i}\hat\delta^{-i}(\xi),d}}{U_{t^{-}}^{\delta^{i}\otimes_{i}\hat\delta^{-i}(\xi)}}\bigg).
\]
In particular, the last relation between $\tilde{A}^{i,d}$ and $A^{\delta^{i}\otimes_{i}\hat\delta^{-i}(\xi),d}$ shows that the process
\begin{align*}
\Delta a_{t}^{i}\!:=\!-\Delta A_{t}^{\delta^{i}\otimes_{i}\hat\delta^{-i}(\xi),d}/U_{t^{-}}^{\delta^{i}\otimes_{i}\hat\delta^{-i}(\xi)}\!\geq\! 0    
\end{align*}
is independent of $\delta^{i}\in \mathcal{A}^{i}(\hat\delta^{-i}(\xi))$. 

\medskip
We now prove that, $A^{\delta^{i}\otimes_{i}\hat\delta^{-i}(\xi),d}\!=\!-\sum_{0<s\leq \cdot}\!U_{s^{-}}^{\delta^{i}\otimes_{i}\hat\delta^{-i}(\xi)}\!\Delta a_{s}^{i}=0$ so that
\begin{align*}
\tilde{A}^{i,d}\!=\!0, \quad   I_{t}\!=\!\int_{0}^{\cdot}\overline{H}^{i}(\hat\delta^{-i}(\xi),Z_{r}^{i},Q_{r})\mathrm{d}r,
\end{align*}
where 
\begin{align*}
\overline{H}^{i}(\hat\delta^{-i}(\xi),Z_{t}^{i},Q_{t})=H^{i}(\hat\delta^{-i}(\xi),Z_{t}^{i},Q_{t})-\frac{1}{2}\gamma_{i}\sigma^{2}(Z_{t}^{S,i})^{2}.
\end{align*}
As $V_{T}^{i}\big(\xi^i,\hat\delta^{-i}(\xi)\big)=-1$, note that
\begin{align*}
 0&=\sup_{\delta^{i}\in \mathcal{A}^{i}(\hat\delta^{-i}(\xi))}\mathbb{E}^{\delta^{i}\otimes_{i}\hat\delta^{-i}(\xi)}\Big[U_{T}^{\delta^{i}\otimes_{i}\hat\delta^{-i}(\xi)}\Big]-V_{0}^{i}(\xi^i,\delta^{-i})\\
&=\sup_{\delta^{i}\in \mathcal{A}^{i}(\hat\delta^{-i}(\xi))}\mathbb{E}^{\delta^{i}\otimes_{i}\hat\delta^{-i}(\xi)}\Big[U_{T}^{\delta^{i}\otimes_{i}\hat\delta^{-i}(\xi)}-M_{T}^{\delta^{i}\otimes_{i}\hat\delta^{-i}(\xi)}\Big]\\
& =\gamma_{i}\sup_{\delta^{i}\in \mathcal{A}^{i}(\hat\delta^{-i}(\xi))}\mathbb{E}^{0}\bigg[L_{T}^{\delta^{i}\otimes_{i}\hat\delta^{-i}(\xi)}\int_{0}^{T}U_{r^{-}}^{\delta^{i}\otimes_{i}\hat\delta^{-i}(\xi)}\bigg(\mathrm{d}I_{r}^{i}-\overline{h}^{i}(\delta_{r}^{i},\hat\delta_{r}^{-i}(\xi),Z_{r}^{i},Q_{r})\mathrm{d}r+\frac{\mathrm{d}a_{r}^{i}}{\gamma_{i}}\bigg)\bigg].  
\end{align*}
Moreover, since the controls are uniformly bounded, we have by Lemma \ref{Lemma Uniform Integrability}
\begin{align*}
U_{t}^{\delta^{i}\otimes_{i}\hat\delta^{-i}(\xi)}\leq -\beta_{t}^{i}=V_{t}^{i}\big(\xi^i,\hat\delta^{-i}(\xi)\big)\mathrm{e}^{-2\delta_{\infty}(N_{T}^{a}-N_{0}^{a}+N_{T}^{b}-N_{0}^{b})-\gamma_{i}\int_{0}^{t}Q_{r}^{i}\mathrm{d}S_{r}}<0.
\end{align*}
Since $A^{\delta^{i}\otimes_{i}\hat\delta^{-i}(\xi),d}\geq 0,$ $U^{\delta^{i}\otimes_{i}\hat\delta^{-i}(\xi)}\leq 0$, and $\mathrm{d}I_{t}^{i}-\overline{h}^{i}(\delta_{t}^{i},\hat\delta_{t}^{-i}(\xi),Z_{t}^{i},Q_{t})\mathrm{d}t\geq 0$, we obtain 
\begin{align*}
 0&\leq \sup_{\delta^{i}\in \mathcal{A}^{i}(\hat\delta^{-i}(\xi))}\mathbb{E}^{0}\bigg[\alpha_{0,T}\int_{0}^{T}-\beta_{r^{-}}^{i}\bigg(\mathrm{d}I_{r}^{i}-\overline{h}^{i}(\delta_{r}^{i},\hat\delta_{r}^{-i}(\xi),Z_{r}^{i},Q_{r})\mathrm{d}r + \frac{\mathrm{d}a_{r}^{i}}{\gamma_{i}}\bigg)\bigg]\\
&  =-\mathbb{E}^{0}\bigg[\alpha_{0,T}\int_{0}^{T}\beta_{r^{-}}^{i}\bigg(\mathrm{d}I_{r}^{i}-\overline{H}^{i}(\hat\delta_{r}^{-i}(\xi),Z_{r}^{i},Q_{r})\mathrm{d}r + \frac{\mathrm{d}a_{r}^{i}}{\gamma_{i}}\bigg)\bigg].
\end{align*}

The quantities $\alpha_{0,T}\int_{0}^{T}\beta_{r^{-}}^{i}\big(\mathrm{d}I_{r}^{i}-\overline{H}^{i}(\hat\delta_{r}^{-i}(\xi),Z_{r}^{i},Q_{r})\big)\mathrm{d}r$ and $\alpha_{0,T}\int_{0}^{T}\beta_{r^{-}}^{i}\frac{\mathrm{d}a_{r}^{i}}{\gamma_{i}}$ being non--negative random variables, this implies the announced result. 

\medskip
Given the dynamic under $\mathbb{P}^{\hat\delta(\xi)}$ of the process $U_{t}^{\hat\delta(\xi)}$, Itô's formula leads to
\begin{align*}
 \mathrm{d}U_{t}^{\delta^{i}\otimes_{i}\hat\delta^{-i}(\xi)}=&\ \sum_{j=1}^{N}\tilde{Z}_{t}^{\hat\delta(\xi),i,j,a}\mathrm{d}\tilde{N}_{t}^{\delta^{i}\otimes_{i}\hat\delta^{-i}(\xi),j,a}+\tilde{Z}_{t}^{\hat\delta(\xi),i,j,b}\mathrm{d}\tilde{N}_{t}^{\delta^{i}\otimes_{i}\hat\delta^{-i}(\xi),j,b}+\tilde{Z}_{t}^{\hat\delta(\xi),i,S}\mathrm{d}S_{t}\\
& +U_{t}^{\delta^{i}\otimes_{i}\hat\delta^{-i}(\xi)}\Big(h\big(\hat\delta_{t}(\xi),\hat\delta_{t}^{-i}(\xi),Z_{t}^{\hat\delta(\xi),i},Q_{t}\big)-h\big(\delta^{i}_{t},\hat\delta_{t}^{-i}(\xi),Z_{t}^{\hat\delta(\xi),i},Q_{t}\big) \Big)\mathrm{d}t
\end{align*}
and the $\mathbb{P}^{\delta^{i}\otimes_{i}\delta^{*-i}}-$supermartingale property implies that, almost surely for all $t\in[0,T]$
\begin{align*}
h\big(\hat\delta_{t}^{i}(\xi),\hat\delta^{-i}(\xi),Z_{t}^{\hat\delta(\xi),i},Q_{t}\big)-h\big(\delta_{t}^{i},\hat\delta_{t}^{-i}(\xi),Z_{t}^{\hat\delta(\xi),i},Q_{t}\big) \geq 0.
\end{align*}
Hence
\begin{align*}
\hat\delta_{t}^{i}(\xi)\in \underset{\delta\in \mathcal{B}_{\infty}^{2}}{\textup{argmax }}h\big(\delta,\hat\delta_{t}^{-i}(\xi),\tilde{Z}^{\hat\delta(\xi),i}_t,Q_t\big). 
\end{align*}
Finally, we check that $Z\in \mathcal{Z}$. 
Using Lemma \ref{Lemma Uniform Integrability}, we have that
\begin{align*}
\sup_{\delta^{i}\in \mathcal{A}^{i}(\hat\delta^{-i}(\xi))}\mathbb{E}^{\delta^{i}\otimes_{i}\hat\delta^{-i}(\xi)}\Big[\sup_{t\in[0,T]}|U_{t}^{\delta^{i}\otimes_{i}\hat\delta^{-i}(\xi)}|^{p^\prime+1}\Big]<+\infty   
\end{align*}
for some $p^\prime>0$. The desired conclusion comes from the fact that
\begin{align*}
& \mathrm{e}^{-\gamma_{i}Y^{i}_{t}}=U^{\delta^{i}\otimes_{i}\hat\delta^{-i}(\xi)}_{t}\mathcal{D}_{0,t}\big(\delta^{i}\otimes_i \hat\delta^{-i}(\xi)\big).
\end{align*}

\begin{remark}
Note that we described here a solution to the following system of $N$ {\rm BSDEs} given by, for all $i\in\{1,\dots,N\}$
\begin{align*}
Y_{t}^{i,y_{0},Z,\hat\delta}\!:=\xi^{i}\!-\!\sum_{j=1}^{N}\int_{t}^{T}\! Z_{r}^{i,j,a}\mathrm{d}N_{r}^{j,a}\!+\!Z_{r}^{i,j,b}\mathrm{d}N_{r}^{j,b}\!+\!Z_{r}^{S,i}\mathrm{d}S_{r}\!+\!\bigg(\frac{1}{2}\gamma_{i}\sigma^{2}(Z_{r}^{S,i}+Q_{r}^{i})^{2}-H^{i}(\hat\delta^{-i}(\xi),Z_{r}^{i},Q_{r}^{i})\bigg)\mathrm{d}r. 
\end{align*}
\end{remark}

\medskip
\textbf{Step 2:} Conversely, let us be given a contract vector $\xi=Y_{T}^{y_0,Z,\hat\delta}\in\Xi$, with $(Y_0,Z)\in\mathbb{R}^N \times \mathcal{Z}$ and $\hat\delta\in\Oc$. For $i\in\{1,\dots,N\}$, we note
\begin{align*}
V_{t}^{i}\big(Y_{T}^{i,y_0,Z,\hat\delta},\hat\delta^{-i}(Y_{T}^{y_0,Z,\hat\delta})\big):=-\mathrm{e}^{-\gamma_{i}Y_{t}^{i,y_0,Z,\hat\delta}}.
\end{align*}
Given an arbitrary bid--ask policy $\delta^{i}\in \mathcal{A}^{i}(\hat\delta^{-i})$ of the $i-$th agent, an application of Itô's formula leads to
\begin{align*}
 \mathrm{d}U_{t}^{\delta^{i}\otimes_{i}\hat\delta^{-i}}=&\ -\gamma_{i}U_{t}^{\delta^{i}\otimes_{i}\hat\delta^{-i}}\bigg((Q_{t}^{i}+Z_{t}^{S,i})\mathrm{d}S_{t}-(H^{i}(\hat\delta_{t}^{-i},Z^{i}_{t},Q_{t})-h^{i}(\delta_{t}^{i},\hat\delta_{t}^{-i},Z_{t}^{i},Q_{t}))\mathrm{d}t\\
& +\gamma_{i}^{-1}\Big(1-\exp\big(-\gamma_{i}\big(Z_{t}^{i,a}+\delta_{t}^{i,a}1_{\{\delta_{t}^{i,a}=\underline{\delta_t^a\otimes_i\hat\delta_t^{a,-i}}\}}+\sum_{\ell=1}^{K}\omega_{\ell}\delta_{t}^{i,a}\mathbf{1}_{\{\delta_{t}^{i,a}\in K_{\ell}\}}\big)\big)\Big)\mathrm{d}\tilde{N}_{t}^{\delta^{i}\otimes_i \hat\delta^{-i},a}\\
& +\gamma_{i}^{-1}\Big(1-\exp\big(-\gamma_{i}\big(Z_{t}^{i,b}+\delta_{t}^{i,b}1_{\{\delta_{t}^{i,b}=\underline{\delta_t^b\otimes_i\hat\delta_t^{b,-i}}\}}+\sum_{\ell=1}^{K}\omega_{\ell}\delta_{t}^{i,b}\mathbf{1}_{\{\delta_{t}^{i,b}\in K_{\ell}\}}\big)\big)\Big)\mathrm{d}\tilde{N}_{t}^{\delta^{i}\otimes_i \hat\delta^{-i},b}\bigg).
\end{align*}
Hence, $\big(U_{t}^{\delta^{i}\otimes_{i}\hat\delta^{-i}}\big)_{t\in[0,T]}$ is a $\mathbb{P}^{\delta^{i}\otimes_{i}\hat\delta^{-i}}-$local supermartingale.  Thanks to Lemma \ref{Lemma Uniform Integrability}, $\big(U_{t}^{\delta^{i}\otimes_{i}\hat\delta^{-i}}\big)_{t\in[0,T]}$ is of class $(D)$ and therefore is a true supermartingale. We obtain that
\begin{align*}
&-\int_{0}^{\cdot}\gamma_{i}U_{t}^{\delta^{i}\otimes_{i}\hat\delta^{-i}}\bigg((Q_{t}^{i}+Z_{t}^{S,i})\mathrm{d}S_{t}\\
& +\gamma_{i}^{-1}\Big(1-\exp\big(-\gamma_{i}(Z_{t}^{i,a}+\delta^{i,a}_{t}1_{\{\delta_{t}^{i,a}=\underline{\delta_t^a\otimes_i\hat\delta_t^{a,-i}}\}}+\sum_{\ell=1}^{K}\omega_{\ell}\delta_{t}^{i,a}\mathbf{1}_{\{\delta_{t}^{i,a}\in K_{\ell}\}})\big)\Big)\mathrm{d}\tilde{N}_{t}^{\delta^{i}\otimes_i \hat\delta^{-i},a}\\
& +\gamma_{i}^{-1}\Big(1-\exp\big(-\gamma_{i}(Z_{t}^{i,b}+\delta_{t}^{i,b}1_{\{\delta_{t}^{i,b}=\underline{\delta_t^b\otimes_i\hat\delta_t^{b,-i}}\}}+\sum_{\ell=1}^{K}\int_{0}^{T}\omega_{\ell}\delta_{t}^{i,b}\mathbf{1}_{\{\delta_{t}^{i,b}\in K_{\ell}\}})\big)\Big)\mathrm{d}\tilde{N}_{t}^{\delta^{i}\otimes_i \hat\delta^{-i},b}\bigg)
\end{align*}
is a true martingale. Therefore
\begin{align*}
J_{\text{MM}}^{i}(\xi^{i},\delta^{i},\hat\delta^{-i})&=\mathbb{E}^{\delta^{i}\otimes_{i}\delta^{-i}}\Big[U_{T}^{\delta^{i}\otimes_{i}\hat\delta^{-i}}\Big]\\
& =-\mathrm{e}^{-\gamma_{i}y_{0}^{i}}+\mathbb{E}^{\delta^{i}\otimes_{i}\delta^{-i}}\bigg[\int_{0}^{T}\gamma_{i}U_{t}^{\delta^{i}\otimes_{i}\hat\delta^{-i}}\big(H^{i}(\hat\delta_{t}^{-i},Z^{i}_{t},Q_{t})-h^{i}(\delta_{t}^{i},\hat\delta_{t}^{-i},Z_{t}^{i},Q_{t})\big)\mathrm{d}t\bigg] \leq -\mathrm{e}^{-\gamma_{i}y_{0}^{i}}.
\end{align*}
In addition to this, the previous inequality becomes an equality if and only if $\delta^{i}$ is chosen as the maximiser of the Hamiltonian $h^{i}$. By definition, it means that $U^{\hat\delta}$ is a $\mathbb{P}^{\hat\delta}-$martingale and that $\hat\delta^{i}$ is the optimal control for the $i-$th agent, in the sense of \eqref{Definition Nash equilibrium}. As this property holds for any $i\in\{1,\dots,N\}$, it means that $\hat\delta$ is a Nash equilibrium.

\medskip
Finally as we showed that the contracts in $\Xi$ generates at least one Nash equilibrium we have the inclusion $\mathcal{C} \supset \Xi$. Hence, the equality $\Xi=\mathcal{C}$ is proved.

\subsection{Proof of Lemma \ref{Lemma unicity fixed point Hamiltonian}}

For $(z,q)\in\mathcal{R}^N \times \mathbb{Z}^N$, we set $z^{i,\ell,j}=z^{i,j}$ for all $(i,\ell)\in \{1,\dots,N\}^2$ and $j\in\{a,b\}$. Hence, the Hamiltonian of the $i-$th agent reduces to
\[
h^{i}(d^{i},d^{-i},z^{i},q):=\sum_{j\in\{a,b\}}\gamma_{i}^{-1}\bigg(1-\exp\bigg(-\gamma_{i}\bigg(z^{i,j}+d^{i,j}\mathbf{1}_{\{d^{i,j}=\underline{d^j\otimes_i d^{j,-i}}\}}+\sum_{k=1}^{K}\omega_{k}d^{i,j}\mathbf{1}_{\{d^{i,j}\in K_{k}\}} \bigg)\bigg)\bigg)\lambda^{j}(d^{j},q).
\]

For $i\in\{1,\dots,N\}$, an optimisation of $h^{i}(d^{i},d^{-i},z^{i},q)$ with respect to $d^i$ leads to a unique\footnote{Uniqueness follows from strict concavity of the vector $h^i$ with respect to $d\in\mathcal{B}_{\infty}^{2N}$} maximum defined as $d^{\star i,j}(z,q)=\Delta^{i,j}(z,q)$ for $i\in\{1,\dots,N\}$, and $j\in\{a,b\}$. This maximiser completely characterise the behaviour of the $i-$th agent compared to the position of the $N-1$ others. 

\medskip
Moreover, no matter if the $i-$th agent plays the best spread or not, compared to the response of the other agents, his optimal response will lead to the following value
\begin{align*}
h^{i}(\Delta^{i,:}(z,q),\Delta^{-i,:}(z,q),z^{i},q)=\frac{\sigma}{1+\frac{\sigma\gamma_i}{k\varpi}}\lambda^{j}\big(\Delta^{:,j}(z,q),q\big), \; i\in \{1,\dots,N\}.
\end{align*}
Hence, when the $N$ agents play $\Delta$, they have no interest in switching their bid--ask policy. Thus, it characterises a unique fixed point of the Hamiltonian.  

\subsection{Exchange's Hamiltonian maximisation}\label{Section Maximization hamiltonian}

The following technical result follows from direct but tedious computations. It provides condition on $\delta_\infty$ under which the maximisers defined in \eqref{Lemma maximizers HJB} exist. 

\begin{lemma}\label{Lemma exchange Hamiltonian}
Let $q\in\mathcal{Q}^N$, $c\in \mathbb{R}$, $(\eta,k,\sigma)\in(0,+\infty)^3,$ $\gamma_{i}>0$ for all $i\in\{1,\dots,N\}$, and $v_{0},\dots,v_{N} < 0$. Then, for $z\in \mathcal{R}^{N}$ and $i\in\{1,\dots,N\},$ $j\in\{a,b\}$, we define
\begin{align*}
& \Phi_{q}^{i,j}(z):=\lambda^{i,j}\big(\Delta^{:,j}(z,q),q\big) \bigg(\mathrm{e}^{\eta(Nz^{j}-c)}v_{i}-v_{0}\mathcal{L}^{j}\big(\Delta(z,q)\big) \Big)\bigg),\; \Phi^{j}_{q}(z):=\sum_{i=1}^{N}\Phi_{q}^{i,j}(z),
\end{align*}
with $\Delta(z,q)$ defined as in {\rm Lemma \ref{Corollary best response agent}}, and $\delta_{\infty}>0$. Assume that
\begin{align*}
\delta_{\infty}\geq C_{\infty}+\frac{N}{\eta}\bigg|\mathrm{log}\bigg(\frac{v_{0}}{\sum_{i=1}^{N}v_{i}}\bigg)\bigg|,
\end{align*}
with $C_{\infty}:=N|c|+\sum_{i=1}^{N}\bigg(\Big(\frac{1}{\eta}+\frac{1}{\gamma_{i}}\Big)\mathrm{log}\Big(1+\frac{\sigma\gamma_{i}}{k\varpi}\Big)\bigg)-\frac{N}{\eta}\mathrm{log}\Big(\frac{k\varpi}{k\varpi+\sigma\eta}\big(1+\eta\sigma\sum_{i=1}^{N}\frac{1}{k\varpi+\sigma\gamma_{i}}\big)\Big)$. Then, the functions $\Phi_{q}^{j}$, $j\in\{a,b\}$, admit a maximum $z^{\star}$ given by 
\begin{align*}
&z^{\star}:=\frac{1}{N}\Bigg(c+\frac{1}{\eta}\mathrm{log}\bigg(\frac{v_{0}}{\sum_{i\in \mathcal{G}}v_{i}}\bigg) +\frac{1}{\eta}\mathrm{log}\bigg(\frac{k\varpi}{k\varpi+\sigma\eta}{\rm Card }(\mathcal{G})\bigg(1+\eta\sigma\sum_{i=1}^{N}\frac{1}{k\varpi+\sigma\gamma_{i}}\bigg)\bigg)\Bigg).
\end{align*}
Moreover
\begin{align*}
\Phi^{j}_{q}(z^{\star})=-C v_{0}\exp\bigg(\frac{k\varpi}{\sigma\eta}\mathrm{log}\Big(\frac{v_{0}}{\sum_{i\in \mathcal{G}}v_{i}}\Big)\bigg),
\end{align*}
where 
\begin{align*}
C:=&\ A\exp\bigg(-\frac{k}{\sigma}\Big(c\big(1-\varpi\big)-\frac{\varpi}{\eta}\mathrm{log}\bigg(\frac{k\varpi}{k\varpi+\eta\sigma}{\rm Card }(\mathcal{G})\bigg(1+\eta\sigma\sum_{i=1}^{N}\frac{1}{k\varpi+\sigma\gamma_{i}}\bigg)\bigg)\\
& +\varpi\sum_{i=1}^{N}\gamma_{i}^{-1}\mathrm{log}\bigg(1+\frac{\sigma\gamma_{i}}{k\varpi}\bigg)\bigg)\bigg)\times \frac{\sigma\eta}{k\varpi+\sigma\eta}\bigg(1+\eta\sigma\sum_{i=1}^{N}\frac{1}{k\varpi+\sigma\gamma_{i}}\bigg).
\end{align*}
\end{lemma}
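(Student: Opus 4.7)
The plan is to reduce this multi-dimensional maximization to a one-dimensional problem in the scalar $z^j$, by exploiting the fixed-point structure of Lemma 3.6. First, I would observe that $\Phi_q^j$ depends on $z \in \mathcal{R}^N$ only through the scalar $z^j$: indeed, by Lemma 3.6, $\Delta^{i,j}(z,q) = \Gamma^{i,j}(z) = -z^j + \gamma_i^{-1}\log(1+\sigma\gamma_i/(k\varpi))$ for $i \in \mathcal{G}$ and $\Delta^{i,j}(z,q) = \omega_\ell^{-1}\Gamma^{i,j}(z)$ for $i \notin \mathcal{G}$ (with $\ell$ chosen by which $K_\ell$ this value falls in), so both $\lambda^{i,j}(\Delta^{:,j},q)$ and $\mathcal{L}^j(\Delta(z,q))$ become scalar functions of $z^j$ alone.

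The central step is then to compute these explicitly and exploit the crucial cancellations. The key identity is that for every $i$ -- whether best or not -- the quantity $z^j + \Delta^{i,j}\mathbf{1}_{\{\Delta^{i,j}=\underline{\Delta}\}} + \sum_k \omega_k \Delta^{i,j}\mathbf{1}_{\{\Delta^{i,j}\in K_k\}}$ collapses to $\gamma_i^{-1}\log(1+\sigma\gamma_i/(k\varpi))$: for $i \in \mathcal{G}$ this is $z^j + \Gamma^{i,j}(z)$, and for $i \notin \mathcal{G}$ with $\Delta^{i,j} \in K_\ell$ the prefactor $\omega_\ell$ cancels the $\omega_\ell^{-1}$ inside $\Delta^{i,j}$ to give the same value. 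Hence every exponential in $\mathcal{L}^j$ equals $k\varpi/(k\varpi+\sigma\gamma_i)$, and summation gives
$$\mathcal{L}^j(\Delta(z,q)) = 1 + \eta\sigma\sum_{i=1}^N \frac{1}{k\varpi+\sigma\gamma_i} =: M,$$
a constant in $z$. A parallel cancellation, using the calibration of $H_\ell$ under which best and non-best contributions enter $\lambda^j$ symmetrically as $\varpi\Gamma^{i,j}$, yields
$$\lambda^j(\Delta(z,q),q) = A\exp\Big(-\tfrac{k}{\sigma}\big(c - N\varpi z^j + \varpi\textstyle\sum_{i=1}^N\gamma_i^{-1}\log(1+\sigma\gamma_i/(k\varpi))\big)\Big).$$
Since $\lambda^{i,j} = \lambda^j/|\mathcal{G}|$ for $i \in \mathcal{G}$ and zero otherwise, summing the per-agent terms produces $\Phi_q^j(z) = \lambda^j(z^j)\big(\mathrm{e}^{\eta(Nz^j - c)}\sum_{i\in\mathcal{G}} v_i/|\mathcal{G}| - v_0 M\big)$, a difference of two exponentials in $z^j$.

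This scalar function has the shape $A_1\mathrm{e}^{\kappa_1 z^j} - A_2\mathrm{e}^{\kappa_2 z^j}$ with $A_1,A_2 > 0$ (since $v_0$ and the $v_i$ are all negative) and $0 < \kappa_1 < \kappa_2$. The first-order condition is affine in $z^j$ and has the unique solution announced in the lemma; combined with $\Phi_q^j \to 0^+$ as $z^j \to -\infty$ and $\Phi_q^j \to -\infty$ as $z^j \to +\infty$, this identifies $z^\star$ as the unique global maximum. Substituting the first-order relation $\mathrm{e}^{\eta(Nz^\star - c)}\sum_{i\in\mathcal{G}} v_i/|\mathcal{G}| = v_0 M k\varpi/(k\varpi + \sigma\eta)$ back yields $\Phi_q^j(z^\star) = -\lambda^j(z^\star) v_0 \sigma\eta M/(k\varpi + \sigma\eta)$, and inserting the explicit $z^\star$ into $\lambda^j$ generates exactly the $c(1-\varpi)$, $-\frac{\varpi}{\eta}\log(\cdots)$, $\varpi \sum_i \gamma_i^{-1}\log(\cdots)$, and $\frac{k\varpi}{\sigma\eta}\log(v_0/\sum_{i\in\mathcal{G}} v_i)$ factors appearing in the formula for $C$. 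The hypothesis on $\delta_\infty$ is then used by bounding $|Nz^\star|$ directly from its explicit formula and then bounding $|\Gamma^{i,j}(z^\star)|$ and $|\omega_\ell^{-1}\Gamma^{i,j}(z^\star)|$ via the triangle inequality, ensuring $\Delta^{i,j}(z^\star,q) \in [-\delta_\infty,\delta_\infty]$ so that the unclipped fixed-point formula of Lemma 3.6 is consistent at $z^\star$.

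The main obstacle is the second step: the seemingly miraculous cancellations that collapse $\mathcal{L}^j$ to a constant and $\lambda^j$ to a pure exponential in $z^j$ depend delicately on the matched scaling between the $\omega_\ell$ appearing in the PnL and the $\omega_\ell^{-1}$ inside $\Delta^{i,j}$ for non-best agents, and on the calibration of $H_\ell$ that makes all $N$ agents contribute symmetrically to $\lambda^j$. Without these structural coincidences the scalar problem would not admit a closed form, and the verification that the substitution produces precisely the intricate constant $C$ (rather than a messier expression) is the place where the bookkeeping must be done with care.
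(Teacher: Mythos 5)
The paper does not actually give a proof of this lemma — it merely asserts that the result "follows from direct but tedious computations" — so your reconstruction is essentially \emph{the} proof, and it is correct. You correctly reduce $\Phi_q^j$ to a one-dimensional function of the scalar $z^j$ via the fixed-point formulas of Lemma \ref{Lemma unicity fixed point Hamiltonian}, identify the cancellation that makes each exponential in $\mathcal{L}^j$ collapse to $k\varpi/(k\varpi+\sigma\gamma_i)$ (so that $\mathcal{L}^j\equiv 1+\eta\sigma\sum_i(k\varpi+\sigma\gamma_i)^{-1}$ is $z$-independent), observe that $\lambda^j$ collapses to a pure exponential $A_0\mathrm{e}^{kN\varpi z^j/\sigma}$, and then solve the resulting problem $z\mapsto A_1\mathrm{e}^{\kappa_1 z}-A_2\mathrm{e}^{\kappa_2 z}$ with $0<\kappa_1<\kappa_2$ and $A_1,A_2>0$. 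The first-order condition reproduces the stated $z^\star$, and substituting the relation $A_2\mathrm{e}^{\kappa_2 z^\star}=(\kappa_1/\kappa_2)A_1\mathrm{e}^{\kappa_1 z^\star}$, with $1-\kappa_1/\kappa_2=\sigma\eta/(k\varpi+\sigma\eta)$, yields exactly the stated value and the constant $C$.

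Your flag on the "main obstacle" is well placed and worth making explicit: the collapse of $\lambda^j$ to a pure exponential in $z^j$ requires that the non-best agents' contributions $H_\ell\,\omega_\ell^{-1}\Gamma^{i,j}(z)$ reduce to $\varpi\,\Gamma^{i,j}(z)$, i.e.\ the implicit calibration $H_\ell=\varpi\omega_\ell$. The paper only says the $H_\ell$ "will be fixed later" and never records this choice, yet the stated constant $C$ (with the term $\varpi\sum_i\gamma_i^{-1}\log(1+\sigma\gamma_i/(k\varpi))$) is consistent only with that calibration; similarly, the definitions of $\mathcal{L}^j$, of the PnL, and of $h^i$ carry both the best-spread indicator and the $K_\ell$ indicator without an explicit exclusion, and the computation requires treating them as mutually exclusive, matching the exclusion written only in $\lambda^{i,j}$. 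A complete write-up should state these two conventions. On the $\delta_\infty$ condition, your reasoning is sound but slightly understates how conservative the paper's bound is: for $i\notin\mathcal{G}$ the membership $\Delta^{i,j}(z^\star,q)\in K_\ell\subset(0,\delta_\infty)$ already gives the truncation automatically, so the real constraint is only $|\Gamma^{i,j}(z^\star)|\leq\delta_\infty$ for $i\in\mathcal{G}$, which is comfortably implied by the stated $C_\infty$-bound (which inflates $|c|/N$ to $N|c|$ and $(N\eta)^{-1}$ to $N/\eta$).
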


\subsection{Proof of Lemma \ref{Lemma Cauchy Lipschitz}}
As the state variables $q^{i},i\in\{1,\dots,N\}$ live in a discrete compact set, PDE \eqref{PDE non-linear before change of variable} is in fact a system of $(2\overline{q}+1)^{N}$ ordinary differential equations. Hence, a use of Cauchy--Lipschitz theorem will provide existence and unicity. We define
\begin{align*}
\mathcal{S}:=\{x\in \mathbb{R}: l_b < x < u_b < 0,\; (l_b,u_b)\in \mathbb{R}^{\star 2}_{-}\}.
\end{align*}
Fix some subsets $I$, and $J$ of $\{1,\dots,N\}$, as well as vectors $(x_{\oplus},x_{\ominus},x)\in \mathbb{R}^{{\rm Card}(I)}\times\mathbb{R}^{{\rm Card}(J)}\times \mathcal{S}$. Then, we introduce for $q\in \mathcal{Q}^N$, the map
\begin{align*}
T_{q}(x,x_{\oplus},x_{\ominus})=-xC^{S}(q)+xC\Bigg(\bigg(\frac{x}{\sum_{j\in J}x^{i}_{\ominus}}\bigg)^{\frac{k\varpi}{\sigma\eta}}+\bigg(\frac{x}{\sum_{i\in I}x^{i}_{\oplus}}\bigg)^{\frac{k\varpi}{\sigma\eta}}\Bigg),
\end{align*}
where $C^{S}(q)$,and $C$ come from \eqref{PDE non-linear before change of variable}. We now show that this application is Lipschitz. Direct computations show that for any $(i,j)\in I\times J$
\begin{align*}
& \partial_{x}T_{q}=-C^{S}(q)+\Big(1+\frac{k\varpi}{\sigma\eta}\Big)C\bigg(\bigg(\frac{x}{\sum_{j\in J}x^{j}_{\ominus}}\bigg)^{\frac{k\varpi}{\sigma\eta}}+\bigg(\frac{x}{\sum_{i\in I}x^{i}_{\oplus}}\bigg)^{\frac{k\varpi}{\sigma\eta}}\bigg), \\
& \partial_{x^{i}_{\oplus}}T_{q}=-\frac{k\varpi}{\sigma\eta}C\bigg(\frac{x}{\sum_{i\in I}x^{i}_{\oplus}}\bigg)^{1+\frac{k\varpi}{\sigma\eta}}, \; \partial_{x^{j}_{\ominus}}T_{q}=-\frac{k\varpi}{\sigma\eta}C\bigg(\frac{x}{\sum_{j\in J}x^{j}_{\ominus}}\bigg)^{1+\frac{k\varpi}{\sigma\eta}} .
\end{align*}
By the fact that $(x,x_{\oplus},x_{\ominus})\in \mathcal{S}\times\mathcal{S}^{\#I}\times\mathcal{S}^{\#J}$, the gradient of $T_q$ is uniformly bounded (in the $\|\cdot\|_{\infty}$ sense).

\subsection{Proof of Theorem \ref{Theorem Verification}}\label{Section Verification Argument appendix}
We begin this section with a technical lemma.
\begin{lemma} \label{Lemma Verification argument}
Let $Z\in \mathcal{Z}$, and define $\xi:=Y_{T}^{0,Z,\Delta(Z,Q)}$. We define
\begin{align*}
K_{t}^{Z}:=\exp\Big(-\eta\big(c(N_{t}^{a}+N_{t}^{b})-Y_{t}^{0,Z,\Delta}\cdot 1_N \big)\Big),\; t\in [0,T].
\end{align*} 
There exists $C>0,$ and $\epsilon>0$ such that
\begin{align*}
\mathbb{E}^{\Delta(Z,Q)}\bigg[\sup_{t\in [0,T]}|K_{t}^{Z}|^{1+\epsilon}\bigg]\leq C.
\end{align*}
\end{lemma}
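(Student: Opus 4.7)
Since $c\geq 0$, $K_t^Z\leq\exp(\eta\sum_i Y_t^i)$, and applying H\"older to the product $\prod_i\exp(\eta Y_t^i)$ with $N$ equal factors (using $\sup_t\prod_i f_i\leq\prod_i\sup_t f_i$), it is enough to establish, for each $i\in\{1,\dots,N\}$ and $C:=(1+\epsilon)N\eta$, the bound $\mathbb{E}^{\Delta(Z,Q)}[\sup_{t\in[0,T]}\exp(CY_t^{i,0,Z,\Delta})]<+\infty$. My plan is to convert this running-supremum bound into the terminal bound $\mathbb{E}^{\Delta(Z,Q)}[\exp(Cp'Y_T^i)]<+\infty$ for some $p'>1$, which is granted by condition \eqref{Integrability principal} on $\xi^i=Y_T^i$ as long as $Cp'<N\eta'$. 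The conversion will follow from showing that a deterministic exponential rescaling of $U_t:=\exp(CY_t^i)$ is a non-negative submartingale, after which Doob's $L^{p'}$ maximal inequality closes the argument.

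For the submartingale step, I would apply It\^o's formula to $U_t$ using the dynamics \eqref{Process Yt} with $y_0=0$ and $\hat\delta=\Delta$, obtaining $\mathrm{d}U_t=U_{t^-}(\mathrm{d}\mathfrak{M}_t+g_t\,\mathrm{d}t)$ for some local $\mathbb{P}^{\Delta(Z,Q)}$-martingale $\mathfrak{M}$ and
\[
g_t=C\Big(\tfrac{\gamma_i\sigma^2}{2}(Z_t^{S,i}+Q_t^i)^2-H^i\big(\Delta_t^{-i},Z_t^i,Q_t\big)\Big)+\tfrac{1}{2}C^2\sigma^2(Z_t^{S,i})^2+\sum_{\ell,k}(e^{CZ_t^{i,\ell,k}}-1)\lambda^{\ell,k}(\Delta_t^{:,k},Q_t).
\]
The crucial observation is that $g_t$ admits a deterministic lower bound $-L$: the two quadratic terms in $Z^{S,i}$ are non-negative; the identity obtained in the proof of Lemma \ref{Lemma unicity fixed point Hamiltonian} (the explicit value of $h^i$ at its maximiser) bounds $H^i$ uniformly above by a constant depending only on $A,\sigma,\gamma_i,k,\varpi$; and $\lambda^{\ell,k}\leq A$ together with $e^{CZ^{i,\ell,k}}-1\geq -1$ bounds the jump-compensator sum from below by $-2NA$. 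Setting $V_t:=e^{Lt}U_t$ then yields $\mathrm{d}V_t=V_{t^-}(\mathrm{d}\mathfrak{M}_t+(g_t+L)\,\mathrm{d}t)$ with non-negative drift, so $V$ is a non-negative local submartingale.

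To conclude, I would choose $p'>1$ and $\epsilon>0$ small enough that $Cp'=(1+\epsilon)Np'\eta<N\eta'$, which is possible precisely because $\eta'>\eta$ strictly. Then $\mathbb{E}^{\Delta(Z,Q)}[V_T^{p'}]=e^{p'LT}\mathbb{E}^{\Delta(Z,Q)}[\exp(Cp'Y_T^i)]<+\infty$ by H\"older and \eqref{Integrability principal}. A standard localisation of $\mathfrak{M}$ upgrades $V^{p'}$ to a true non-negative submartingale, and Doob's $L^{p'}$ maximal inequality delivers $\mathbb{E}^{\Delta(Z,Q)}[\sup_t V_t^{p'}]<+\infty$; Jensen then gives $\mathbb{E}^{\Delta(Z,Q)}[\sup_t U_t]\leq\mathbb{E}^{\Delta(Z,Q)}[\sup_t V_t^{p'}]^{1/p'}<+\infty$, completing the chain of reductions. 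The main subtlety will be the simultaneous calibration of the exponents $1+\epsilon$ and $p'$ inside the slack $\eta'/\eta>1$: if one had $\eta'=\eta$ the Doob step would break, which is exactly why \eqref{Integrability principal} is imposed with a strict inequality.
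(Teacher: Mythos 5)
Your opening reductions are sound: since $c>0$ and $N^a+N^b\geq 0$ one has $K^Z_t\leq \exp(\eta\,Y^{0,Z,\Delta}_t\cdot 1_N)$, the product/H\"older step is standard, and your exponent calibration $Cp'<N\eta'$ correctly exploits the strict slack $\eta'>\eta$ in \eqref{Integrability principal}. The drift computation for $U_t=\exp(CY^{i,0,Z,\Delta}_t)$ and the uniform lower bound $g_t\geq -L$ are also fine (one does not even need the fixed--point identity of Lemma~\ref{Lemma unicity fixed point Hamiltonian}: the crude bound $0\leq H^i\leq 2NA/\gamma_i$ already suffices). The genuine gap is the sentence ``a standard localisation of $\mathfrak{M}$ upgrades $V^{p'}$ to a true non-negative submartingale''. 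A non-negative local submartingale with $L^{p'}$-integrable terminal value need \emph{not} be a true submartingale --- a strict non-negative local martingale (e.g.\ the reciprocal Bessel-$3$ process) is a local submartingale whose expectation strictly decreases, so it cannot satisfy Doob's inequality. Localising $V$ and appealing to Doob on $V^{\tau_n}$ requires the uniform bound $\sup_n\mathbb{E}[V_{T\wedge\tau_n}^{p'}]<\infty$, and establishing that bound is equivalent to the submartingale property you are trying to prove; nothing in \eqref{Integrability market maker}, \eqref{Integrability principal}, or the extra sup-condition in the definition of $\mathcal{Z}$ (which controls $\exp(-\gamma'Y^i_t)$, the \emph{opposite} exponential) closes this loop.

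The paper avoids this obstruction by not working with $Y^i$ directly. It introduces $\overline{Y}^{i,Z,\Delta}_t:=Y^{i,0,Z,\Delta}_t+\mathrm{PnL}^i_t$, for which the drift term $-H^i$ is exactly the one that makes $\exp(-\gamma_i\overline{Y}^i)$ a true $\mathbb{P}^{\hat\delta}$-martingale (this was already established via the class-$(D)$ argument in Step~2 of the proof of Theorem~\ref{Theorem market maker}). One then factorises $K^Z_t=\exp(\eta\sum_i\overline Y^i_t)\cdot(\text{bounded-moment factor})$ and flips the sign of the exponent by conditional Jensen with the convex, decreasing map $\phi_i(x)=x^{-\eta/\gamma_i}$: since $\mathbb{E}_t[\exp(-\gamma_i\overline Y^i_T)]=\exp(-\gamma_i\overline Y^i_t)$, one gets $\mathbb{E}_t[\exp(\eta\overline Y^i_T)]\geq\exp(\eta\overline Y^i_t)$ directly, i.e.\ a \emph{true} submartingale with integrable terminal value, and Doob applies without any localisation. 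The correct fix for your argument is therefore to replace $Y^i$ by $\overline Y^i$, invoke the true-martingale property of $\exp(-\gamma_i\overline Y^i)$ rather than re-deriving a drift bound via It\^o, and control the remaining $\mathrm{PnL}$ and $c(N^a+N^b)$ factors separately via H\"older and the boundedness of the intensities.
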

\begin{proof}
We define for all $i\in\{1,\dots,N\}$ the processes
\begin{align*}
    \overline{Y}_t^{i,Z,\Delta} & := Y_t^{i,0,Z,\Delta} + \int_{0}^{t}\delta_{u}^{a,i}\mathbf{1}_{\{\delta_{u}^{i,a}=\underline \delta_{u}^{a}\}}\mathrm{d}N_{u}^{a}+\delta_{u}^{i,b}\mathbf{1}_{\{\delta_{u}^{i,b}=\underline \delta_{u}^{b}\}}\mathrm{d}N_{u}^{b}+Q_{u}^{i}\mathrm{d}S_{u}\\
    & \quad +\sum_{\ell=1}^K\omega_{\ell}(\delta_{u}^{i,a}\mathbf{1}_{\{\delta_{u}^{i,a}\in K_{\ell}\}}\mathrm{d}N_{u}^{a}+\delta_{u}^{i,b}\mathbf{1}_{\{\delta_{u}^{i,b}\in K_{\ell}\}}\mathrm{d}N_{u}^{b}),
\end{align*}
and we rewrite
\begin{align*}
    K_t^Z = \exp\bigg(\eta\bigg(\sum_{i=1}^N \overline{Y}_t^{i,Z,\Delta} \bigg)\bigg)\exp\Bigg(&-\eta\bigg(c(N_t^a+N_t^b) + \sum_{i=1}^N\bigg( \int_{0}^{t}\delta_{u}^{a,i}\mathbf{1}_{\{\delta_{u}^{i,a} =\underline \delta_{u}^{a}\}}\mathrm{d}N_{u}^{a}+\delta_{u}^{i,b}\mathbf{1}_{\{\delta_{u}^{i,b}=\underline \delta_{u}^{b}\}}\mathrm{d}N_{u}^{b}+Q_{u}^{i}\mathrm{d}S_{u}\\
    & \quad +\sum_{\ell=1}^K\omega_{\ell}(\delta_{u}^{i,a}\mathbf{1}_{\{\delta_{u}^{i,a}\in K_{\ell}\}}\mathrm{d}N_{u}^{a}+\delta_{u}^{i,b}\mathbf{1}_{\{\delta_{u}^{i,b}\in K_{\ell}\}}\mathrm{d}N_{u}^{b})\bigg)\bigg)\Bigg).
\end{align*}
Using Step 2 of the proof of Theorem \ref{Theorem market maker}, we know what for all $i\in \{1,\dots,N\}$, $\mathrm{e}^{-\gamma_i \overline{Y}_t^{i,Z,\Delta}}$ is a $\mathbb{P}^{\hat{\delta}}$-martingale. Using Jensen's inequality with the convex function (on $\mathbb{R}_+^{\star}$) $\phi_i(x)=x^{-\frac{\eta}{\gamma_i}}$, and condition \eqref{Integrability market maker} with $\xi^i = \overline{Y}_T^{i,Z,\Delta}$,  we have
\begin{align*}
    \mathbb{E}_t^{\hat{\delta}}\Big[\mathrm{e}^{\eta \overline{Y}_T^{i,Z,\Delta}}\Big] =  \mathbb{E}_t^{\hat{\delta}}\Big[\phi_i\Big(\mathrm{e}^{-\gamma_i \overline{Y}_T^{i,Z,\Delta}}\Big)\Big]\geq \phi_i\Big(\mathrm{e}^{-\gamma_i \overline{Y}_t^{i,Z,\Delta}}\Big) = \mathrm{e}^{\eta \overline{Y}_t^{i,Z,\Delta}}. 
\end{align*}
Similar computations show that $\Big(\mathrm{e}^{\eta \overline{Y}_t^{i,Z,\Delta}}\Big)_{t\in[0,T]}$ is a positive $\mathbb{P}^{\hat{\delta}}$-submartingale. By using Jensen's inequality, we have 
\begin{align*}
    \mathbb{E}^{\Delta(Z,Q)}\bigg[\sup_{t\in [0,T]}\exp\bigg(\eta^{'}\bigg(\sum_{i=1}^N \overline{Y}_t^{i,Z,\Delta} \bigg)\bigg)\bigg] \leq \frac{1}{N}\sum_{i=1}^N \mathbb{E}^{\Delta(Z,Q)}\bigg[\sup_{t\in [0,T]}\exp\bigg(N\eta^{'} \overline{Y}_t^{i,Z,\Delta}\bigg)\bigg],
\end{align*}
where $\eta' = \eta(1+\epsilon)$. Using Doob's inequality, there exists positive constants $k^i_1$ independent from $t\in[0,T]$ such that for all $i\in\{1,\dots,N\}$,
\begin{align*}
    \mathbb{E}^{\Delta(Z,Q)}\bigg[\sup_{t\in [0,T]}\exp\bigg(N\eta^{'} \overline{Y}_t^{i,Z,\Delta}\bigg)\bigg]\leq k_1^i \mathbb{E}^{\Delta(Z,Q)}\bigg[\exp\bigg(N\eta^{'} \overline{Y}_T^{i,Z,\Delta}\bigg)\bigg]. 
\end{align*}
Using Holder's inequality, and noting that 
\begin{align*}
    &\exp\bigg(N\eta^{'}\bigg(\int_{0}^{t}\delta_{u}^{a,i}\mathbf{1}_{\{\delta_{u}^{i,a}=\underline \delta_{u}^{a}\}}\mathrm{d}N_{u}^{a}+\delta_{u}^{i,b}\mathbf{1}_{\{\delta_{u}^{i,b}=\underline \delta_{u}^{b}\}}\mathrm{d}N_{u}^{b}+Q_{u}^{i}\mathrm{d}S_{u}\\
    & \quad +\sum_{\ell=1}^K\omega_{\ell}(\delta_{u}^{i,a}\mathbf{1}_{\{\delta_{u}^{i,a}\in K_{\ell}\}}\mathrm{d}N_{u}^{a}+\delta_{u}^{i,b}\mathbf{1}_{\{\delta_{u}^{i,b}\in K_{\ell}\}}\mathrm{d}N_{u}^{b})\bigg)\bigg),
\end{align*}
has moments of all orders by boundedness of the intensities of $N^a,N^b$, there exists $\eta^{''}>\eta^{'}$ and a constant $k_2^i>0$ such that
\begin{align*}
  k_1^i \mathbb{E}^{\Delta(Z,Q)}\bigg[\exp\bigg(N\eta^{'} \overline{Y}_T^{i,Z,\Delta}\bigg)\bigg] \leq k_2^i   \mathbb{E}^{\Delta(Z,Q)}\bigg[\exp\bigg(N\eta^{''} \overline{Y}_T^{i,Z,\Delta}\bigg)\bigg] <+\infty,
\end{align*}
where we used Condition \eqref{Integrability principal} with $\xi^i = \overline{Y}_T^{i,Z,\Delta}$. The conclusion follows using again the boundedness of the intensities of point processes in the definition of $K^Z$. 
\end{proof}
To prove Theorem \ref{Theorem Verification}, we verify that the function $v$ introduced in \eqref{HJB equation} coincides at $\big(0,Q_{0}\big)$ with the value function of the reduced exchange problem with maximum achieved at the optimum $z^{*}(t,Q_t)$ in \eqref{Lemma maximizers HJB}. 

\medskip
The function $v$ is negative bounded Moreover, since $\delta_{\infty}\geq \Delta_{\infty}$, it follows that $v$ is a solution of \eqref{HJB equation}. A direct application of Itô’s formula coupled with substitution of \eqref{HJB equation} leads to
\begin{align}\label{Intermediary step uniform integrability verification}
\frac{\mathrm{d}(v(t,Q_{t})K_{t}^{Z})}{K_{t^{-}}^{Z}}\!=\!\big(h_{t}^{Z}\!-\!\mathcal{H}_{t}\big)\mathrm{d}t \!+\! \eta \! \sum_{i=1}^{N}\!\bigg(\!v\big(t,Q_{t}\big)Z_{t}^{S,i}\mathrm{d}S_{t}\!+\!\!\!\sum_{j\in\{a,b\}}^{}\!\!\!\Big(\!v\big(t,Q_{t^{-}}^{i}\!+\!\!\!\phi(j)\big)\mathrm{e}^{\eta(NZ^{j}-c)}\!\!-v\big(\!t,Q_{t^{-}}\!\big)\Big)\mathrm{d}\tilde{N}_{t}^{\Delta,i,j}\!\bigg),
\end{align}
where
\begin{align*}
\mathcal{H}_{t} := \mathcal{H}\Big(Q_t,\mathcal{V}^{+}(t,q),\mathcal{V}^{-}(t,q),v(t,Q_t)\Big)=\sup_{Z\in\mathcal{Z}}h_{t}^{Z}.
\end{align*}
By the fact that $v$ is bounded and $\big(K_{t}^{Z}\big)_{t\in[0,T]}$ is of class $(D)$, the process $\Big(v(t,Q_{t})K_{t}^{Z}\Big)_{t\in[0,T]}$ is a $\mathbb{P}^{\Delta(Z,Q)}-$supermartingale of class $(D)$ and the local martingale term in \eqref{Intermediary step uniform integrability verification} is a true martingale. Hence
\begin{align*}
&v(0,Q_{0})=\mathbb{E}^{\Delta(Z,Q)}\bigg[v(T,Q_{T})K_{T}^{Z}+\int_{0}^{T}K_{t}^{Z}(\mathcal{H}_{t}-h_{t})\mathrm{d}t\bigg]\geq \mathbb{E}^{\Delta(Z,Q)}\big[v(T,Q_{T})K_{T}^{Z}\big]=\mathbb{E}^{\Delta(Z,Q)}\big[-K_{T}^{Z}\big],
\end{align*}
by the boundary condition $v(T,\cdot)=-1$. By arbitrariness of $Z\in \mathcal{Z}$, this provides the inequality 
\begin{align*}
v(0,Q_{0})\geq \sup_{Z\in\mathcal{Z}}\mathbb{E}^{\Delta(Z,Q)}[-K_{T}^{Z}]=v_{0}^{E}. 
\end{align*}
On the other hand, consider the maximiser $Z^{\star}\big(t,Q_{t^{-}}\big)$ in \eqref{Lemma maximizers HJB}. As $(Z^{\star})_{t\in [0,T]}$ is a bounded process, integrability conditions \eqref{Integrability market maker} and \eqref{Integrability principal} are satisfied. Hence, $Z^{\star}\in \mathcal{Z}$. By definition,
\begin{align*}
h^{Z^{\star}}-\mathcal{H}=0,
\end{align*}
thus leading to
\begin{align*}
v(0,Q_{0})=\mathbb{E}^{\Delta(Z^{\star},Q)}\big[-K_{T}^{Z^{\star}}\big].
\end{align*}
Hence, $v(0,Q_{0})=v_{0}^{E}$, with optimal control $Z^{\star}$.

\subsection{First--best exchange problem} \label{Section principal First Best}

In this section, we consider the case of the first best problem. In this particular setting, the principal can control both the spreads quoted by the agents and the contracts given to them. Hence, the exchange manages all the control processes. The goal of this section is to show that the first best problem differs from the second best that we solved throughout this paper. We first introduce the Lagrange multipliers $\lambda:=(\lambda_{i})_{i=1,\dots,N}$ associated to the participation constraints of the agents. For any finite dimensional vector space $E$, with given norm $\|\cdot\|_{E}$, we also introduce the so--called Morse--Transue space on a given probability space $(\Omega,\mathcal{F},\mathbb{P})$, defined by
\begin{align*}
M^{\phi}(E):=\Big\{\xi:\Omega\longrightarrow E \text{ measurable, } \mathbb{E}[\phi(a\xi)]<+\infty, \text{ for any } a\geq 0 \Big\},
\end{align*}
where $\phi:E\longrightarrow \mathbb{R}$ is the Young function, namely $\phi(x)=\exp(\|x\|_E)-1$. Then, if $M^\phi(E)$ is endowed with the norm $\|\xi\|_\phi := \inf \{k>0,\mathbb{E}[\phi(\xi/k)]\leq 1\}$ it is a (non-reflexive) Banach space. 

\medskip
The principal's problem can be reformulated as
\begin{align}\label{Problem principal first best}
V_{0}^{\rm FB}:=\inf_{\lambda>0}\sup_{(\xi,\delta) \in \mathcal{C}\times \mathcal{A}}\mathbb{E}^{\delta}\bigg[-\mathrm{e}^{-\eta(c(N_{T}^{a}+N_{T}^{b})-\xi\cdot\mathrm{1}_N)}-\sum_{i=1}^{N}\lambda_{i}\mathrm{e}^{-\gamma_{i}(\xi^{i}+X_{T}^{i}+Q_{T}^{i}S_{T})}-\lambda_{i}R_{i}\bigg].
\end{align}
If $\lambda$ and $\delta$ are fixed, we start with the maximisation with respect to $\xi$. We introduce the following map $\Lambda^{\delta}:M^{\phi}(\mathbb{R}^N)\rightarrow\mathbb{R}$ defined as
\begin{align*}
\Lambda^{\delta}(\xi):=\mathbb{E}^{\delta}\bigg[-\mathrm{e}^{-\eta(c(N_{T}^{a}+N_{T}^{b})-\xi\cdot\mathrm{1}_N)}-\sum_{i=1}^{N}\lambda_{i}\mathrm{e}^{-\gamma_{i}(\xi^{i}+X_{T}^{i}+Q_{T}^{i}S_{T})}-\lambda_{i}R_{i}\bigg]
\end{align*}
The Lagrange multipliers $\lambda_i$ being strictly positive, using the boundedness of the control process $\delta\in\mathcal{A}$, the map $\Lambda^{\delta}$ is continuous, strictly concave, and G\^ateaux differentiable with, for $h\in M^{\phi}(\mathbb{R}^N)$
\begin{align*}
D\Lambda^{\delta}(\xi)[h]=\mathbb{E}^{\delta}\bigg[-\eta h\cdot \mathrm{1}_{N} \mathrm{e}^{-\eta(c(N_{T}^{a}+N_{T}^{b})-\xi\cdot\mathrm{1}_N)}+\sum_{i=1}^{N}\gamma_i\lambda_{i}h^i\mathrm{e}^{-\gamma_{i}(\xi^{i}+X_{T}^{i}+Q_{T}^{i}S_{T})}\bigg]
\end{align*}

For any $\delta\in\mathcal{A}$, we define
\begin{align*}
&\xi^{\star,i}(\delta):=\frac{\eta}{\gamma_{i}}\big(c(N_{T}^{a}+N_{T}^{b})-\xi^{\star}\cdot \mathbf{1}_N\big)+\frac{1}{\gamma_{i}}\log\bigg(\frac{\lambda_{i}\gamma_{i}}{\eta}\bigg)-(X_{T}^{i}+Q_{T}^{i}S_{T}), \\
&\xi^{\star}(\delta)\cdot\mathbf{1}_N=\frac{1}{1+\eta\Gamma}\bigg(\eta c\Gamma (N_{T}^{a}+N_{T}^{b})+\sum_{i=1}^{N}\frac{1}{\gamma_{i}}\log\bigg(\frac{\lambda_{i}\gamma_{i}}{\eta}\bigg)-(X_{T}^{i}+Q_{T}^{i}S_{T})\bigg)
\end{align*}
with $\Gamma:=\sum_{i=1}^{N}\frac{1}{\gamma_{i}}$. For any $h\in M^{\phi}(\mathbb{R}^N)$, first order condition gives $D\Lambda^{\delta}(\xi)[h]=0$. Computations show that $(\xi^{\star,i})_{i=1,\dots,N}$ achieve the maximum of $\Lambda^{\delta}(\xi)$, hence is optimal for \eqref{Problem principal first best}. Then, substituting these expressions in the main problem gives
\begin{align}
V_{0}^{\rm FB}=(1+\eta\Gamma)\inf_{\lambda_>0}\prod_{i=1}^{N}\bigg(\frac{\lambda_{i}\gamma_{i}}{\eta}\bigg)^{\eta\gamma_{i}^{-1}(1+\eta\Gamma)^{-1}}\widetilde{V}_{0}- \sum_{i=1}^{N}\lambda_{i}R_{i} , 
\end{align}
where 
\begin{align}
\widetilde{V}_{0}:=\sup_{\delta \in \mathcal{A}}\mathbb{E}^{\delta}\bigg[-\exp\bigg(-\frac{\eta}{1+\eta\Gamma}\bigg(\sum_{i=1}^{N}(X_{T}^{i}+Q_{T}^{i}S_{T})+c(N_{T}^{a}+N_{T}^{b})\bigg)\bigg)\bigg].
\end{align}
This is a stochastic control problem, see \cite[Section A.7]{el2018optimal} for details, whose HJB equation is given by
\begin{align}\label{HJB equation First Best}
\begin{cases}
     \displaystyle   \partial_{t}v(t,q) +v(t,q)\frac{1}{2}\sigma^{2}\tilde{\Gamma}^{2}\|q\|^2+ \mathcal{H}^{\text{FB}}\big(q,\mathcal{V}^{+}(t,q),\mathcal{V}^{-}(t,q),v(t,q)\big)=0, \; (t,q)\in[0,T)\times\mathcal{Q}^{N},\\
   \displaystyle     v(T,q)=-1,\; q\in\Qc^N,
    \end{cases}
\end{align}
where $\tilde{\Gamma}:=\frac{\eta}{1+\eta\Gamma}$, and
\begin{align*}
\mathcal{H}^{\text{FB}}\big(q,\mathcal{V}^{+}(t,q),\mathcal{V}^{-}(t,q),v(t,q)\big):=\mathcal{H}^{\text{FB},b}\big(q,\mathcal{V}^{+}(t,q),v(t,q)\big)+\mathcal{H}^{\text{FB},a}\big(q,\mathcal{V}^{-}(t,q),v(t,q)\big),
\end{align*}
with, for any $(p,v,j)\in\R^N\times\R\times\{a,b\}$
\begin{align*}
& \mathcal{H}^{\text{FB},j}\big(q,p,v\big):= \sup_{\delta^{j} \in \mathcal{A}}\sum_{i=1}^{N}\lambda^{i,j}(\delta^{j},q)\bigg(\exp\bigg(-\tilde{\Gamma}\bigg(\sum_{i=1}^{N}\delta^{i,j}\mathbf{1}_{\{\delta^{i,j}=\underline{\delta}^{j}\}}+\sum_{\ell=1}^{K}\omega_{\ell}\delta^{i,j}\mathbf{1}_{\{\delta^{i,j}\in K_{\ell}\}}\bigg)\bigg)p^i-v\bigg).
\end{align*}
We are in a framework similar to \cite{el2018optimal,gueant2013dealing}. First order condition gives for $j\in\{a,b\}$
\begin{align*}
& \sum_{i=1}^{N}\delta^{\star,j,i}\mathbf{1}_{\{\delta^{\star,j,i}=\underline{\delta}^{\star,j,i}\}}\!\!+\!\!\sum_{\ell=1}^{K}\omega_{\ell}\delta^{\star,j,i}\mathbf{1}_{\{\delta^{\star,j,i}\in K_{\ell}\}}\!=\!\mathcal{P}^{j}(t,q)\!:=\!\frac{1}{\tilde{\Gamma}}\Bigg(\!\log\!\bigg(\!1\!+\!\frac{\tilde{\Gamma}\sigma}{k\underline{\omega}}\bigg)\!\!+\!\!\log\bigg(\frac{\sum_{i\in \mathcal{G}}v(t,q\ominus_i \phi(j))}{v(t,q)}\bigg) \Bigg).
\end{align*}
Such conditions are satisfied with the following optimal bid--ask policy, for $j\in\{a,b\}$
\begin{align*}
\delta^{\star,j,i}(t,q) := 
    \begin{cases}
      \displaystyle  (-\delta_{\infty})\vee \frac{1}{\omega_{\ell}}\mathcal{P}^{j}(t,q)\wedge \delta_{\infty},\; {\rm if}\; \frac{1}{\omega_{\ell}}\mathcal{P}^{a}(t,q)\in K_{\ell}, \mbox{ for } \ell\in\{1,\dots,K\}, \\
  \displaystyle      (-\delta_{\infty})\vee \mathcal{P}^{j}(t,q) \wedge \delta_{\infty},\; \mbox{otherwise}.
    \end{cases}
\end{align*}
Finally, computations show that the Hessian associated to the supremum in $\delta^{\star,j}$ is symmetric definite negative,
hence, $\delta^\star$ is a local maximum.
\begin{theorem}
There exists a unique negative bounded solution to the {\rm PDE} 
\begin{align*}
\begin{cases}
        \displaystyle \partial_{t}v(t,q)\!+\!v(t,q)\bigg(\frac{\sigma^{2}}{2}\tilde{\Gamma}^{2}\|q\|^{2}\!-\! \tilde{C}^{\text{FB}}\bigg(\bigg(\frac{v(t,q)}{\sum_{i\in \mathcal{G}}v(t,q\oplus_i 1)}\bigg)^{\frac{k\varpi}{\sigma\tilde{\Gamma}}}\!\!+\!\bigg(\frac{v(t,q)}{\sum_{i\in \mathcal{G}}v(t,q\ominus_i 1)}\bigg)^{\frac{k\varpi}{\sigma\tilde{\Gamma}}}\bigg)\bigg)\!\!=0,\; ,  \\
        v(T,q)=-1,\; 
\end{cases}
\end{align*}
with $(t,q)\!\in \![0,T)\!\times\! \mathcal{Q}^N$, $\tilde{C}^{\text{FB}}:=A\text{ exp}\big(-\frac{k\varpi}{\sigma\tilde{\Gamma}}\log\big(1+\frac{\tilde{\Gamma}\sigma}{k\varpi}\big) \big)\frac{\tilde{\Gamma}\sigma}{k\varpi+\tilde{\Gamma}\sigma}$. Moreover, this solution coincides with the value function of the exchange for the problem \eqref{Problem principal first best}.
\end{theorem}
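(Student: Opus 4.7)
The plan is to follow the same two-pronged approach used for the second-best problem (Lemma \ref{Lemma Cauchy Lipschitz} combined with Theorem \ref{Theorem Verification}): first build a classical solution to the ODE system that the PDE reduces to, then run a verification argument identifying it with $\widetilde V_0$.

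For existence and uniqueness, I exploit the fact that $q$ lives in the finite set $\mathcal{Q}^N$, so the PDE is really a coupled system of $(2\overline q+1)^N$ ODEs with terminal condition $v(T,q)=-1$. On the open set $\mathcal{S}:=\{x\in\R: l_b<x<u_b<0\}$, the right-hand side is locally Lipschitz: the quadratic-in-$q$ term $\tfrac{1}{2}\sigma^2\tilde\Gamma^2\|q\|^2 v(t,q)$ is linear in $v$ with coefficient uniformly bounded by $\tfrac{1}{2}\sigma^2\tilde\Gamma^2 N\overline q^{\,2}$, and the nonlinear terms $v(t,q)\big(v(t,q)/\sum_{i\in\Gc}v(t,q\oplus_i 1)\big)^{k\varpi/(\sigma\tilde\Gamma)}$ (and its $\ominus$ analogue) are $C^1$ with partial derivatives uniformly bounded on $\mathcal{S}^{\mathrm{Card}(\Gc)+1}$ by exactly the same computation as in the proof of Lemma \ref{Lemma Cauchy Lipschitz}. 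Cauchy--Lipschitz then yields a unique local solution running backward from $T$, and a comparison with the linear ODE $\dot w = w(\tfrac{1}{2}\sigma^2\tilde\Gamma^2 N\overline q^{\,2}+2\tilde C^{\mathrm{FB}})$ (using that the ratios inside the brackets are bounded away from $0$ and $\infty$ as long as the trajectory remains in $\mathcal{S}$) produces exponential-in-$(T-t)$ two-sided bounds that keep the solution in a compact subset of $\mathcal{S}$ for all $t\in[0,T]$. This lets the local solution be extended to all of $[0,T]$ while remaining negative and bounded.

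For the verification step, I define
\begin{align*}
K_t^{\delta}:=\exp\!\bigg(\!-\tilde\Gamma\Big(\textstyle\sum_{i=1}^N(X_t^i+Q_t^i S_t)+c(N_t^a+N_t^b)\Big)\!\bigg),\;\; t\in[0,T],
\end{align*}
and apply It\^o's formula to $v(t,Q_t)K_t^{\delta}$. After substitution of the PDE, the dynamics take the form
\begin{align*}
\frac{\mathrm d\big(v(t,Q_t)K_t^\delta\big)}{K_{t^-}^\delta}=\big(h_t^\delta-\mathcal H^{\mathrm{FB}}_t\big)\mathrm dt+\mathrm dM_t^\delta,
\end{align*}
where $M^\delta$ is a local martingale (sum of the $\mathrm d S$ and compensated $\mathrm d\tilde N^{\delta,i,j}$ terms) and $h_t^\delta\le \mathcal H^{\mathrm{FB}}_t$ pointwise, with equality at the maximiser $\delta^\star$ described above. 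Boundedness of $v$, boundedness of $\delta\in\mathcal A$, and the fact that the intensities $\lambda^{i,j}$ are bounded give the uniform integrability needed to promote $M^\delta$ to a true martingale and make $v(\cdot,Q_\cdot)K^\delta_\cdot$ of class $(D)$, exactly as in Lemma \ref{Lemma Verification argument}. Taking expectations yields
\begin{align*}
v(0,Q_0)=\mathbb E^{\delta}\big[v(T,Q_T)K_T^\delta\big]+\mathbb E^{\delta}\bigg[\int_0^T K_{t^-}^\delta\big(\mathcal H^{\mathrm{FB}}_t-h_t^\delta\big)\mathrm dt\bigg]\ge \mathbb E^\delta\big[-K_T^\delta\big],
\end{align*}
by the boundary condition $v(T,\cdot)=-1$; taking the supremum over $\delta\in\mathcal A$ gives $v(0,Q_0)\ge \widetilde V_0$. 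Plugging in $\delta^\star$ makes the integral vanish and delivers equality, so $v(0,Q_0)=\widetilde V_0$ with optimal control $\delta^\star$.

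The main obstacle will be the a priori bounds keeping the solution in $\mathcal{S}$: because the nonlinearity is singular where $v=0$ and the ratios $v/\sum v_{\oplus}$ become ill-behaved if $v$ changes sign, one must verify that the backward flow starting from $v(T,\cdot)=-1$ preserves both negativity and a uniform lower bound strictly greater than $l_b$. This relies on the sign structure of the right-hand side (the nonlinear term carries an overall factor $v$ and the quadratic part also carries $v$, so $v<0$ is propagated) together with the linear comparison sketched above; the analogous admissibility restriction on $\delta_\infty$ needed for $\delta^\star\in\mathcal A$ is handled exactly as in Lemma \ref{Lemma exchange Hamiltonian}, under the assumption that $\delta_\infty$ is sufficiently large.
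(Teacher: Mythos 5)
Your proof takes exactly the route the paper indicates: the paper omits the argument here, simply citing Lemma \ref{Lemma Cauchy Lipschitz} for the ODE-system existence/uniqueness and Theorem \ref{Theorem Verification} for the verification, and your proposal is a correct adaptation of both steps to the first-best control variable $\delta$, the process $K^\delta$, and the exponent $k\varpi/(\sigma\tilde\Gamma)$. Your explicit a priori-bound sketch via linear comparison usefully fills in a detail the cited lemma leaves implicit, but the overall structure is the same.
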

The proof is omitted as it relies on the same basis arguments as Lemma \ref{Lemma Cauchy Lipschitz} and Theorem \ref{Theorem Verification}.

\medskip
We now prove that the PDE satisfied by the value function of the First Best problem is different from the one verified in the second best case \eqref{PDE non-linear before change of variable}. Taking the special case $\gamma_{i}:=\gamma$, i.e the case of market makers with same risk aversion, for $i\in\{1,\dots,N\}$, the PDE boils boils down to
\begin{align*}
\begin{cases}
        \displaystyle \partial_{t}v(t,q)+v(t,q)C^{\text{FB}}(q)-v(t,q)\tilde{C}^{\text{FB}}\sum_{j\in\{a,b\}}\bigg(\frac{v(t,q)}{\sum_{i=1}^{N}\mathbf{1}_{\{\phi(j)q^{i}>-\overline{q}\}}v(t,q^{i}-\phi(j))}\bigg)^{\frac{k\varpi}{\sigma\eta}}=0,\; ,  \\
        v(T,q)=-1,\; ,
\end{cases}
\end{align*}
where $(t,q)\!\in \![0,T)\!\times\! \mathcal{Q}^N$ and
\begin{align*}
C^{\text{FB}}(q):=\frac{1}{2}\sigma^{2}\tilde{\Gamma}^{2}\|q\|^2.
\end{align*}

By noting that for all $q\in \mathcal{Q}^{N},$  $C^{\text{FB}}(q)\neq C^{S}(q)$ and $C \neq  \tilde{C}^{\text{FB}}$, we see that the value function of the exchange in the first best case does not coincide with the value function in the second best model.

\end{appendix}
\bibliography{bibliographyDylan-1.bib}

\end{document}